\let\footnote=\endnote
\definecolor{fxtarget}{rgb}{0.8000,0.0000,0.0000}
\numberwithin{equation}{section}
\theoremstyle{definition}
\newtheorem{assumption}{Assumption}[section]
\newtheorem{lemma}[assumption]{Lemma}
\newtheorem{corollary}[assumption]{Corollary}
\newtheorem{proposition}[assumption]{Proposition}
\newcommand{\E}[1]{\mathbb{E}\left[ #1\right]}
\newcommand{\tin}{t \in [0, T]}
\newcommand{\Q}{\mathbb{Q}}
\newcommand{\Ft}{\mathcal{F}_t}
\DeclareMathOperator*{\argmax}{argmax}
\newcommand{\rBrackets}[1]{\left( #1 \right)}
\title{Risk sharing in equity-linked insurance products: Stackelberg equilibrium between an insurer and a reinsurer}
\author[a,b]{Yevhen Havrylenko}
\author[c]{Maria Hinken}
\author[a]{Rudi Zagst}
\affil[a]{Technical University of Munich, TUM School of Computation, Information and Technology, Department of Mathematics, Chair of Mathematical Finance, Parkring 11, 87548, Garching, Germany}
\affil[b]{University of Copenhagen, Denmark, Department of Mathematical Sciences, Section for Insurance and Economics, Universitetsparken 5, 2100, Copenahgen, Denmark}
\affil[c]{Ulm University, Germany; Institute of Insurance Science}
\date{}                     
\begin{document}
\maketitle

\begin{abstract}
We study the optimal investment-reinsurance problem in the context of equity-linked insurance products. Such products often have a capital guarantee, which can motivate insurers to purchase reinsurance. Since a reinsurance contract implies an interaction between the insurer and the reinsurer, we model the optimization problem as a Stackelberg game. The reinsurer is the leader in the game and maximizes its expected utility by selecting its optimal investment strategy and a safety loading in the reinsurance contract it offers to the insurer. The reinsurer can assess how the insurer will rationally react on each action of the reinsurer. The insurance company is the follower and maximizes its expected utility by choosing its investment strategy and the amount of reinsurance the company purchases at the price offered by the reinsurer. In this game, we derive the Stackelberg equilibrium for general utility functions. For power utility functions, we calculate the equilibrium explicitly and find that the reinsurer selects the largest reinsurance premium such that the insurer may still buy the maximal amount of reinsurance. Since in the equilibrium the insurer is indifferent in the amount of reinsurance, in practice, the reinsurer should consider charging a smaller reinsurance premium than the equilibrium one. Therefore, we propose several criteria for choosing such a discount rate and investigate its wealth-equivalent impact on the expected utility of each party.
\end{abstract}

\textbf{Keywords:} risk sharing, portfolio optimization, Stackelberg equilibrium, insurance, reinsurance 

\textbf{JEL classification:} G11, G22, C72

\textbf{MSC classification:} 91G10, 91B30, 91A10

\section{Introduction}

\hspace{0.7cm}\textbf{Motivation.} This paper is driven by a practical problem that appears in life insurance, namely the optimal choice of investment strategies for equity-linked insurance products and the corresponding reinsurance strategies. Reinsurance in such products can help insurance companies to meet their obligations towards clients, i.e., pay back a guaranteed capital amount to the policyholder (buyer of an equity-linked insurance product) at the end of the investment horizon. For example, in the previous decade, the insurance company ERGO started offering a pension product \enquote{ERGO Rente Garantie} with a capital guarantee of 80\% or 100\% of clients contributions.  According to the product description\footnote{See \url{https://www.yumpu.com/de/document/view/22247401/expertenwissen-zur-ergo-rente-garantie}}, clients' contributions are invested in the money market and a target volatility fund (TVF) and are reinsured by Munich Re\footnote{See \url{https://www.focus.de/finanzen/steuern/ergo-ergo-rente-garantie_id_3550999.html}}. In that product, the investment strategy of the insurer is fully transparent to the reinsurer, as the TVF is constructed in full agreement with a reinsurer, i.e., the insurer does not have much investment freedom. Note that there are other ways of ensuring that capital guarantees are reached, e.g., a constant-proportion portfolio insurance (CPPI) strategy \textcolor{black}{or a put-option replicating strategy of the investment portfolio}. However, those ways are highly challenged in practice. A CPPI strategy does not leave much freedom for insurers to generate decent surplus for their policyholders and shareholders when interest rates are low. A put-option replicating strategy of the insurer's individual portfolio might be too expensive due to transaction costs. Furthermore, insurance companies, especially smaller ones, may not have a trading department necessary for dynamic hedging. If, on the other hand, an insurance company buys reinsurance from a reinsurance company, it can lower its transaction costs, does not need a trading department for hedging its position and can get rid of at least some of the risk in its books.

Next we describe in more detail the practical problem that motivates our research. An insurance company (insurer for short) sells an equity-linked product with a capital guarantee to a customer. The insurer is willing to run its own individual investment strategy in some sub-universe of a global liquid market. It does not want to disclose its investment strategy to a third party like other insurers or reinsurers. Focusing on its own investment strategy, the insurer can buy (partial) reinsurance on a simple investment strategy in the global liquid market or even another liquid sub-universe of the global market. To get a reasonable reinsurance contract, the investment strategy to be reinsured should be transparent, easy to understand as well as implement and sufficiently correlated with the individual investment strategy of the insurer. A typical candidate is a constant-mix strategy. The reinsurance contract (put option) usually has a long maturity, is not continuously traded in the market and, thus, has to be bought over-the-counter, i.e., from a reinsurance company. Due to a long maturity of the put option modelling the reinsurance contract and the impossibility to trade it on exchanges, the reinsurance company can charge an additional price margin above the expected discounted loss of the contract.

\textcolor{black}{Note that in practice it may be beneficial for the insurer to disclose its investment strategy to the reinsurer. In that case, the parties can agree on a reinsurable portfolio that has a higher correlation with the insurer's portfolio and/or has lower costs for hedging. As a result, the reinsurer may offer a lower price for reinsurance in practice. However, even if the insurance company discloses fully its investment strategy, the company  may want to deviate from the agreed reinsurance strategy in the future, e.g., once a financial-market shock happens and the company needs to adjust its strategy. Additionally, the individual investment strategy might not be standardized or explainable by simple rules and might also have discretionary components, so that the reinsurer may not be willing to reinsure it.}


As a reinsurance contract is an agreement between a primary insurance company and a reinsurance company,  it implies interaction/negotiation between involved parties. In the simplest case, the insurance company chooses the amount of reinsurance, which we call the reinsurance strategy of the insurance company. The reinsurance company selects the price of the reinsurance. If the reinsurer takes into account only its own preferences when creating a reinsurance contract, the reinsurance premium might be too high for the insurer to accept the contract and, therefore, the insurer may not be willing to buy as much reinsurance as the reinsurer expects. On the other hand, if the reinsurance premium is too low, the insurer chooses to buy reinsurance for more claims, but the net profit of this deal may be suboptimal for the reinsurer due to the low price. Hence, the outcome for the reinsurer would be worse than expected. The described interaction between the parties motivates the usage of game-theoretical concepts for modelling.


Since there are significantly more insurance companies than reinsurance companies worldwide (e.g., see \citeA{albrecher2017reinsurance}), the reinsurance company has a rather dominant position in the  negotiation process when setting up a reinsurance contract (e.g., see \citeA{Chen2018}, \citeA{bai2019hybrid}). In addition, the reinsurance company is often larger than the primary insurance company and acts internationally, whereas the insurer often acts on a national level (e.g., see \citeA{albrecher2017reinsurance}). Therefore, the reinsurer is likely to have more investment opportunities and has the ability to assess how the insurer reacts to various specifications of a reinsurance contract, in particular, its price. Due to this difference in negotiating power and information asymmetry, the reinsurance contract can be considered as a hierarchical game between both parties.

The above-mentioned situation suits well to the concept of Stackelberg games. These games  have a hierarchical structure where the leader  "dominates" the follower, i.e., the leader moves first and selects his/her strategy knowing the future optimal response of the follower, whereas the follower moves afterwards and chooses its strategy depending on the choice of the leader.

According to the above, we study a Stackelberg game between an insurer and a reinsurer in the context of an equity-linked insurance product with a capital guarantee.
\vskip 1\baselineskip
\textbf{Modeling aspects and research questions.} In our model, we consider one reinsurer, one insurer and one representative client. The representative client pays an initial contribution to the insurance company that invests the money -- on behalf of the client --  in financial markets.  The insurer can buy reinsurance from the reinsurance company only at the beginning of the investment period\footnote{In reality, the reinsurance contract can be adjusted at regular intervals, e.g., annually. To solve the corresponding problem with several discrete adjustments during the investment period, our model can be applied sequentially.}. In the reinsurance contract, the insurer chooses the amount of risk transferred to the reinsurer, which we call the reinsurance strategy, and the reinsurer selects the reinsurance price. We model reinsurance as a put option, which is written on the potential losses related to some benchmark investment portfolio. This benchmark portfolio is highly correlated with but not necessarily equal to the individual investment portfolio of the insurer. This models the situations when the reinsurance company is not willing to reinsure the insurer's individual investment strategy, e.g., due to its high riskiness, nontransparency, because the insurer does not want to disclose its strategy, or because the reinsurer only wants to sell reinsurance based on a standard market index.
The aim of each party is to maximize its expected utility of the total terminal wealth, where the total terminal wealth consists of the terminal portfolio value including the reinsurance payoff. However, the reinsurer (leader) moves first and chooses its strategy first knowing the optimal response of the insurer (follower). Afterwards, the insurer (follower) selects its strategy knowing the optimal strategy of the reinsurer (leader). The solution to a Stackelberg game is called the Stackelberg equilibrium.

Using the described setting, we answer the following research questions:
\begin{enumerate}
    \item How can we analytically derive the Stackelberg equilibrium?
    \item What happens with the parties' expected utilities in case one of the parties  deviates from the Stackelberg equilibrium, e.g., the insurer does not buy reinsurance or the reinsurer sets a reinsurance contract price lower than the equilibrium one?
    \item What is the impact of model parameters on the Stackelberg equilibrium?
\end{enumerate}

\textbf{Contribution.} We contribute to the existing literature as described next.$\;$\\
First, we formulate and analyze a novel Stackelberg game between a reinsurer and an insurer, which is more realistic than Stackelberg games previously studied in the literature. In contrast to them, in our model the reinsurance is written on potential losses within an insurance product rather than the whole surplus process of an insurer, and is purchased at the beginning of the investment period, not dynamically traded at future points. Additionally, we allow that the investment portfolio in the insurance product and the underlying portfolio in the reinsurance contract are not identical. Second, we draw insightful economic conclusions from the equilibrium of the considered game. In particular, we find that in the Stackelberg equilibrium the reinsurer selects the largest reinsurance premium such that the insurer may buy the maximal amount of reinsurance.
The insurer's expected utility becomes constant with respect to the amount of purchased reinsurance, which is why, in practice, the reinsurer should charge a lower (discounted) safety loading of the reinsurance premium than the equilibrium one in order to secure a deal with the maximal amount of reinsurance. In this regard, we show that the reinsurer's expected utility in case of a discounted safety loading is higher than the expected utility corresponding to the case when no reinsurance is sold, which can happen for the equilibrium safety loading without a discount.
Thus, we provide three different ways how the reinsurer can determine a reasonable discount and show that for the discounted safety loading the optimally acting insurer may significantly reduce its product costs without decreasing its expected utility.
For example, in the base parametrization of our model, if the reinsurer charges only $86.73\%$ of its equilibrium safety loading, then the insurer needs $16$ bp\footnote{$1$ bp = 1 basis point = $0.0001$} less initial capital if it switches from an \enquote{old} product with optimal investment strategy without reinsurance to a \enquote{new} product with reinsurance. The cost benefits can be substantially higher depending on the investment strategy in the former insurance product. 

\textit{Portfolio optimization literature.} Our contribution to this stream of literature lies in solving a novel portfolio optimization problem with non-standard features by combining existing advanced techniques. To the best of our knowledge, this problem has not been considered earlier. First, we solve a portfolio optimization problem with a fixed position in a put option using the idea of linking option and stock market in \citeA{korn1999optimal}\footnote{We use the idea from Remark 5.2 of \citeA{korn1999optimal} and prove a similar statement as Theorem 5.1 in \citeA{korn1999optimal}}. Second, we solve a portfolio optimization problem with an investment strategy constraint and a fixed-term investment in a put option by combining the concept of auxiliary markets from \citeA{cvitanic1992convex} and the generalized martingale approach from \citeA{desmettre2016optimal}.

\vskip 1\baselineskip
\textbf{Literature overview.} {\color{black} Next we provide an overview of the relevant literature and organize it in three groups. The first stream of literature is related to risk-sharing between insurance and reinsurance companies in the context of pension and equity-linked products. The second group of sources is devoted to Stackelberg games between an insurer and a reinsurer. The third stream of literature is related to advanced portfolio optimization techniques, which we combine to solve the novel Stackelberg game considered in this chapter. }

\textit{Insurance focus.} The first paper that considered the usage of reinsurance in portfolio selection in a life insurance context is \citeA{muller1985investment}. The author considers a static investment-reinsurance problem for a pension fund, where preferences are described by an exponential utility function and there are no constraints on the terminal wealth. However, the interaction between the reinsurer and the pension fund in terms of contract parameters is not taken into account.

\citeA{Escobar} revisit the usage of reinsurance for life-insurance products and solve in a Black-Scholes market a dynamic investment-reinsurance problem for an insurer with no-short-selling and Value-at-Risk constraints. The researchers assume that reinsurance is dynamically traded and fairly priced\footnote{The price of reinsurance is equal to the fair price of the corresponding put option, i.e., the reinsurer does not charge anything extra, e.g., a safety loading}. In our paper, we consider a similar but more realistic setting, namely the reinsurance is traded only at the initial time and its price can be higher than the fair price of the corresponding put option. Moreover, we model the interaction between the parties as a Stackelberg game. To make it analytically tractable, we drop the Value-at-Risk constraint, which was present in \citeA{Escobar} though.

{In the general scope of risk-management techniques for equity-linked insurance products and variable annuities, two types of hedging strategies are distinguished: static and dynamic. Our proposed approach based on a reinsurance contract on a suitable benchmark portfolio is an example of a static hedging strategy to protect the insurer's investment portfolio against the market risk. Another common approach to control the riskiness of a portfolio is a terminal-wealth constraint based on risk measures like the Value-at-Risk or the conditional tail expectation. However, for the reasons mentioned in the previous paragraph, the model in this paper does not incorporate such constraints. For a survey of risk-management techniques for equity-linked insurance, see \citeA{feng2022variable} and references therein.}

\textit{Stackelberg games in insurance.} There are different ways to model the interaction between a reinsurer and an insurer in a reinsurance contract. For example, \citeA{li2016optimal} consider a general insurance group which consists of an insurance and a reinsurance company. The authors solve a reinsurance-investment problem for the general insurance group which incorporates the decision of the insurance company and the reinsurance company. Another way to model a reinsurance contract is via the principal-agent problem, e.g., \citeA{gu2020optimal} solve a reinsurance-investment problem where the principal assumes a worst-case scenario. To the best of our knowledge, the first paper solving a reinsurance problem with investment returns in the context of a Stackelberg game is \citeA{Chen2018}. The researchers consider a reinsurer who offers reinsurance on the whole claim process of an insurer. The authors assume that the reinsurance contract can be adjusted dynamically by the reinsurer and the insurer. Further papers on Stackelberg games in the context of insurance-reinsurance interaction are \citeA{Chen2019}, \citeA{bai2019hybrid}, \citeA{chen2020continuous}, \citeA{yuan2021robust}, \citeA{yang2021robust}, and \citeA{bai2021stochastic}. In all of them, the reinsurance is offered on the whole portfolio of aggregated insurance obligations and the reinsurance can be dynamically adjusted over the investment horizon. \textcolor{black}{\citeA{Chen2018}, \citeA{Chen2019} and \citeA{chen2020continuous} assume that the reinsurer and the insurer invest only their surplus process risk-free, whereas in \citeA{bai2019hybrid} the parties invest in a more complex financial market that includes one risky asset.} \citeA{gavagan2021optimal} study the Stackelberg game between a reinsurer and an insurer with model uncertainty, assuming that the reinsurance is not dynamically adjusted. However, the authors do not consider investment strategies in the game. Another application of Stackelberg games in insurance, for example, is the model of \citeA{asmussen2019stackelberg}. The authors used the framework of a Stackelberg game to model the competition of two insurance companies for their clients. {\color{black} Last but not least, we would like to mention \citeA{Boonen2022}, where the authors derive the optimal static reinsurance premium and indemnity function in a Stackelberg game between an insurer and a reinsurer whose preferences are described by monotone risk measures. Using a model without investment opportunities, the researchers show that the Stackelberg equilibrium makes the insurer indifferent with the status quo of not buying reinsurance.} We are not aware of any publications focusing on an optimal investment-reinsurance problem in the context of a Stackelberg game where reinsurance is static and offered within an equity-linked insurance product. Our research fills this gap.

\color{black}
\textit{Portfolio optimization focus.} The seminal papers \citeA{merton1969lifetime} and \citeA{merton1975optimum} introduced and solved the classic continuous-time portfolio optimization problem. There are no portfolio constraints, no fixed-term investments and no investments in options. Since then, various extensions have been considered, which were tackled either by the Hamilton-Jacobi-Bellman (HJB) approach or the Martingale approach.

In our paper, the Stackelberg game consists of two utility maximization subproblems. Each of them has features that do not allow direct usage of standard solution methods.\\

When solving the insurer's optimization problem, we are dealing with two peculiarities:

$\quad$ 1. A fixed long position in the put option, which is also a control variable.

$\quad$ 2. The put option is not spanned by the assets that the insurer trades.\\

To solve such an optimization problem, we combine the ideas of \citeA{desmettre2016optimal} and \citeA{cvitanic1992convex}. \citeA{cvitanic1992convex} solve a constrained optimization problem, where the investment strategy constraint is given by a convex set. The authors construct a family of unconstrained optimization problems in different auxiliary markets and find the unconstrained optimization problem that models the required portfolio constraint in the original market. In our case, an auxiliary market is the global market with changed drift coefficients, where the reinsurance contract is spanned. This way we make the fixed investment spanned and manage to derive closed-form solutions via \citeA{desmettre2016optimal}. In \citeA{desmettre2016optimal}, the authors develop a generalized martingale approach to solve an optimization problem where the investor can additionally invest in a fixed-term security at the beginning of the investment period. The first step of the approach is to find the optimal investment in the risky asset for a fixed position in the fixed-term security, which requires the inversion of conditional random utility functions.\footnote{ If the fixed-term investment is not spanned, the inversion of conditional random utility functions rarely leads to closed-form solutions. However, we circumvent this problem by treating the insurer's problem as a portfolio optimization problem with an investment strategy constraint and transforming it to the unconstrained one in an auxiliary market, where the put option is spanned.} The second step is to find the optimal investment in the fixed-term security given the investment in the risky assets by maximizing the corresponding value function with respect to a fixed-term position.\\

When solving the optimization problem of the reinsurer, we face the following two challenges:

$\quad$ 1. A fixed short position in the put option, which is predetermined by the insurer.

$\quad$ 2. A safety loading as an additional control variable of the reinsurer.\\

In contrast to the insurer's case, the reinsurer as a larger player trades in the global market and thus can hedge its fixed position in the reinsurance contract. For this reason, we use the concept of replicating  strategies from \citeA{korn1999optimal}. Their idea is to link the stock and the option market via replicating strategies of options. Once the link is established, the researchers combine it with the classic martingale approach to solve the optimization problem that contains investment in stocks and/or options. To overcome the  second challenge of the reinsurer's problem, we solve it in two steps, where the second one is the maximization of the reinsurer's value function with respect to the safety loading. The difference to the setting of \citeA{desmettre2016optimal} is that the safety loading is the control variable, not the position in the fixed-term asset

\vskip 1\baselineskip
\textbf{Paper structure.} 
In Section \ref{sec_problem_setting} we describe formally the financial market as well as the Stackelberg game between the reinsurer and the insurer. The optimal solution to the Stackelberg game is derived in Section \ref{sec_solution_to_SG}. It is divided into the subsections devoted to the optimization problem of the insurer and the optimization problem of the reinsurer, as the corresponding solution approaches differ. In Section \ref{sec_example} we consider the special case of a power utility function and a fixed upper bound on the reinsurance strategy of the insurer. In Section \ref{sec_numerical_studies} we conduct numerical studies in the context of the German market. First, we analyze the Stackelberg equilibrium numerically and investigate its sensitivity with respect to the parties' risk aversion coefficients and with respect to the model parameters influencing the fair price of the put option. Second, we study how deviations from the Stackelberg equilibrium influence each party's expected utilities. Section \ref{sec_conclusions} concludes the paper. In Appendix \ref{app:proofs_main} we provide the proofs of the main results from Section \ref{sec_solution_to_SG}. Appendix \ref{app:proofs_example} contains the proofs of results from Section \ref{sec_example}. In supplementary materials, which are available online, we provide auxiliary lemmata, additional plots, and discussions on other utility functions as well as the potential inclusion of actuarial risks in our model.

\section{Problem setting}\label{sec_problem_setting}

\hspace{0.5cm} In our model, the relationship between the reinsurer and the insurer is modeled by a Stackelberg game, where the reinsurer is the leader and the insurer the follower of the game. Before we introduce our model, we give a general definition of a Stackelberg game. For more details on Stackelberg games, see \citeA{osborne1994course}, \citeA{fudenberg1991game}, or \citeA{bressan2011noncooperative}. For applications of Stackelberg games in the context of reinsurance see \citeA{Chen2018}, \citeA{Chen2019}, \citeA{chen2020continuous}, \citeA{asmussen2019stackelberg}, or \citeA{bai2019hybrid}. 
    
    A Stackelberg game is a game with two players, called a leader and a follower, and a hierarchical structure. The players have objective functions $J_i(a_L,a_F):A_L\times A_F\to\mathbb{R}$ for $i\in\{L,F\}$, respectively, where $a_L$ is an action of the leader chosen from the set of admissible actions $A_L$ and $a_F$ an action of the follower chosen from the set of admissible actions $A_F$. The aim of players is to maximize their objective function with respect to their action. The hierarchy in the game should be understood as follows. The leader of a Stackelberg game dominates the follower and, therefore, the leader chooses its action first, knowing the response of the follower, and afterwards the follower acts depending on the action chosen by the leader. 
   
    The Stackelberg game between the leader and the follower is defined as 
    \begin{align*}
        \max_{a_L\in A_L}&J_L(a_L,a_F^\ast)\\
        \text{s.t. }&a_F^\ast\in\underset{a_F\in A_F}{\text{arg max }}J_F(a_L,a_F).
    \end{align*}
    
    The procedure of solving a Stackelberg game is backward induction. This means that the follower first solves its optimization problem for all possible actions of the leader. Second, the leader solves its optimization problem knowing the optimal response actions of the follower.
    
    A solution $(a_L^\ast,a_F^\ast)$ to a Stackelberg game is called a Stackelberg equilibrium if it satisfies two conditions:
    \begin{equation}\label{eq:SE_condition_1}\tag{$SEC_1$}
        a_F^\ast\in\underset{a_F\in A_F}{\text{arg max }}J_F(a_L^\ast,a_F)
    \end{equation}
    \begin{equation}\label{eq:SE_condition_2}\tag{$SEC_2$}
        J_L(a_L,a_F)\leq J_L(a_L^\ast,a_F^\ast)\,\,\forall (a_L,a_F)\in A_L\times A_F \text{ s.t. } a_F\in\underset{\tilde{a}_F\in A_F}{\text{arg max }}J_F(a_L,\tilde{a}_F).
    \end{equation}
    
    Condition \eqref{eq:SE_condition_1} ensures that $a_F^\ast$ is an optimal solution to the optimization problem of the follower. The second condition \eqref{eq:SE_condition_2} states that whenever there are several optimal responses of the follower to an action of the leader, the follower chooses the best response for the leader (cf. \citeA{bressan2011noncooperative}). Hence, this is also called an optimistic bilevel optimization problem (cf. \citeA{zemkoho2016solving}, \citeA{wiesemann2013pessimistic}, \citeA{liu2018pessimistic}, \citeA{mallozzi1995weak}).
    
    Now, we introduce our model, namely specify the financial market, sets of admissible actions of each player and players' objective functions. Let $T>0$ be a finite time horizon and $(W(t))_{\tin}=((W_1(t),W_2(t))^\top)_{\tin}$ a two-dimensional Brownian motion on a filtered probability space $(\Omega,\mathcal{F},(\Ft)_{\tin},\Q)$. \textcolor{black}{We denote by $(x_1,x_2)^\top$ a two-dimensional column vector.} The basic financial market, which we denote by $\mathcal{M}$, consists of one risk-free asset $S_0$ and two risky assets $S_1$, $S_2$. The prices of these assets are given by
    \begin{align*}
        dS_0(t)=&S_0(t)rdt,\;S_0(0)=1,\\
        dS_1(t)=&S_1(t)(\mu_1dt+\sigma_1dW_1(t)),\;S_1(0)=s_1>0,\\
	    dS_2(t)=&S_2(t)(\mu_2dt+\sigma_2(\rho dW_1(t)+\sqrt{1-\rho^2}dW_2(t))),\;S_2(0)=s_2>0,
    \end{align*}
    with constants $r\in \mathbb{R},\;\mu_1>r,\;\mu_2>r,\;\sigma_1>0,\;\sigma_2>0$ and $\rho \in [-1,1]$. \textcolor{black}{Note that $r$ can be negative.} $S_0$ can be interpreted as a bank account, $S_1$ as a risky fund in the individual investment strategy of the insurer. This fund cannot be reinsured. $S_2$ is a risky fund, which is reinsurable. We also use the following notation:
    \begin{align*}
        \begin{matrix}
        \mu:=\begin{pmatrix}
		\mu_1\\
		\mu_2
    	\end{pmatrix} & \sigma:=\begin{pmatrix}
    		\sigma_1 & 0\\
    		\sigma_2\rho & \sigma_2\sqrt{1-\rho^2}
    	\end{pmatrix} & \mathbbm{1}:=(1,1)^\top;\\
        \gamma:=\sigma^{-1}(\mu-r\mathbbm{1}) & 	\widetilde{Z}(t):=e^{-(r+\frac{1}{2}\lVert\gamma\rVert^2)t-\gamma^\top W(t)},
        \end{matrix}&
    \end{align*}
    \textcolor{black}{where $\lVert\cdot\rVert$ denotes the Euclidean norm on $\mathbb{R}^2$.}
    
    We consider one representative client, one insurer and one reinsurer. The insurer sells an equity-linked product to the representative client. Therefore, the representative client pays an initial contribution $v_I>0$ to the insurer and expects to receive a capital guarantee $G_T>0$ at the end of the investment period. The insurer invests the initial contribution in assets $S_0$ and $S_1$. To increase the chances of delivering a capital guarantee $G_T$ to the client, the insurer can buy reinsurance from the reinsurer to cover the downside risk of its portfolio at the beginning of the investment period. At the end of the investment period, the insurer receives the payment of the reinsurance contract. The reinsurance is modeled by a put option with the strike price being equal to the capital guarantee $G_T$ and the underlying asset being a benchmark portfolio.
    
    We assume that the benchmark portfolio is a constant-mix portfolio with respect to $S_0$ and $S_2$. This is motivated by the equity-linked insurance product \enquote{ERGO Rente Garantie,} where the reinsurable fund is a target volatility fund. In the Black-Scholes market, constant-mix strategies and target-volatility strategies are equivalent. An advantage of a constant-mix portfolio is that the reinsurer can assess the potential loss quite well and can calculate the price easily. We denote the benchmark constant-mix strategy by $\pi_B(t)=(0,\pi^{CM})^\top$ for all $\tin$. Hence, the constant-mix strategy consists of investing $1-\pi^{CM}$ of the wealth in $S_0$ and $\pi^{CM}$ in $S_2$. So the price dynamics of the benchmark constant-mix portfolio\footnote{ Deviating from a constant-mix investment strategy for the benchmark portfolio makes the overall problem much more challenging, as then we may not be able to analytically derive the fair price of the option, which should be determined at the product inception date $t = 0$. Estimating the option price via Monte-Carlo simulations would be possible.} is given by
    \begin{align*}
    	dV^{v_I,\pi_B}(t)=&(1-\pi^{CM})V^{v_I,\pi_B}(t)\frac{dS_0(t)}{S_0(t)}+\pi^{CM}V^{v_I,\pi_B}(t)\frac{dS_2(t)}{S_2(t)}\\
    	=&V^{v_I,\pi_B}(t)((r+\pi^{CM}(\mu_2-r))dt+\pi^{CM}\sigma_2(\rho dW_1(t)+\sqrt{1-\rho^2}dW_2(t))),\\
    	V^{v_I,\pi_B}(0)=&v_I.
    \end{align*}
    
    Note that the benchmark portfolio is in general not equal to the individual investment strategy but they should have high correlation so that the reinsurance indeed provides downside protection to the insurer. {\color{black} In practice, the parties choose a suitable benchmark portfolio while negotiating the reinsurance terms. For the insurer, the benchmark portfolio must resemble its individual portfolio, whereas for the reinsurer the benchmark portfolio must be transparent, simple to hedge and convenient to price. For example, if the insurer knows its long-term target position in $S_1$, then $\pi^{CM}$ can be chosen to this level, given that $S_2$ has a high correlation with $S_1$ and is hedgeable for the reinsurer. Another aspect that can be considered for choosing the benchmark portfolio is how costly and complicated it is to hedge the put option for the reinsurer.}
    
    The reinsurance, which is modeled by a put option $P$ with a payoff given by
    \begin{align*}
        P(T)=(G_T-V^{v_I,\pi_B}(T))^+.
    \end{align*}
    The fair price of the put option $P$ at time $t$ in the basic financial market is calculated by
    \begin{equation*}
        P(t)=\widetilde{Z}(t)^{-1}\E{\widetilde{Z}(T)(G_T-V^{v_I,\pi_B}(T))^+\big|\mathcal{F}_t}.
    \end{equation*}
    
    We assume that the insurer can buy $\xi_I$ reinsurance contracts (put options) and call $\xi_I$ the reinsurance strategy of the insurer. Furthermore, we assume that the reinsurance contract is priced using the following principle: the price is given by $(1+\theta_R)P(0)$ at time $0$, where $\theta_R$ is the safety loading chosen by the reinsurer and $P(0)$ the fair price of the put option in the basic financial market\footnote{Note that this is almost the expected value premium principle. The expected value premium principle is given by $(1+\theta)\mathbb{E}[X]$, where $\theta$ is the safety loading and $\mathbb{E}[X]$ the expected non-discounted loss under the real-world measure. In our case, the reinsurance premium is given by the expected discounted loss of the benchmark portfolio under the risk-neutral measure. Note that the expected loss of the benchmark portfolio is not the expected loss of the insurer.}. The safety loading $\theta_R$ is assumed to satisfy $0 \leq \theta_R \leq \theta^{\max}$. Given an admissible $\theta_R$, the reinsurance amount is assumed to satisfy
    \begin{align*}
        0 \leq \xi_I \leq \xi^{\max} :=\min\bigg\{\frac{v_I}{(1+\theta_R)P(0)}, \bar{\xi}\bigg\},
    \end{align*}
    where $\bar{\xi}>0$ is a constant independent of $\theta_R$. The fraction in the definition of $\xi^{\max}$ is required to ensure that the insurer has enough initial capital to buy the desired amount of reinsurance. As for $\bar{\xi}$, it should be close to $1$ to prevent the insurer from speculating with the reinsurance by buying an excessive amount of it, which may be required by regulatory institutions in practice.
    
    We denote the insurer's relative portfolio process by $\pi_I(t)=(\pi_{I,1}(t),\pi_{I,2}(t))^\top$, $\tin$. The wealth process of the insurer is given by
    \begin{align}
    	dV_I^{v_{I,0}(\xi_I,\theta_R),\pi_I}(t)=&(1-\pi_{I,1}(t)-\pi_{I,2}(t))V_I^{v_{I,0}(\xi_I,\theta_R),\pi_I}(t)\frac{dS_0(t)}{S_0(t)}\label{IWP1}\\
    	&+\pi_{I,1}(t)V_I^{v_{I,0}(\xi_I,\theta_R),\pi_I}(t)\frac{dS_1(t)}{S_1(t)}+\pi_{I,2}(t)V_I^{v_{I,0}(\xi_I,\theta_R),\pi_I}(t)\frac{dS_2(t)}{S_2(t)},\nonumber\\
    	V_I^{v_{I,0}(\xi_I,\theta_R),\pi_I}(0)=&v_I-\xi_I(1+\theta_R)P(0)=:v_{I,0}(\xi_I,\theta_R). \nonumber
    \end{align}
    
    The total terminal wealth of the insurer is given by the insurer's terminal wealth plus the payment from the reinsurance:
    \begin{align*}
        V^{v_{I,0}(\xi_I,\theta_R),\pi_I}(T)+\xi_IP(T).
    \end{align*}
    
    Similarly to the insurer, the reinsurer's relative portfolio process is given by $\pi_R=(\pi_{R,1}(t),\pi_{R,2}(t))^\top$, $\tin$. Therefore, the reinsurer's wealth process is
    \begin{align}
    	dV_R^{v_{R,0}(\xi_I,\theta_R),\pi_R}(t)=&(1-\pi_{R,1}(t)-\pi_{R,2}(t))V_R^{v_{R,0}(\xi_I,\theta_R),\pi_R}(t)\frac{dS_0(t)}{S_0(t)}\label{RWP1}\\
    	&+\pi_{R,1}(t)V_R^{v_{R,0}(\xi_I,\theta_R),\pi_R}(t)\frac{dS_1(t)}{S_1(t)}+\pi_{R,2}(t)V_R^{v_{R,0}(\xi_I,\theta_R),\pi_R}(t)\frac{dS_2(t)}{S_2(t)},\nonumber\\
    	V_R^{v_{R,0}(\xi_I,\theta_R),\pi_R}(0)=&v_{R}+\xi_I(1+\theta_R)P(0)=:v_{R,0}(\xi_I,\theta_R), \nonumber
    \end{align}
    where $v_R>0$ is the initial wealth of the reinsurer before issuance of a reinsurance contract. The total terminal wealth of the reinsurer equals the terminal wealth less the reinsurance payoff:
    \begin{align*}
        V^{v_{R,0}(\xi_I,\theta_R),\pi_R}(T)-\xi_IP(T).
    \end{align*}
    
    We denote by $U_R$ and $U_I$ the utility functions of the reinsurer and the insurer, respectively. The utility functions are assumed to be strictly concave, continuously differentiable, strictly increasing and to fulfill the Inada conditions, i.e., for $i\in\{R,I\}$ it holds
    \begin{align*}
        U_i'(0):=\lim_{x\downarrow 0}U_i'(x)=\infty\text{ and }U_i'(\infty)=\lim_{x\uparrow \infty}U_i'(x)=0.
    \end{align*}
    
    The set of admissible strategies of the insurer is defined in the following way:
    \begin{align*}
    	\Lambda_I:=\{(\pi_I,\xi_I)|\;&\pi_I\text{ self-financing, }\pi_I(t)\in K\;\mathbb{Q}\text{-a.s. }\forall t\in[0,T],\;\xi_I\in[0,\xi^{\max}],\\
    	&V^{v_{I,0}(\xi_I,\theta_R),\pi_I}_I(t)\geq0\;\mathbb{Q}\text{-a.s. }\forall t\in[0,T]\text{ and }\\
    	&\mathbb{E}[U_I(V_I^{v_{I,0}(\xi_I,\theta_R),\pi_I}(T)+\xi_IP(T))^-]<\infty]\},
    \end{align*}
    where $K:=\mathbb{R}\times\{0\}$ defines the constraint on the insurer's investment strategy and models the insurer's preference to invest in the individual risky asset $S_1$ only, $x^-:=\max\{-x,0\}$ for $x\in\mathbb{R}$. 
    
    The set of admissible strategies of the reinsurer is defined by
    \begin{align*}
    	\Lambda_R:=\{(\pi_R,\theta_R)|\;&\pi_R\text{ self-financing, }\theta_R\in[0,\theta^{\max}],\\
    	&V_R^{v_{R,0}(\xi_I,\theta_R),\pi_R}(t)\geq0\;\mathbb{Q}\text{-a.s. }\forall t\in[0,T]\text{ and }\\
    	&\mathbb{E}[U_R(V_R^{v_{R,0}(\xi_I,\theta_R),\pi_R}(T)-\xi_IP(T))^-]<\infty\}.
    \end{align*}
    
    \textcolor{black}{The definitions of admissible strategies ensure that the considered optimization problems are well defined. Self-financing investment strategy means that no money is injected into or extracted from the portfolio during the investment period. The finiteness of the expectation ensures that the expected terminal utility exists. For more information, please, see \citeA{korn1999optimal}, \citeA{cvitanic1992convex}.}
    
    The aim of each party is to maximize its expected utility of the total terminal wealth. Hence, the Stackelberg game between the reinsurer (leader) and the insurer (follower) is given by
    \begin{align}
        \sup_{(\pi_R,\theta_R)\in\Lambda_R}&\mathbb{E}[U_R(V_R^{v_{R,0}(\xi_I^\ast(\theta_R),\theta_R),\pi_R}(T)-\xi_I^\ast(\theta_R)P(T))] \label{SG}\tag{SG}\\
		&\text{s.t. }(\xi_I^\ast(\theta_R),\pi_I^\ast(\cdot|\theta_R))\in\underset{(\pi_I,\xi_I)\in\Lambda_I}{\text{arg max }}\mathbb{E}[U_I(V_I^{v_{I,0}(\xi_I,\theta_R),\pi_I}(T)+\xi_IP(T))]. \nonumber
    \end{align}

\section{Solution to the Stackelberg game}\label{sec_solution_to_SG}

\hspace{0.5cm} In this section, we solve the Stackelberg game \eqref{SG} using the idea of backward induction. First, we solve the insurer's optimization problem
    \begin{align}
        \sup_{(\pi_I,\xi_I)\in\Lambda_I}\mathbb{E}[U_I(V_I^{v_{I,0}(\xi_I,\theta_R),\pi_I}(T)+\xi_IP(T))] \label{OPI}\tag{$\mathcal{P}_I$}
    \end{align}
    for each admissible safety loading $\theta_R\in[0,\theta^{\max}]$, i.e., we find the optimal response\\ $(\pi_I^\ast(\cdot|\theta_R),\xi_I^\ast(\theta_R))\in\Lambda_I$ of the insurer. Second, knowing the responses of the insurer, we solve the optimization problem of the reinsurer
    \begin{align}
        \sup_{(\pi_R,\theta_R)\in\Lambda_R}&\mathbb{E}[U_R(V_R^{v_{R,0}(\xi_I^\ast(\theta_R),\theta_R),\pi_R}(T)-\xi_I^\ast(\theta_R)P(T))], \label{OPR}\tag{$\mathcal{P}_R^{\pi_R,\theta_R}$}
    \end{align}
    i.e., we find the optimal strategy $(\pi_R^\ast,\theta_R^\ast)\in\Lambda_R$. The Stackelberg equilibrium in the game \eqref{SG} is then given by the solution $(\pi_R^\ast(\cdot),\theta_R^\ast,\pi_I^\ast(\cdot|\theta_R^\ast),\xi_I^\ast(\theta_R^\ast))$, as it fulfills \eqref{eq:SE_condition_1} and \eqref{eq:SE_condition_2}.

\subsection{Solution to the optimization problem of the insurer} \label{subsec_solution_to_OPI}

\hspace{0.5cm} In this subsection, we solve the optimization problem of the insurer for each admissible safety loading $\theta_R\in[0,\theta^{\max}]$. We face two challenges in solving the optimization problem \eqref{OPI}: the insurer cannot invest in $S_2$, hence, the optimization problem \eqref{OPI} is constrained and it has a fixed-term investment in the put option.
    
    The procedure of solving the optimization problem of the insurer \eqref{OPI} is as follows. First, we consider a family of auxiliary financial markets as per \citeA{cvitanic1992convex}. In each auxiliary market we can solve an unconstrained optimization problem with the generalized martingale method introduced by \citeA{desmettre2016optimal}, i.e., we find the optimal investment strategy and the optimal reinsurance strategy. Second, we find the optimal auxiliary market for which the solution to the unconstrained optimization problem coincides with the solution to the constrained optimization problem.

    \textbf{Auxiliary Market.} The financial market consisting of $S_0$, $S_1$ and $S_2$ is referred to as the basic financial market $\mathcal{M}$. Next, we introduce the concept of an auxiliary market as per \citeA{cvitanic1992convex}. Consider a convex set $K=\mathbb{R}\times\{0\}$. Its support function is given by $\delta:\mathbb{R}^2\to\mathbb{R}\cup\{+\infty\}$ with
    \begin{align*}
    	\delta(x):=-\inf_{y\in K}(x^\top y)=-\inf_{y_1\in\mathbb{R}}(x_1y_1)=\begin{cases}
    		0, & \text{if }x_1=0,\\
    		+\infty, &\text{otherwise}.
    	\end{cases}
    \end{align*}
    
    The barrier cone of $K$ is defined by
    \begin{align*}
    	\tilde{K}:=\{x\in\mathbb{R}^2|\delta(x)<+\infty\}=\{x\in\mathbb{R}^2|x_1=0\}=\{0\}\times\mathbb{R}.
    \end{align*}
    
    For $x\in\tilde{K}$ we have $\delta(x)=0$. Let the class of $\mathbb{R}^2$-valued dual processes be given by
    \begin{align*}
    	\mathcal{D}:=\bigg\{\lambda=(\lambda(t))_{t\in[0,T]}\text{ prog. measurable}\bigg|\mathbb{E}\bigg[\int_0^T\lVert\lambda(t)\rVert^2dt\bigg]<\infty,\;\mathbb{E}\bigg[\int_0^T\delta(\lambda(t))dt\bigg]<\infty\bigg\}.
    \end{align*}
    
    It holds for $\lambda\in\mathcal{D}$ that $\lambda(t)\in\tilde{K}$ $\mathbb{Q}$-a.s for all $t\in[0,T]$, i.e., $\lambda_1(t)=0$ $\mathbb{Q}$-a.s. for all $t\in[0,T]$.\\
    
    Let $\lambda\in\mathcal{D}$. In the auxiliary market $\mathcal{M}_\lambda$ the dynamics of assets is given by:
    \begin{align*}
    	dS_0^\lambda(t)=&S_0^\lambda(t)(r+\delta(\lambda(t)))dt=S_0^\lambda(t)rdt,\\
    	dS_1^\lambda(t)=&S_1^\lambda(t)[(\mu_1+\lambda_1(t)+\delta(\lambda(t)))dt+\sigma_1dW_1(t)]=S_1^\lambda(t)(\mu_1dt+\sigma_1dW_1(t)),\\
    	dS_2^\lambda(t)=&S_2^\lambda(t)[(\mu_2+\lambda_2(t)+\delta(\lambda(t)))dt+\sigma_2(\rho dW_1(t)+\sqrt{1-\rho^2}dW_2(t))]\\
    	=&S_2^\lambda(t)[(\mu_2+\lambda_2(t))dt+\sigma_2(\rho dW_1(t)+\sqrt{1-\rho^2}dW_2(t))],
    \end{align*}
    since $\delta(\lambda(t))=0$ and $\lambda_1(t)=0$.
    We denote the market price of risk and the discount factor (also called pricing kernel) in the auxiliary market by
    \begin{align*}
    	\gamma_\lambda(t):=\gamma+\sigma^{-1}\lambda(t)\,\,\text{and}\,\,
    	\widetilde{Z}_\lambda(t):=\exp\bigg(-rt-\frac{1}{2}\int_0^t\lVert\gamma_\lambda(s)\rVert^2ds-\int_0^t\gamma_\lambda(s)^\top dW(s)\bigg).
    \end{align*}
    
    To apply the generalized martingale method introduced by \citeA{desmettre2016optimal}, we only need the stochastic payoff of the fixed-term security and its price at time $0$, which is not necessarily a price based on the risk-neutral valuation. Therefore, we are only interested in $P(T)$ and $P(0)$. $P(T)$ is a random variable that is $\mathcal{F}_T$-measurable. By risk-neutral valuation, $P(T) = (G_T-V^{v_I,\pi_B}(T))^+$ and $P(0) = \mathbb{E}\left[{\widetilde{Z}(T)(G_T-V^{v_I,\pi_B}(T))^+}\right]$. Note that in general $P(0) = \mathbb{E}[\widetilde{Z}(T)P(T)] \neq \mathbb{E}[\widetilde{Z}_\lambda(T)P(T)]$.
    
    
    The wealth process of the insurer $V^{v_{I,0}(\xi_I,\theta_R),\pi_I}_\lambda$ in $\mathcal{M}_\lambda$ is given by
    \begin{align}
    	dV^{v_{I,0}(\xi_I,\theta_R),\pi_I}_\lambda(t)=&(1-\pi_{I,1}(t)-\pi_{I,2}(t))V^{v_{I,0}(\xi_I,\theta_R),\pi_I}_\lambda(t)\frac{dS_0^\lambda(t)}{S_0^\lambda(t)}\\ 
    	&+\pi_{I,1}(t)V^{v_{I,0}(\xi_I,\theta_R),\pi_I}_\lambda(t)\frac{dS_1^\lambda(t)}{S_1^\lambda(t)}+\pi_{I,2}(t)V^{v_{I,0}(\xi_I,\theta_R),\pi_I}_\lambda(t)\frac{dS_2^\lambda(t)}{S_2^\lambda(t)}\nonumber\\
    	=&(1-\pi_{I,1}(t)-\pi_{I,2}(t))V^{v_{I,0}(\xi_I,\theta_R),\pi_I}_\lambda(t)\frac{dS_0(t)}{S_0(t)}\\ 
    	&+\pi_{I,1}(t)V^{v_{I,0}(\xi_I,\theta_R),\pi_I}_\lambda(t)\frac{dS_1(t)}{S_1(t)}+\pi_{I,2}(t)V^{v_{I,0}(\xi_I,\theta_R),\pi_I}_\lambda(t)\frac{dS_2(t)}{S_2(t)}\nonumber\\
    	&+\underbrace{V^{v_{I,0}(\xi_I,\theta_R),\pi_I}_\lambda(t)\pi_I(t)^\top\lambda(t)dt}_{\text{additional term}}, \nonumber\\
    	V^{v_{I,0}(\xi_I,\theta_R),\pi_I}_\lambda(0)&=v_I-\xi_I(1+\theta_R)P(0)=v_{I,0}(\xi_I,\theta_R). \nonumber
    \end{align}
    
    The unconstrained optimization problem of the insurer in $\mathcal{M}_\lambda$ is given by
    \begin{align} \label{IOP2}
    	\sup_{(\pi_I,\xi_I)\in\Lambda_I^\lambda}\mathbb{E}[U_I(V^{v_{I,0}(\xi_I,\theta_R),\pi_I}_\lambda(T)+\xi_I P(T))], \tag{$\mathcal{P}_I^\lambda$}
    \end{align}
    where
    \begin{align*}
    	\Lambda_I^\lambda:=\{(\pi_I,\xi_I)|\;&\pi_I\text{ self-financing, }V^{v_{I,0}(\xi_I,\theta_R),\pi_I}_\lambda(t)\geq0\;\mathbb{Q}\text{-a.s. }\forall t\in[0,T],\\
    	&\xi_I\in[0,\xi^{\max}]\text{ and }\mathbb{E}[U_I(V^{v_{I,0}(\xi_I,\theta_R),\pi_I}_\lambda(T)+\xi_I P(T))^-]<\infty\}.
    \end{align*}
    
    Note that  \eqref{IOP2} does not have the investment strategy constraint as in \eqref{OPI}, but for the optimal $\lambda^{\ast}$, which has to be found, the solution to the unconstrained problem will satisfy the constraint $\pi_2 = 0$. We denote the solution to \eqref{IOP2} by 
    
    \begin{equation*}
        (\pi_\lambda^\ast,\xi_\lambda^\ast)\in\underset{(\pi_I,\xi_I)\in\Lambda_I^\lambda}{\text{arg sup }}\mathbb{E}[U_I(V^{v_{I,0}(\xi_I,\theta_R),\pi_I}_\lambda(T)+\xi_I P(T))].
    \end{equation*}

    \textbf{Random utility function.} As in \citeA{desmettre2016optimal}, we define the random utility function by
    \begin{align*}
    	\hat{U}_I(x):=U_I(x+\xi_I P(T))
    \end{align*}
    for $x\in [0,\infty)$, where $\xi_I\in[0,\xi^\text{max}]$. The utility function $\hat{U}_I$ is random, since $P(T)$ is a random variable. Hence, $\hat{U}_I:[0,+\infty)\to[U_I(\xi_I P(T)),+\infty)$ and $\hat{U}_I$ is continuously differentiable, strictly increasing and strictly concave. Therefore, it holds $\hat{U}_I':[0,+\infty)\to(0,U_I'(\xi_I P(T))]$ and 
    \begin{align*}
    	\hat{U}_I'(x)=U_I'(x+\xi_I P(T)).
    \end{align*}
    
    We denote the inverse function of $\hat{U}_I'$ by $\hat{I}_I:(0,+\infty)\to[0,+\infty)$. For $y\in(0,U_I'(\xi_I P(T))]$ it is given by $I_I(y)-\xi_I P(T)$, where $I_I$ denotes the inverse of $U_I'$. For $y>U_I'(\xi_I P(T))$ we set $\hat{I}(y):=0$. Hence, the random inverse function $\hat{I}_I$ is bijective on $(0,U_I'(\xi_I P(T))]$.

    \begin{proposition}[Optimal solution to \eqref{IOP2}] \label{SolIOP2}
    	Assume that for all $y\in(0,\infty)$
    	\begin{align*}
    		\mathbb{E}[\widetilde{Z}_\lambda(T)I_I(y\widetilde{Z}_\lambda(T))]<\infty\text{ and }\mathbb{E}[U_I(I_I(y\widetilde{Z}_\lambda(T)))]<\infty
    	\end{align*}
    	holds. Then, there exists a solution $(\pi_\lambda^\ast,\xi_\lambda^\ast)$ to \eqref{IOP2}, where
    	\begin{align*}
    		\xi_\lambda^\ast\in\underset{\xi_I\in[0,\xi^\text{max}]}{\text{arg max }}\nu(\xi_I).
    	\end{align*}
    	The function $\nu$ is given by
    	\begin{align*}
    		\nu(\xi_I):=\mathbb{E}[U_I(\max\{I_I(y^\ast(\xi_I)\widetilde{Z}_\lambda(T)),\xi_I P(T)\})],
    	\end{align*}
    	where the Lagrange multiplier $y^\ast:= y^\ast(\xi_I)$ is given by the budget constraint
    	\begin{align*}
    		\mathbb{E}[\widetilde{Z}_\lambda(T)\hat{I}_I(y^\ast\widetilde{Z}_\lambda(T))]=v_I-\xi_I(1+\theta_R)P(0).
    	\end{align*}
    	The optimal terminal wealth $V^\ast_\lambda(T):=V^{v_{I,0}^\lambda(\xi_{\lambda}^\ast,\theta_R),\pi_\lambda^\ast}_\lambda(T)$ is given by
    	\begin{align*}
    		V^\ast_\lambda(T) = \hat{I}_I(y^\ast(\xi_\lambda^\ast)\widetilde{Z}_\lambda(T))=\max\{I_I(y^\ast(\xi_\lambda^\ast)\widetilde{Z}_\lambda(T))-\xi_\lambda^\ast P(T),0\}
    	\end{align*}
    	and the optimal wealth process $V_\lambda^\ast$ is given by
    	\begin{align*}
    	    V_\lambda^\ast(t)=\widetilde{Z}_\lambda(t)^{-1}\mathbb{E}[\widetilde{Z}_\lambda(T)V_\lambda^\ast(T)|\mathcal{F}_t] \quad \text{for}\quad \tin.
    	\end{align*}\\
    	If $\hat{I}_I$ and $\frac{d\hat{I}_I(y)}{dy}$ are polynomially bounded\footnote{As per \citeA{desmettre2016optimal}, a function $h:(0,\infty)\to\mathbb{R}$ is called polynomially bounded at $0$ and $+\infty$ if there exist $c,k\in(0,\infty)$ such that for all $y\in(0,\infty)$ \begin{align*}|h(y)|\leq c\bigg(y+\frac{1}{y}\bigg)^k\end{align*}} at $0$ and $\infty$, the optimal $\pi_\lambda^\ast$ is given by
    	\begin{align}\label{OptimalPortfolioProcess_Lambda}
    	    \pi_\lambda^\ast(t)V_\lambda^\ast(t)=-(\sigma^\top)^{-1}\gamma_\lambda\widetilde{Z}_\lambda(t)^{-1}\mathbb{E}\bigg[\widetilde{Z}_\lambda(T)y^\ast(\xi_\lambda^\ast)\widetilde{Z}_\lambda(T)\frac{d\hat{I}}{dy}(y^\ast(\xi_I^\ast)\widetilde{Z}_\lambda(T))\bigg|\mathcal{F}_t\bigg]
    	\end{align}
    	$\mathbb{Q}$-a.s. for all $\tin$.
    \end{proposition}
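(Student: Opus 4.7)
The plan is to apply the two-stage generalized martingale method of \citeA{desmettre2016optimal} in the complete auxiliary market $\mathcal{M}_\lambda$. In the outer stage, I fix $\xi_I \in [0,\xi^{\max}]$ and solve a standard expected-utility maximization with the random utility $\hat{U}_I$ and initial wealth $v_{I,0}(\xi_I,\theta_R)$, obtaining a value function $\nu(\xi_I)$. In the inner stage, I maximize $\nu$ over the compact interval $[0,\xi^{\max}]$.

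For fixed $\xi_I$, note that $\mathcal{M}_\lambda$ is complete (two risky assets driven by two Brownian factors) with pricing kernel $\widetilde{Z}_\lambda$. Since $\hat{U}_I$ inherits strict concavity, strict monotonicity and differentiability from $U_I$ (almost surely), the classical martingale approach applies pointwise: the Lagrangian $\hat{U}_I(V) - y \widetilde{Z}_\lambda(T) V$ is maximized at $V = \hat{I}_I(y \widetilde{Z}_\lambda(T))$, and $y = y^\ast(\xi_I)$ is determined uniquely by the budget constraint $\mathbb{E}[\widetilde{Z}_\lambda(T) \hat{I}_I(y^\ast \widetilde{Z}_\lambda(T))] = v_{I,0}(\xi_I,\theta_R)$. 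The two integrability assumptions guarantee existence of $y^\ast$ (by monotonicity and continuity of the budget map in $y$) and finiteness of the associated expected utility. Using the explicit description $\hat{I}_I(y) = \max\{I_I(y) - \xi_I P(T), 0\}$, I obtain both the stated formula for $V_\lambda^\ast(T)$ and the representation $\nu(\xi_I) = \mathbb{E}[U_I(\max\{I_I(y^\ast(\xi_I)\widetilde{Z}_\lambda(T)), \xi_I P(T)\})]$. Attainability of the supremum over $\xi_I \in [0,\xi^{\max}]$ then follows from continuity of $\nu$ (continuity of $y^\ast(\xi_I)$ in $\xi_I$ is inherited from the implicit budget equation by strict monotonicity) and compactness of the interval. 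The optimal wealth process is then forced by replication: since $\widetilde{Z}_\lambda V_\lambda^\ast$ must be a $\mathbb{Q}$-martingale in $\mathcal{M}_\lambda$, I conclude $V_\lambda^\ast(t) = \widetilde{Z}_\lambda(t)^{-1}\mathbb{E}[\widetilde{Z}_\lambda(T) V_\lambda^\ast(T) \mid \mathcal{F}_t]$.

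The main obstacle is the closed-form expression for $\pi_\lambda^\ast$. My plan is to apply It\^o's formula to $\widetilde{Z}_\lambda(t) V_\lambda^\ast(t)$: on one hand, the self-financing wealth SDE in $\mathcal{M}_\lambda$ identifies the diffusion coefficient of $\widetilde{Z}_\lambda V_\lambda^\ast$ as $\widetilde{Z}_\lambda(t) V_\lambda^\ast(t)(\pi_\lambda^\ast(t)^\top \sigma - \gamma_\lambda(t)^\top)$; on the other hand, the conditional-expectation representation allows me to compute the diffusion coefficient by differentiating under the expectation with respect to the driving Brownian motion. Concretely, using the Markov structure of $\widetilde{Z}_\lambda$ and differentiating the map $y \mapsto \widetilde{Z}_\lambda(T)\hat{I}_I(y\widetilde{Z}_\lambda(T))$ (viewing $\widetilde{Z}_\lambda(t)$ as the state variable at time $t$), I obtain a term involving $\widetilde{Z}_\lambda(T)\, y^\ast(\xi_\lambda^\ast)\widetilde{Z}_\lambda(T)\, \frac{d\hat{I}_I}{dy}(y^\ast(\xi_\lambda^\ast)\widetilde{Z}_\lambda(T))$. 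The polynomial-boundedness assumption on $\hat{I}_I$ and its derivative is exactly what is needed to justify interchanging differentiation and expectation and to ensure the resulting process is square-integrable. Matching the two expressions for the diffusion coefficient and solving for $\pi_\lambda^\ast(t) V_\lambda^\ast(t)$ yields formula \eqref{OptimalPortfolioProcess_Lambda}, completing the proof.
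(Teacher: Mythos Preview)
Your proposal is correct and follows essentially the same approach as the paper: the paper's proof simply cites Corollary~3.4 and Theorem~4.3 in \citeA{desmettre2016optimal}, and you have spelled out precisely the two-stage generalized martingale argument contained in those results (fix $\xi_I$, solve the random-utility problem in the complete market $\mathcal{M}_\lambda$ via $\hat{I}_I$, then optimize over $\xi_I$, and finally extract $\pi_\lambda^\ast$ by matching diffusion coefficients). Your continuity-plus-compactness argument for the existence of $\xi_\lambda^\ast$ is a valid variant of the concavity-based argument in that reference.
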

    
    \begin{proof}
        See Appendix \ref{app:proofs_main}.
    \end{proof}
    
    \textcolor{black}{The assumption of the finiteness of expectations in Proposition \ref{SolIOP2} is standard (cf. \citeA{desmettre2016optimal}, \citeA{cvitanic1992convex}) and ensures that the optimal Lagrange multiplier $y^\ast(\xi_I) > 0$ (given an arbitrary but fixed $\xi_I$) in the corresponding problem exists.}
    
    Hence, by Proposition \ref{SolIOP2} we know that under some conditions there exists the optimal solution $(\pi_\lambda^\ast,\xi_\lambda^\ast)$ to the unconstrained optimization problem of the insurer \eqref{IOP2}. In the next proposition, we show how the solutions of the optimization problem of the insurer \eqref{OPI} and the unconstrained optimization problem of the insurer \eqref{IOP2} are linked.
    
    \begin{proposition}[Optimal solution to \eqref{OPI}] \label{SolOPI}
    	Suppose that there exists $\lambda^\ast\in\mathcal{D}$ such that for the optimal solution $(\pi_{\lambda^\ast}^\ast,\xi_{\lambda^\ast}^\ast)$ to $(P_I^{\lambda^\ast})$ we have $\pi_{\lambda^\ast}^\ast(t)\in K$ $\mathbb{Q}$-a.s. for all $t\in[0,T]$. Then $(\pi_I^\ast,\xi_I^\ast):=(\pi_{\lambda^\ast}^\ast,\xi_{\lambda^\ast}^\ast)$ is optimal for the constrained optimization problem of the insurer \eqref{OPI}.
    \end{proposition}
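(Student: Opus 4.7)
The plan is to follow the classical duality/auxiliary market argument of \citeA{cvitanic1992convex}: one shows that restricting the portfolio to the convex constraint set $K$ makes the wealth dynamics identical in the basic market and in any auxiliary market, so that the auxiliary problem dominates the constrained one pointwise, and equality is attained precisely when the unconstrained optimiser already lies in $K$.

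First, I would record a structural identity in the auxiliary wealth dynamics. For any $\lambda\in\mathcal{D}$ we have $\lambda_1(t)=0$ and $\delta(\lambda(t))=0$ $\mathbb{Q}$-a.s., while any $\pi_I$ with $\pi_I(t)\in K=\mathbb{R}\times\{0\}$ satisfies $\pi_{I,2}(t)=0$. Consequently $\pi_I(t)^\top\lambda(t)=\pi_{I,1}(t)\lambda_1(t)+\pi_{I,2}(t)\lambda_2(t)=0$ $\mathbb{Q}$-a.s. for all $\tin$. Plugging this into the wealth SDE of the insurer in $\mathcal{M}_\lambda$ cancels the additional drift term, giving
\[
V^{v_{I,0}(\xi_I,\theta_R),\pi_I}_\lambda(t)=V^{v_{I,0}(\xi_I,\theta_R),\pi_I}(t)\quad \mathbb{Q}\text{-a.s. for all }\tin.
\]
Since the starting capital, the terminal-wealth non-negativity condition, the self-financing property and the integrability condition on the negative part of $U_I$ then hold simultaneously in both markets, I would conclude that $\Lambda_I\subseteq\Lambda_I^\lambda$ for every $\lambda\in\mathcal{D}$, and on this subset the objective functionals in \eqref{OPI} and \eqref{IOP2} coincide.

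Second, I would use this inclusion to obtain the chain of inequalities
\[
\sup_{(\pi_I,\xi_I)\in\Lambda_I}\mathbb{E}\bigl[U_I\bigl(V_I^{v_{I,0}(\xi_I,\theta_R),\pi_I}(T)+\xi_I P(T)\bigr)\bigr]
\le \sup_{(\pi_I,\xi_I)\in\Lambda_I^{\lambda^\ast}}\mathbb{E}\bigl[U_I\bigl(V^{v_{I,0}(\xi_I,\theta_R),\pi_I}_{\lambda^\ast}(T)+\xi_I P(T)\bigr)\bigr],
\]
where the right-hand side is attained by the pair $(\pi_{\lambda^\ast}^\ast,\xi_{\lambda^\ast}^\ast)$ provided by Proposition \ref{SolIOP2}. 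Invoking the hypothesis $\pi_{\lambda^\ast}^\ast(t)\in K$ $\mathbb{Q}$-a.s., the identity of the previous paragraph gives $V^{v_{I,0}(\xi_{\lambda^\ast}^\ast,\theta_R),\pi_{\lambda^\ast}^\ast}_{\lambda^\ast}(t)=V^{v_{I,0}(\xi_{\lambda^\ast}^\ast,\theta_R),\pi_{\lambda^\ast}^\ast}(t)$, so admissibility in $\Lambda_I^{\lambda^\ast}$ transfers to admissibility in $\Lambda_I$, and the value of the left-hand side at $(\pi_{\lambda^\ast}^\ast,\xi_{\lambda^\ast}^\ast)$ equals the value of the right-hand side. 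The inequality therefore turns into equality, proving optimality of $(\pi_I^\ast,\xi_I^\ast):=(\pi_{\lambda^\ast}^\ast,\xi_{\lambda^\ast}^\ast)$ for \eqref{OPI}.

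The only delicate point, and hence the main bookkeeping obstacle, is step one: verifying that every admissibility requirement used to define $\Lambda_I$ indeed reduces to the corresponding one in $\Lambda_I^{\lambda^\ast}$ whenever $\pi_I\in K$. Once the pathwise equality of wealth processes is in hand, this is routine, but I would state it explicitly to avoid the frequent pitfall of a strategy being optimal in an auxiliary market yet missing some measurability/integrability condition imposed by the original constrained problem.
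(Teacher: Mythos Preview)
Your proposal is correct and follows essentially the same approach as the paper's proof: establish that for any $\pi_I(t)\in K$ the auxiliary and basic wealth processes coincide (since $\pi_I(t)^\top\lambda(t)=0$), deduce $\Lambda_I\subseteq\Lambda_I^{\lambda^\ast}$ with matching objective values, and then observe that the auxiliary optimiser $(\pi_{\lambda^\ast}^\ast,\xi_{\lambda^\ast}^\ast)$ already lies in $\Lambda_I$ by hypothesis, turning the inequality into an equality. The paper organises the same logic in two stages---first fixing $\xi_I$ and optimising over $\pi_I$, then optimising over $\xi_I$---but your direct treatment of the pair $(\pi_I,\xi_I)$ is equally valid and arguably cleaner.
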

    \begin{proof}
    	See Appendix \ref{app:proofs_main}.
    \end{proof}
    
    \textbf{Remark to Proposition \ref{SolOPI}.} In the case of a deterministic utility function, $\lambda^\ast$ can be found by a minimization criterion, which is given in Example 15.1 of \citeA{cvitanic1992convex}. As we have a random utility function, we cannot apply this minimization criterion, and the search for the optimal $\lambda^\ast$ is more involved. In Section \ref{sec_example}, we have a concrete power utility function for the insurer and we find the explicit form of $\lambda^\ast$ that satisfies Proposition \ref{SolOPI}.

\subsection{Solution to the optimization problem of the reinsurer}\label{subsec_solution_to_OPR}

\hspace{0.5cm} We now know the optimal responses of the insurer to each possible safety loading $\theta_R\in[0,\theta^{\max}]$. We write $(\xi_I^\ast(\theta_R),\pi_I^\ast(t|\theta_R,\xi^\ast_I(\theta_R)))$ to emphasize the dependence of the insurer's optimal strategy on $\theta_R\in[0,\theta^{\max}]$. To derive the Stackelberg equilibrium, we need to solve the optimization problem \eqref{OPR} of the reinsurer. In contrast to \eqref{OPI}, we face only one challenge in \eqref{OPR}: the reinsurer has a fixed short position in the put option. 
    
    In comparison to the insurer, the reinsurer has a larger investment universe and can hedge the put option. Thus, we adapt the approach proposed by \citeA{korn1999optimal} based on replicating strategies. In this approach, we work mainly with trading strategies instead of the relative portfolio processes. A trading strategy is an $\mathbb{R}^{3}$-valued progressively measurable stochastic process $\varphi(t) = (\varphi_0(t), \varphi_1(t), \varphi_2(t))',\,\tin,$ such that $\mathbb{Q}$-a.s.:
    \begin{equation*}
        \int \limits_{0}^{T}|\varphi_0(t)|\,ds <+\infty \quad \text{and}\quad \int \limits_{0}^{T}\varphi_i^2(t)\,ds <+\infty \,\,, i \in \{1,2\}.
    \end{equation*}
    We consider only self-financing trading strategies, which means that they satisfy:
    \begin{equation*}
       \underbrace{\varphi_0(t) S_0(t) + \varphi_1(t) S_1(t) + \varphi_2(t) S_2(t)}_{=:  V(\varphi, t)} = V(\varphi, 0) + \int \limits_{0}^{t}\varphi_0(s)\,d S_0(s) + \int \limits_{0}^{t}\varphi_1(s)\,d S_1(s) + \int \limits_{0}^{t}\varphi_2(s)\,d S_2(s),
    \end{equation*}
    where $V(\varphi, t)$ is the portfolio value and $V(\varphi, 0)$ is the initial wealth of the decision maker. We denote the trading strategy of the reinsurer by $\varphi_R$. If $\varphi_R$ is a self-financing strategy with $V_R(\varphi_R, t)  > 0$ $\mathbb{Q}$ -a.s. $\forall \tin$, it is linked to the relative portfolio process $\pi_R$ via the formula: 
    \begin{align} \label{RelationPortfolioTradingReinsurer}
    	\pi_{R,i}(t)=\frac{\varphi_{R,i}(t)\cdot S_i(t)}{V_R(\varphi_R, t)},\;\forall\tin,i\in\{0,1,2\},
    \end{align}
    with $V_R(\varphi_R, t) = V^{v_{R,0}(\xi_I^\ast(\theta_R),\theta_R),\pi_R}_R(t)$ $\mathbb{Q}$ -a.s. for all $\tin$.
    
    We solve the optimization problem of the reinsurer \eqref{OPR} as follows. First, we transform the optimization problem \eqref{OPR} with respect to (w.r.t.) the portfolio process $\pi_R$ and $\theta_R$ into an optimization problem \eqref{OPRa} w.r.t. the trading strategy $\varphi_R$ and $\theta_R$. Second, we use the idea of \citeA{desmettre2016optimal} and solve the optimization problem \eqref{OPRa} given a fixed $\theta_R\in[0,\theta^{\max}]$. For this, we transform the optimization problem \eqref{OPRa} into the optimization problem \eqref{OPRb} w.r.t. the trading strategy $\varphi_R$ given fixed $\theta_R$ and afterwards transform the optimization problem \eqref{OPRb} into the optimization problem \eqref{OPRc} w.r.t. the trading strategy $\varphi_R$ given fixed $\theta_R$ and a fixed position  $\xi(t)=-\xi_I^\ast(\theta_R)$ in the put option $P$. Then, we solve the optimization problem \eqref{OPRc} by applying the idea of \citeA{korn1999optimal} and use \eqref{RelationPortfolioTradingReinsurer} to calculate $\pi_R^\ast(\cdot|\theta_R)$ from the optimal trading strategy $\varphi_R^\ast(\cdot|\theta_R)$. The optimization problems \eqref{OPRb} and \eqref{OPRc} are stated in Appendix \ref{app:proofs_main} in the proof of Proposition \ref{ThmOpR1} to keep this subsection clearer. Lastly, we solve the optimization problem \eqref{OPR} \textcolor{black}{w.r.t.} $\theta_R$ for the given optimal portfolio process $\pi_R^\ast(\cdot|\theta_R)$.
    
    Therefore, we first transform the optimization problem \eqref{OPR} \textcolor{black}{w.r.t.} the portfolio process $\pi_R$ into the optimization problem \eqref{OPRa} \textcolor{black}{w.r.t.} the trading strategy $\varphi_R$. We use the notation $V^{v_{I,0}(\xi_I,\theta_R),\varphi_R}_R$ for the wealth process of the reinsurer \textcolor{black}{w.r.t.} the trading strategy $\varphi_R$. Hence, the optimization problem of the reinsurer \textcolor{black}{w.r.t.} the trading strategy $\varphi_R$ is given by
    \begin{align}\label{OPRa}
    	\sup_{(\varphi_R,\theta_R)\in\Lambda_R^{\varphi_R}}\mathbb{E}[U_R(V^{v_{R,0}(\xi_I^\ast(\theta_R),\theta_R),\varphi_R}_R(T)-\xi_I^\ast(\theta_R)P(T))], \tag{$\mathcal{P}_R^{\varphi_R,\theta_R}$}
    \end{align}
    where $\Lambda_R^{\varphi_R}$ is the set of all admissible trading strategies and safety loadings of the reinsurer:
    \begin{align*}
    	\Lambda_R^{\varphi_R}:=\{(\varphi_R,\theta_R)|\;&\varphi_R\text{ self-financing, }V^{v_{R,0}(\xi_I^\ast(\theta_R),\theta_R),\varphi_R}_R(t)\geq0\;\mathbb{Q}\text{-a.s. }\forall t\in[0,T],\\
    	&\theta_R\in[0,\theta^{\max}]\text{ and }\mathbb{E}[U_R(V^{v_{R,0}(\xi_I^\ast(\theta_R),\theta_R),\varphi_R}_R(T)-\xi_I^\ast(\theta_R)P(T))^-]<\infty\}.
    \end{align*}
    Let $\theta_R\in[0,\theta^{\max}]$ be arbitrary but fixed. Our aim is to solve the optimization problem 
    \begin{align}\label{OPRb}
    	\sup_{\varphi_R:(\varphi_R,\theta_R)\in\Lambda_R^{\varphi_R}}\mathbb{E}[U_R(V^{v_{R,0}(\xi_I^\ast(\theta_R),\theta_R),\varphi_R}_R(T)-\xi_I^\ast(\theta_R)P(T))], \tag{$\mathcal{P}_R^{\varphi_R|\theta_R}$}
    \end{align}
    where $\theta_R$ in \ref{OPRb} emphasizes that $\theta_R$ is not a control variable there. To find the optimal trading strategy $\varphi_R^\ast$ that solves \eqref{OPRb} we adjust Theorem 5.1 mentioned in Remark 5.2 in \citeA{korn1999optimal}.
    
    \begin{proposition}[Optimal solution to \eqref{OPRb}] \label{ThmOpR1}
        Assume that it holds for all $y\in(0,\infty)$
        \begin{align*}
            \mathbb{E}[\widetilde{Z}(T)I_R(y\widetilde{Z}(T))]<\infty,
        \end{align*}
        where $I_R$ is the inverse function of $U'_R$.
    	\begin{itemize}
    		\item[(a)] There exists an optimal trading strategy $\varphi_R^\ast$ for the optimization problem \eqref{OPRb}. The optimal total terminal wealth in the optimization problem \eqref{OPRb} is given by
    		\begin{align*}
    			V^{v_{R,0}(\xi_I^\ast(\theta_R),\theta_R),\varphi_R^\ast}_R(T)-\xi_I^\ast(\theta_R)P(T)=I_R(y^\ast_R(\theta_R)\widetilde{Z}(T)),
    		\end{align*}
    		where $y^\ast_R\equiv y^\ast_R(\theta_R)$ is the Lagrange multiplier determined by
    		\begin{align}
    			\mathbb{E}[\widetilde{Z}(T)I_R(y^\ast_R\widetilde{Z}(T))]=v_{R}+\xi_I^\ast(\theta_R)\theta_RP(0). \label{LagrangeMultiplierR}
    		\end{align}
    		\item[(b)] Let $\zeta_R^\ast$ be the optimal trading strategy of the stock optimization problem, where the reinsurer only invests in the assets $S_0$, $S_1$ and $S_2$ (i.e., no investment in options). Then, the optimal trading strategy $\varphi_R^\ast$ to the optimization problem \eqref{OPRb} is given by
    		\begin{align*}
    			\varphi_{R,0}^\ast(t)&=\frac{V^{v_{R,0}(\xi_I^\ast(\theta_R),\theta_R),\varphi_R^\ast}(t)-\sum_{i=1}^2\varphi^\ast_{R,i}(t)S_i(t)}{S_0(t)},\quad \varphi_{R,1}^\ast(t)=\zeta_{R,1}^\ast(t);\\
    			\varphi_{R,2}^\ast(t)&=\zeta_{R,2}^\ast(t) + \xi_I^\ast(\theta_R) \frac{\pi^{CM}V^{v_I,\pi_B}(t)(\Phi(d_+)-1)}{S_2(t)},
    		\end{align*}
    		where
    		\begin{align*}
		d_+:=d_+(t,V^{v_I,\pi_B}(t)):=\frac{\ln\Big(\frac{V^{v_I,\pi_B}(t)}{G_T}\Big)+\big(r+\frac{1}{2}(\sigma_2\pi^{CM})^2\big)(T-t)}{\pi^{CM}\sigma_2\sqrt{T-t}}.
	    \end{align*}
    	\end{itemize}	
    \end{proposition}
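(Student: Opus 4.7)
The plan is to follow the Korn (1999) idea of linking the option market to the stock market through replication, which is applicable here because the put $P$ is written on the benchmark constant-mix portfolio $V^{v_I,\pi_B}$, which itself is driven only by $S_0$ and $S_2$, and is therefore spanned in the complete basic market $\mathcal{M}$. Since the reinsurer can trade $S_0, S_1, S_2$, its short position in the option can be perfectly hedged, reducing the problem to a standard Merton-type terminal-wealth problem that is amenable to the classical martingale method.

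First, I would rewrite the budget constraint. For any admissible self-financing $\varphi_R$, the standard martingale property of $\widetilde{Z}\cdot V_R^{\cdot,\varphi_R}$ gives
\begin{equation*}
\mathbb{E}[\widetilde{Z}(T) V^{v_{R,0}(\xi_I^\ast(\theta_R),\theta_R),\varphi_R}_R(T)] = v_{R,0}(\xi_I^\ast(\theta_R),\theta_R) = v_R + \xi_I^\ast(\theta_R)(1+\theta_R) P(0).
\end{equation*}
Setting $X := V_R(T) - \xi_I^\ast(\theta_R) P(T)$ and using that $\mathbb{E}[\widetilde{Z}(T) P(T)] = P(0)$ by construction of $P(0)$, the constraint becomes $\mathbb{E}[\widetilde{Z}(T) X] = v_R + \xi_I^\ast(\theta_R)\theta_R P(0)$. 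Because $P(T)\ge 0$, the admissibility constraint $V_R(T)\ge 0$ translates to $X\ge -\xi_I^\ast(\theta_R) P(T)\le 0$. Thus \eqref{OPRb} becomes an unconstrained Merton-type problem for $X$, which I would solve by pointwise maximization of the Lagrangian $U_R(X) - y\widetilde{Z}(T) X$, yielding $X^\ast = I_R(y_R^\ast \widetilde{Z}(T))$ with $y_R^\ast$ determined by the rewritten budget constraint \eqref{LagrangeMultiplierR}. The Inada conditions force $X^\ast > 0$, so the constraint $X\ge -\xi_I^\ast(\theta_R)P(T)$ is automatically non-binding, and the integrability hypothesis on $\widetilde{Z}(T) I_R(y\widetilde{Z}(T))$ guarantees existence of $y_R^\ast>0$. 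This proves part (a).

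For part (b), I would decompose the optimal terminal wealth as $V_R^\ast(T) = I_R(y_R^\ast \widetilde{Z}(T)) + \xi_I^\ast(\theta_R) P(T)$ and, correspondingly, the optimal trading strategy as the sum of two self-financing strategies: one replicating $I_R(y_R^\ast \widetilde{Z}(T))$, starting from initial capital $v_R + \xi_I^\ast(\theta_R)\theta_R P(0)$, which by definition is the strategy $\zeta_R^\ast$ of the pure stock problem in $\mathcal{M}$; and one replicating $\xi_I^\ast(\theta_R) P(T)$, starting from initial capital $\xi_I^\ast(\theta_R) P(0)$. Since both strategies are self-financing, their sum is self-financing with the correct initial wealth $v_{R,0}(\xi_I^\ast(\theta_R),\theta_R)$ and delivers the desired terminal wealth; by the completeness of $\mathcal{M}$ (uniqueness of replication), this is the unique solution.

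The only non-routine step is computing the replicating portfolio of one put option in terms of shares of $S_0, S_1, S_2$. Since $V^{v_I,\pi_B}$ is a geometric Brownian motion with volatility $\pi^{CM}\sigma_2$, the standard Black–Scholes computation gives put-delta $\Phi(d_+)-1$ with respect to the benchmark portfolio, with $d_+$ as stated. Because the benchmark invests the fraction $\pi^{CM}$ of its value in $S_2$ and nothing in $S_1$, a delta-hedge amounts to holding $(\Phi(d_+)-1)\pi^{CM} V^{v_I,\pi_B}(t)/S_2(t)$ shares of $S_2$, zero shares of $S_1$, and the balancing position in $S_0$. Multiplying by $\xi_I^\ast(\theta_R)$ and adding componentwise to $\zeta_R^\ast$ produces the claimed expressions for $\varphi_{R,1}^\ast$ and $\varphi_{R,2}^\ast$, while the formula for $\varphi_{R,0}^\ast$ follows immediately from the definition $V_R(\varphi_R^\ast,t) = \sum_{i=0}^{2}\varphi_{R,i}^\ast(t)S_i(t)$. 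The main subtlety here is being careful to translate the Black–Scholes delta with respect to the (non-traded) benchmark portfolio into a position in the traded asset $S_2$ via the chain rule through $V^{v_I,\pi_B}$.
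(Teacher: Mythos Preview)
Your proposal is correct and follows essentially the same route as the paper: both reduce \eqref{OPRb} via the Korn replication idea to the standard Merton problem with initial capital $v_R+\xi_I^\ast(\theta_R)\theta_R P(0)$, then recover $\varphi_R^\ast$ by adding the put-replicating strategy (computed through the chain rule on the benchmark portfolio) to $\zeta_R^\ast$. The only cosmetic difference is that the paper passes through an intermediate problem \eqref{OPRc} with the put as an explicitly traded asset before collapsing to the pure-stock strategy $\zeta_R$, whereas you shortcut this by directly transforming the budget constraint for $X=V_R(T)-\xi_I^\ast(\theta_R)P(T)$; one small point worth making explicit is that admissibility requires $V_R(t)\ge 0$ for all $t$, not only at $T$, which your decomposition handles since both the Merton wealth and $\xi_I^\ast(\theta_R)P(t)$ are nonnegative.
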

    \begin{proof}
        See Appendix \ref{app:proofs_main}.
    \end{proof}
    \textbf{Remark to Proposition \ref{ThmOpR1}.}
    \textcolor{black}{The assumption of the finiteness of expectations in Proposition \ref{ThmOpR1} is standard (cf. \citeA{korn1999optimal}) and ensures the existence of the optimal Lagrange multiplier $y^\ast_R > 0$.}
    \\
    
    Proposition \ref{ThmOpR1} yields the the optimal terminal value     $V^{v_{R,0}(\xi_I^\ast(\theta_R),\theta_R),\varphi_R^\ast}_R(T)$ and the optimal trading strategy $\varphi_R^\ast$ in the reinsurer's optimization problem \eqref{OPRa} for an arbitrary but fixed $\theta_R$. The next proposition determines the optimal safety loading.
    \begin{proposition}[Optimal safety loading] \label{OpSt}
    	Let $\varphi^\ast_R(\cdot|\theta_R)$ be the optimal trading strategy for \eqref{OPRa} given an arbitrary but fixed $\theta_R\in[0,\theta^\text{max}]$. Then, the optimal $\theta_R^\ast$ is given by
    	\begin{align*}
    		\theta_R^\ast=\underset{\theta_R\in[0,\theta^{\max}]}{\text{arg max }}\mathbb{E}[U_R(V^{v_{R,0}(\xi_I^\ast(\theta_R),\theta_R),\varphi^\ast_R}_R(T)-\xi_I^\ast(\theta_R)P(T))].
    	\end{align*}
    \end{proposition}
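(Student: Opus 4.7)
The plan is to reduce the joint supremum in \eqref{OPRa} to an iterated (nested) supremum, and then invoke Proposition \ref{ThmOpR1} to resolve the inner one. Concretely, for each admissible $\theta_R \in [0, \theta^{\max}]$, let
$$
\Lambda_R^{\varphi_R}(\theta_R) := \{\varphi_R \,:\, (\varphi_R, \theta_R) \in \Lambda_R^{\varphi_R}\}
$$
denote the $\theta_R$-section of the admissible set. Since $\theta^{\max}$ and the budget constraint $v_{R,0}(\xi_I^\ast(\theta_R),\theta_R)$ guarantee that $\Lambda_R^{\varphi_R}(\theta_R)$ is non-empty for every admissible $\theta_R$, the tautology $\sup_{(a,b)\in A\times B(a)} f(a,b) = \sup_{a\in A}\sup_{b\in B(a)} f(a,b)$ gives
\begin{align*}
\sup_{(\varphi_R,\theta_R)\in\Lambda_R^{\varphi_R}} & \mathbb{E}[U_R(V_R^{v_{R,0}(\xi_I^\ast(\theta_R),\theta_R),\varphi_R}(T)-\xi_I^\ast(\theta_R)P(T))] \\
= \sup_{\theta_R\in[0,\theta^{\max}]}&\Bigl\{ \sup_{\varphi_R\in\Lambda_R^{\varphi_R}(\theta_R)} \mathbb{E}[U_R(V_R^{v_{R,0}(\xi_I^\ast(\theta_R),\theta_R),\varphi_R}(T)-\xi_I^\ast(\theta_R)P(T))]\Bigr\}.
\end{align*}

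The next step is to apply Proposition \ref{ThmOpR1} to the inner supremum: for every fixed $\theta_R$, the inner problem is precisely \eqref{OPRb}, whose maximizer is $\varphi_R^\ast(\cdot|\theta_R)$. Substituting this maximizer into the outer expression reduces the problem to a deterministic optimization over the scalar $\theta_R$, yielding exactly the characterization
$$
\theta_R^\ast \in \underset{\theta_R\in[0,\theta^{\max}]}{\arg\max}\; \mathbb{E}[U_R(V_R^{v_{R,0}(\xi_I^\ast(\theta_R),\theta_R),\varphi_R^\ast(\cdot|\theta_R)}(T)-\xi_I^\ast(\theta_R)P(T))].
$$
A Stackelberg equilibrium is then assembled by pairing $\theta_R^\ast$ with $\varphi_R^\ast(\cdot|\theta_R^\ast)$ on the reinsurer's side and with $(\xi_I^\ast(\theta_R^\ast), \pi_I^\ast(\cdot|\theta_R^\ast))$ on the insurer's side, which is the content of the backward-induction scheme outlined at the start of Section \ref{sec_solution_to_SG}.

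The main obstacle is not the algebraic reduction but the existence of the argmax. Establishing that the outer supremum is attained requires continuity of the value function
$$
\theta_R \mapsto \mathbb{E}[U_R(V_R^{v_{R,0}(\xi_I^\ast(\theta_R),\theta_R),\varphi_R^\ast(\cdot|\theta_R)}(T)-\xi_I^\ast(\theta_R)P(T))]
$$
on the compact interval $[0,\theta^{\max}]$. Continuity in turn hinges on (i) continuity of the Lagrange multiplier $\theta_R \mapsto y_R^\ast(\theta_R)$ implicitly defined by \eqref{LagrangeMultiplierR}, which can be obtained via the implicit function theorem applied to the strictly decreasing map $y \mapsto \mathbb{E}[\widetilde{Z}(T)I_R(y\widetilde{Z}(T))]$, (ii) continuity of the insurer's best response $\theta_R \mapsto \xi_I^\ast(\theta_R)$, which requires a stability argument for the parametric optimization problem analysed in Subsection \ref{subsec_solution_to_OPI}, and (iii) a dominated convergence argument to pass the limit inside the expectation. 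In full generality these continuity properties would need additional regularity assumptions on $U_R$, $U_I$, and the admissible sets; in the power-utility case treated in Section \ref{sec_example}, however, $\xi_I^\ast(\theta_R)$ and $y_R^\ast(\theta_R)$ are explicit, so the maximizer $\theta_R^\ast$ can be identified directly and the existence issue is resolved by inspection.
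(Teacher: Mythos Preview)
Your decomposition into a nested supremum and invocation of Proposition \ref{ThmOpR1} for the inner problem is correct and matches the paper's overall structure. You also correctly identify that the substantive content of the proposition is the \emph{existence} of the outer argmax, not the algebraic reduction.

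The gap is that you stop at outlining the continuity argument and then hedge by saying that ``in full generality these continuity properties would need additional regularity assumptions''. The paper actually carries the argument through under the standing hypotheses, with no extra assumptions. The key tool you are missing is Berge's Maximum Theorem for step (ii): the paper shows that $\nu(\xi,\theta)$ is jointly continuous (since the Lagrange multiplier $y_I^\ast(\xi,\theta)$ depends continuously on the budget $v_I-\xi(1+\theta)P(0)$ and $I_I$, $U_I$ are continuous) and strictly concave in $\xi$ (by Lemma A.3 in \citeA{desmettre2016optimal}, using strict concavity of $U_I$), while the constraint correspondence $\theta\mapsto[0,\xi^{\max}(\theta)]$ is continuous. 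Berge's theorem then gives continuity of $\theta\mapsto\xi_I^\ast(\theta)$ directly. For step (i), the paper argues continuity of $y_R^\ast(\theta)$ from continuity of $I_R$ and of the right-hand side $v_R+\xi_I^\ast(\theta)\theta P(0)$ of \eqref{LagrangeMultiplierR}, together with strict monotonicity of the left-hand side in $y$. Composing with the continuous $U_R\circ I_R$ yields continuity of $\kappa(\theta_R):=\mathbb{E}[U_R(I_R(y_R^\ast(\theta_R)\widetilde{Z}(T)))]$, and the Weierstrass theorem on the compact interval $[0,\theta^{\max}]$ delivers the maximizer. So your roadmap is right, but the proof is not complete until you invoke Berge and Weierstrass rather than deferring to the power-utility special case.
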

    \begin{proof}
        See Appendix \ref{app:proofs_main}.
    \end{proof}
    
\subsection{Stackelberg equilibrium}
    \begin{proposition}[Stackelberg equilibrium] \label{SE}
    	The Stackelberg equilibrium of the Stackelberg game \eqref{SG} is given by $(\pi_R^\ast(\cdot|\theta_R^\ast),\theta_R^\ast,\pi_I^\ast(\cdot|\theta_R^\ast),\xi_I^\ast(\theta_R^\ast))$, where 
    	\begin{itemize}
    		\item $\pi_R^\ast(\cdot|\theta_R^\ast)$ is given by
    		\begin{align*}
    			\pi_{R,i}^\ast(t|\theta_R^\ast)=\frac{\varphi_{R,i}^\ast(t|\theta_R^\ast)\cdot S_i(t)}{V_R^{v_{R,0}(\xi_I^\ast(\theta_R^\ast),\theta_R^\ast),\varphi_R^\ast}(t)}
    		\end{align*}
    		where $\varphi_R^\ast$ is given by Proposition \ref{ThmOpR1},
    		\item $\theta_R^\ast$ is given by Proposition \ref{OpSt}, and
    		\item $(\pi_I^\ast(\cdot|\theta_R^\ast),\xi_I^\ast(\theta_R^\ast))$ are given by Proposition \ref{SolIOP2} and \ref{SolOPI}, such that
    		\begin{align}
    		    \xi_I^\ast(\theta_R^\ast)=\max\{\xi_I^\ast|\;&\mathbb{E}[U_I(V_I^{v_{I,0}(\xi_I^\ast,\theta_R^\ast),\pi_I^\ast}(T)+\xi_I^\ast P(T))]\nonumber\\
    		    &=\sup_{\xi_I:(\pi_I,\xi_I)\in\Lambda_I}\mathbb{E}[U_I(V_I^{v_{I,0}(\xi_I,\theta_R^\ast),\pi_I^\ast}(T)+\xi_IP(T))]\}. \label{ConditionSE}
    		\end{align}
    	\end{itemize}
    \end{proposition}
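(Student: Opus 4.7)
The plan is to verify that the candidate tuple satisfies both Stackelberg conditions $(SEC_1)$ and $(SEC_2)$, leveraging the backward-induction construction already carried out in Sections \ref{subsec_solution_to_OPI} and \ref{subsec_solution_to_OPR}. By design, $(\pi_I^\ast(\cdot|\theta_R^\ast), \xi_I^\ast(\theta_R^\ast))$ is produced by Propositions \ref{SolIOP2} and \ref{SolOPI} as an element of $\arg\max_{(\pi_I,\xi_I)\in\Lambda_I}\mathbb{E}[U_I(V_I^{v_{I,0}(\xi_I,\theta_R^\ast),\pi_I}(T)+\xi_I P(T))]$, which immediately yields $(SEC_1)$.

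The substantive content lies in $(SEC_2)$: for every admissible leader action $(\pi_R,\theta_R)\in\Lambda_R$ and every optimal follower response $(\pi_I,\xi_I)$ to $\theta_R$, the reinsurer's expected utility must not exceed its value at the candidate equilibrium. I would split the argument into two steps. First, fix an arbitrary $\theta_R\in[0,\theta^{\max}]$ and examine how the reinsurer's expected utility depends on the insurer's response $\xi_I$ among the (possibly non-unique) maximizers of the follower's problem. Using the best reply $\varphi_R^\ast(\cdot|\theta_R)$ from Proposition \ref{ThmOpR1}, the Lagrange identity \eqref{LagrangeMultiplierR} reads $\mathbb{E}[\widetilde{Z}(T)I_R(y_R^\ast\widetilde{Z}(T))]=v_R+\xi_I\theta_R P(0)$. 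Since $I_R$ is strictly decreasing and $\theta_R\geq 0$, a larger $\xi_I$ increases the right-hand side and therefore decreases $y_R^\ast$, so that $U_R(I_R(y_R^\ast\widetilde{Z}(T)))$ is pointwise nondecreasing in $\xi_I$; this legitimates the selection rule \eqref{ConditionSE} as the leader's preferred tie-breaking among the follower's optimal responses. Second, Proposition \ref{OpSt} optimizes over $\theta_R\in[0,\theta^{\max}]$ once the tie-breaking has been fixed, producing $\theta_R^\ast$. Chaining the two steps gives $(SEC_2)$: for any qualifying $(\pi_R,\theta_R,\pi_I,\xi_I)$, the leader's payoff is bounded above by $\mathbb{E}[U_R(V_R^{v_{R,0}(\xi_I,\theta_R),\varphi_R^\ast(\cdot|\theta_R)}(T)-\xi_I P(T))]$, then by the same quantity with $\xi_I$ replaced by the largest optimal response allowed by \eqref{ConditionSE}, and finally by the equilibrium value at $\theta_R^\ast$ via Proposition \ref{OpSt}.

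The main obstacle is precisely the tie-breaking step. As emphasized in the introduction and as will be confirmed in the power-utility analysis of Section \ref{sec_example}, the insurer's expected utility is typically constant over an entire interval of admissible $\xi_I$, so $(SEC_2)$ would be vacuous without explicitly invoking the optimistic convention. The monotonicity of the reinsurer's value in $\xi_I$, deduced from \eqref{LagrangeMultiplierR} together with the strict concavity of $U_R$ and the Inada conditions, is therefore the key analytical ingredient; beyond it, the proof reduces to bookkeeping the definitions of $\Lambda_I$, $\Lambda_R$, and the propositions cited in the statement.
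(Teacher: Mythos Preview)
Your proposal is correct and follows essentially the same approach as the paper: verify $(SEC_1)$ directly from Propositions \ref{SolIOP2} and \ref{SolOPI}, then establish $(SEC_2)$ by showing via the budget constraint \eqref{LagrangeMultiplierR} that the reinsurer's value function is monotone increasing in $\xi_I$ (since $I_R$ is decreasing and $U_R$ is increasing), which justifies the tie-breaking rule \eqref{ConditionSE}, and finally invoke Proposition \ref{OpSt} for the optimization over $\theta_R$. Your chaining argument---first replacing an arbitrary $\pi_R$ by $\varphi_R^\ast(\cdot|\theta_R)$, then $\xi_I$ by the maximal optimal response, then $\theta_R$ by $\theta_R^\ast$---is slightly more explicit than the paper's presentation but amounts to the same proof.
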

    
    \begin{proof}
        See Appendix \ref{app:proofs_main}.
    \end{proof}

    \textbf{Remarks to Proposition \ref{SE}.} 
    \begin{itemize}
        \item[1.] \eqref{ConditionSE} ensures that \eqref{eq:SE_condition_2} is fulfilled, i.e., if there exist several best responses of the insurer to the reinsurer's optimal action, then the insurer chooses among them the best response that is also best from the reinsurer's perspective.
        \item[2.] We get analytical representations of the optimal relative portfolio processes $\pi_R^\ast$ and $\pi_I^\ast$, which depend on $\theta_R^\ast$ and $\xi_I^\ast(\theta_R^\ast)$. For common utility functions, such as power, HARA or logarithmic, the optimal reinsurance amount $\xi_I^\ast(\theta_R^\ast)$, $\xi_I^\ast(\theta_R^\ast) = \bar{\xi}$, but the optimal safety loading $\theta_R^\ast$ has to be numerically computed by solving a non-linear equation. 
    \end{itemize}

\section{Explicit solutions for power utility functions} \label{sec_example}

\hspace{0.5cm} In this section, we give explicit solutions for the special case when the reinsurer and the insurer have power utility functions, i.e., for $x\in(0,\infty)$
    \begin{align*}
        U_R(x):=\frac{1}{b_R}x^{b_R}\text{ and }U_I(x):=\frac{1}{b_I}x^{b_I}
    \end{align*}
    with $b_R,b_I\in(-\infty,1)\backslash\{0\}$. In addition, we assume that the upper limit $\xi^{\max}$ is fixed and equals $\bar{\xi}>0$ with $\bar{\xi}<\frac{v_I}{(1+\theta_R^{\max})P(0)}$.
    
    \begin{corollary}[Response of the insurer] \label{CorPowerI} 
        Assume that the insurer has a power utility function with $b_I\in(-\infty,1)\backslash\{0\}$. Then the optimal $\lambda^\ast\in\mathcal{D}$ is given by 
        \begin{align*}
    		\lambda^\ast=\bigg(0,\frac{\sigma_2\rho}{\sigma_1}(\mu_1-r)-\mu_2+r\bigg)^\top
    	\end{align*}
    	and the Merton strategy $\pi^M_{\lambda^\ast}$ in the auxiliary market $\mathcal{M}_{\lambda^\ast}$ by
    	\begin{align} \label{eq_AuxiliaryMerton}
    		\pi^M_{\lambda^\ast}(b_I)=\frac{1}{1-b_I}(\sigma\sigma^\top)^{-1}(\mu+\lambda^\ast-r\mathbbm{1}).
    	\end{align}
    	Therefore, the optimal response of the insurer to any admissible action of the reinsurer is
        \begin{align}
            \xi_I^\ast(\theta_R)&=\begin{cases}
        		\bar{\xi},&\text{if } \theta_R < \frac{\mathbb{E}[\widetilde{Z}_{\lambda^\ast}(T)P(T)]-P(0)}{P(0)},\\
        		\text{any }\tilde{\xi}\in[0,\bar{\xi}],&\text{if } \theta_R = \frac{\mathbb{E}[\widetilde{Z}_{\lambda^\ast}(T)P(T)]-P(0)}{P(0)},\\
    			0,&\text{if } \theta_R > \frac{\mathbb{E}[\widetilde{Z}_{\lambda^\ast}(T)P(T)]-P(0)}{P(0)},
        		\end{cases} \label{OptimalXi}
        \end{align}
        \begin{align}
            \pi^\ast_I(t|\theta_R)=&\pi_{\lambda^\ast}^M(b_I)\frac{V_I^{v_{I,0}(\xi_I^\ast(\theta_R),\theta_R),\pi_I^\ast}(t)+\xi_I^\ast(\theta_R)\widetilde{Z}_{\lambda^\ast}(t)^{-1}\mathbb{E}[\widetilde{Z}_{\lambda^\ast}(T)P(T)|\mathcal{F}_t]}{V_I^{v_{I,0}(\xi_I^\ast(\theta_R),\theta_R),\pi_I^\ast}(t)},\quad \tin. \label{StrategieInsurer}
        \end{align}
        \end{corollary}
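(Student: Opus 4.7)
My plan is to apply Propositions~\ref{SolIOP2} and~\ref{SolOPI} with the specific $\lambda^\ast$ stated in the corollary, which is tailored so that the unconstrained Merton strategy in the auxiliary market $\mathcal{M}_{\lambda^\ast}$ automatically lies in $K$. First, $\lambda^\ast$ is deterministic with $\lambda_1^\ast=0$, so $\lambda^\ast(t)\in\tilde{K}$, $\delta(\lambda^\ast(t))=0$, and the integrability conditions defining $\mathcal{D}$ are trivially satisfied. The heart of this step is to evaluate $\pi^M_{\lambda^\ast}(b_I)$ in~\eqref{eq_AuxiliaryMerton}: using $\mu+\lambda^\ast-r\mathbbm{1}=(\mu_1-r,\,\tfrac{\sigma_2\rho}{\sigma_1}(\mu_1-r))^\top$ together with the explicit inverse of $\sigma\sigma^\top$, a short matrix calculation shows that the second component of $\pi^M_{\lambda^\ast}(b_I)$ vanishes and gives $\pi^M_{\lambda^\ast}(b_I)=\bigl(\tfrac{\mu_1-r}{(1-b_I)\sigma_1^2},0\bigr)^\top\in K$. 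This is precisely the defining property of $\lambda^\ast$ and is what will allow Proposition~\ref{SolOPI} to be applied at the end.

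Second, I would apply Proposition~\ref{SolIOP2} with $I_I(y)=y^{1/(b_I-1)}$; the finiteness and polynomial-boundedness hypotheses are immediate since $\widetilde{Z}_{\lambda^\ast}(T)$ is log-normal with deterministic parameters. Specialising formula~\eqref{OptimalPortfolioProcess_Lambda} via $y\hat{I}_I'(y)=\tfrac{1}{b_I-1}y^{1/(b_I-1)}\mathbbm{1}_{\{y^{1/(b_I-1)}>\xi_I P(T)\}}$ and the martingale identity $\widetilde{Z}_{\lambda^\ast}(t)V^\ast_{\lambda^\ast}(t)=\mathbb{E}[\widetilde{Z}_{\lambda^\ast}(T)V^\ast_{\lambda^\ast}(T)\mid\mathcal{F}_t]$, I would obtain the Merton-scaling identity
\[
\pi^\ast_{\lambda^\ast}(t)\,V^\ast_{\lambda^\ast}(t)=\pi^M_{\lambda^\ast}(b_I)\cdot\Bigl(V^\ast_{\lambda^\ast}(t)+\xi_I^\ast\,\widetilde{Z}_{\lambda^\ast}(t)^{-1}\mathbb{E}[\widetilde{Z}_{\lambda^\ast}(T)P(T)\mid\mathcal{F}_t]\Bigr),
\]
which is~\eqref{StrategieInsurer}. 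Because $\pi^M_{\lambda^\ast}(b_I)\in K$ by the first step, the scalar multiple $\pi^\ast_{\lambda^\ast}(t)$ also lies in $K$, so Proposition~\ref{SolOPI} transfers optimality from $(\mathcal{P}_I^{\lambda^\ast})$ to the original constrained problem~\eqref{OPI}.

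Third, to pin down $\xi_I^\ast(\theta_R)$ I would analyse the value function $\nu(\xi_I)$ via the envelope theorem applied to the Lagrangian
\[
L(V,y;\xi_I)=\mathbb{E}[U_I(V+\xi_I P(T))]+y\bigl(v_I-\xi_I(1+\theta_R)P(0)-\mathbb{E}[\widetilde{Z}_{\lambda^\ast}(T)V]\bigr).
\]
On $\{V^\ast>0\}$ the first-order condition $U_I'(V^\ast+\xi_I P(T))=y^\ast\widetilde{Z}_{\lambda^\ast}(T)$ reduces the envelope derivative to
\[
\tfrac{d\nu}{d\xi_I}=y^\ast(\xi_I)\bigl(\mathbb{E}[\widetilde{Z}_{\lambda^\ast}(T)P(T)]-(1+\theta_R)P(0)\bigr).
\]
Since $y^\ast(\xi_I)>0$, the sign of $\nu'(\xi_I)$ is constant on $[0,\bar\xi]$ and coincides with the sign of the \emph{price wedge} in parentheses. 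Monotonicity of $\nu$ then delivers~\eqref{OptimalXi} by case analysis: the corner value $\bar\xi$ when the wedge is positive (the put is cheap under the auxiliary pricing kernel), the corner value $0$ when it is negative, and the full interval $[0,\bar\xi]$ at the threshold where it vanishes.

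The main obstacle I anticipate is making the envelope argument rigorous when the non-negativity constraint $V\ge 0$ binds on a set of positive measure. For $b_I<0$ this cannot happen since $U_I'(0)=+\infty$ forces $V^\ast>0$; for $b_I\in(0,1)$ it is possible in principle. Handling it cleanly requires introducing a KKT multiplier $\mu\ge 0$ for $V\ge 0$ into the Lagrangian and verifying that the resulting correction term to $\nu'(\xi_I)$ still preserves the sign structure dictated by the price wedge. An analogous subtlety reappears in the second step through the indicator $\mathbbm{1}_{\{V^\ast>0\}}$ inside the conditional expectation used in the portfolio formula, and is absorbed by the same argument together with $V^\ast\mathbbm{1}_{\{V^\ast=0\}}=0$.
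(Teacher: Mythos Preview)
Your first two steps coincide with the paper's proof: deriving $\lambda^\ast$ from the requirement $\pi^M_{\lambda^\ast}\in K$ and then specialising \eqref{OptimalPortfolioProcess_Lambda} to obtain the Merton-scaling identity \eqref{StrategieInsurer}. The difference lies in how you extract $\xi_I^\ast(\theta_R)$. The paper does not use the envelope theorem; instead it invokes the standing assumption $\bar\xi<\tfrac{v_I}{(1+\theta^{\max})P(0)}$ (stated at the beginning of Section~\ref{sec_example}) together with an auxiliary lemma showing that a strictly positive post-reinsurance budget forces $I_I(y^\ast\widetilde{Z}_{\lambda^\ast}(T))>\xi P(T)$ a.s., so that $V^\ast_{\lambda^\ast}(T)>0$ a.s.\ for \emph{every} $b_I\in(-\infty,1)\setminus\{0\}$. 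This lets the paper solve the budget constraint for $y^\ast(\xi)$ in closed form and write $\nu(\xi)=\tfrac{1}{b_I}\bigl(v_I+\xi[\mathbb{E}[\widetilde{Z}_{\lambda^\ast}(T)P(T)]-(1+\theta_R)P(0)]\bigr)^{b_I}\cdot C$, from which \eqref{OptimalXi} is immediate. Your envelope route reaches the same sign structure and is more portable to other utilities, but the obstacle you anticipate is dissolved by the paper's standing assumption rather than by a KKT argument. One correction: your justification that the constraint cannot bind when $b_I<0$ because $U_I'(0)=+\infty$ is not the right diagnosis, since the relevant boundary marginal utility is $\hat U_I'(0)=U_I'(\xi P(T))$, which is finite whenever $P(T)>0$ for all admissible $b_I$; the non-binding of $V\ge 0$ comes entirely from the budget assumption, not from the sign of $b_I$.
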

        
        \begin{proof}
            See Appendix \ref{app:proofs_example}. 
        \end{proof}
        
        \textbf{Remarks to Corollary \ref{CorPowerI}.} \eqref{StrategieInsurer} can be rewritten in the following way:
        \begin{align}
            \pi^\ast_I(t|\theta_R)&=\pi^M(b_I)+\underbrace{\frac{1}{1-b_I}(\sigma\sigma^\top)^{-1}\lambda^\ast}_{\text{constraint correction term}}+\underbrace{\pi_{\lambda^\ast}^M(b_I)\frac{\xi_I^\ast(\theta_R)\widetilde{Z}_{\lambda^\ast}(t)^{-1}\mathbb{E}[\widetilde{Z}_{\lambda^\ast}(T)P(T)|\mathcal{F}_t]}{V_I^{v_{I,0}(\xi_I^\ast(\theta_R),\theta_R),\pi_I^\ast}(t)}}_{\text{reinsurance contract correction term}}, \label{StrategieInsurer2}
        \end{align}
        where $\pi^M$ is the Merton portfolio process in the basic market $\mathcal{M}$. If the insurer has no investment strategy constraint, then $\lambda^\ast=(0,0)^\top$ and $\pi^M_{\lambda^\ast}(b_I)=\pi^M(b_I)$. Moreover, if $\lambda^\ast=(0,0)^\top$, then for any $\theta_R>0$ we get $\xi_I^\ast(\theta_R)=0$ due to \eqref{OptimalXi} and, thus, $\pi_I^\ast(t|\theta_R)=\pi_{\lambda^\ast}^M(b_I)=\pi^M(b_I)$. 
        
        Otherwise, the insurer's optimal portfolio process $\pi_I^\ast$ is a Merton strategy with two correction terms that account for the availability of the reinsurance contract and the difference between the insurer's individual portfolio and the reinsured portfolio.
        
        \begin{corollary}[Stackelberg equilibrium] \label{CorPowerR}
            Assume that the insurer and the reinsurer have power utility functions with $b_I,b_R\in(-\infty,1)\backslash\{0\}$. Then the Merton portfolio process in the basic market is given by
            \begin{align*}
                \pi^M(b_R)=\frac{1}{1-b_R}(\sigma\sigma^\top)^{-1}(\mu-r\mathbbm{1}).
            \end{align*}
            Therefore, the Stackelberg equilibrium is given by
            \begin{align}
                \theta_R^\ast&=\min\bigg\{\frac{\mathbb{E}[\widetilde{Z}_{\lambda^\ast}(T)P(T)]-P(0)}{P(0)},\theta^{\max}\bigg\}, \label{OptimalTheta}\\
                \pi_R^\ast(t|\theta_R^\ast)&=\pi^M(b_R)\frac{V_R^{v_{R,0}(\xi_I^\ast(\theta_R^\ast),\theta_R^\ast),\varphi_R^\ast}(t)-\xi_I^\ast(\theta_R^\ast)P(t)}{V_R^{v_{R,0}(\xi_I^\ast(\theta_R^\ast),\theta_R^\ast),\varphi_R^\ast}(t)}+\underbrace{\begin{pmatrix}
            		0\\
            		\frac{\pi^{CM}V^{v_I,\pi_B}(t)(\Phi(d_+)-1)}{V_R^{v_{R,0}(\xi_I^\ast(\theta_R^\ast),\theta_R^\ast),\varphi_R^\ast}(t)}\xi_I^\ast(\theta_R^\ast)
            	\end{pmatrix}}_{\text{put hedging term}}, \label{StrategieReinsurer}
            \end{align}
            $\xi_I^\ast(\theta_R^\ast)=\bar{\xi}$ by Equation \eqref{OptimalXi} and $\pi_I^\ast(\cdot|\theta_R^\ast)$ is given by Equation \eqref{StrategieInsurer}. 
        \end{corollary}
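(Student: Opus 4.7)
The plan is to specialize Propositions \ref{SE}, \ref{ThmOpR1} and \ref{OpSt} to power utility and combine them with the insurer's optimal response from Corollary \ref{CorPowerI}. For a power utility $U_R(x)=\frac{1}{b_R}x^{b_R}$ the inverse marginal utility is $I_R(y)=y^{1/(b_R-1)}$ and the Merton portfolio process in the basic market $\mathcal{M}$ has the standard closed form $\pi^M(b_R)=\frac{1}{1-b_R}(\sigma\sigma^\top)^{-1}(\mu-r\mathbbm{1})$, so no new analytic machinery is needed beyond what is already developed.

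For a fixed $\theta_R\in[0,\theta^{\max}]$ I would first apply Proposition \ref{ThmOpR1}(b). The ``stock-only'' auxiliary problem in which the reinsurer trades only $S_0,S_1,S_2$, with initial wealth $v_R+\xi_I^\ast(\theta_R)\theta_R P(0)$ (as read off from the budget constraint \eqref{LagrangeMultiplierR}) and terminal wealth $I_R(y_R^\ast\widetilde{Z}(T))$, is a classical power-utility problem whose optimal relative portfolio equals $\pi^M(b_R)$. Its wealth process must coincide with $V_R^{v_{R,0}(\xi_I^\ast(\theta_R),\theta_R),\varphi_R^\ast}(t)-\xi_I^\ast(\theta_R)P(t)$: this self-financing process traded in $S_0,S_1,S_2$ matches the desired terminal wealth at $T$, so no-arbitrage forces equality for every $t$. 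Translating $\varphi_R^\ast$ into the relative portfolio via \eqref{RelationPortfolioTradingReinsurer} then yields \eqref{StrategieReinsurer}: the Merton term applied to the ``stock-only'' wealth fraction plus the delta-hedging term for the short put already identified in Proposition \ref{ThmOpR1}(b).

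To obtain $\theta_R^\ast$ I would invoke Proposition \ref{OpSt}. For power utility the maximal expected utility at fixed $\theta_R$ is a strictly monotone function of the effective budget $v_R+\xi_I^\ast(\theta_R)\theta_R P(0)$; this follows by plugging $I_R(y_R^\ast\widetilde{Z}(T))$ back into $U_R$ and using the explicit dependence of $y_R^\ast$ on the budget in \eqref{LagrangeMultiplierR}. Combined with the insurer's response \eqref{OptimalXi} from Corollary \ref{CorPowerI} with threshold $\theta_{\mathrm{crit}}:=(\mathbb{E}[\widetilde{Z}_{\lambda^\ast}(T)P(T)]-P(0))/P(0)$, the budget equals $v_R+\bar{\xi}\theta_R P(0)$ on $[0,\theta_{\mathrm{crit}})$ and collapses to $v_R$ on $(\theta_{\mathrm{crit}},\theta^{\max}]$, so when $\theta_{\mathrm{crit}}>\theta^{\max}$ the maximum is cleanly attained at $\theta^{\max}$ with $\xi_I^\ast=\bar{\xi}$.

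The main obstacle is attainment when $\theta_{\mathrm{crit}}\le \theta^{\max}$: the budget function is strictly increasing on $[0,\theta_{\mathrm{crit}})$ and jumps down afterwards, while at the single point $\theta_R=\theta_{\mathrm{crit}}$ the insurer's best reply is multi-valued over the whole interval $[0,\bar{\xi}]$. To resolve this I would invoke the optimistic bilevel condition \eqref{ConditionSE} of Proposition \ref{SE}, which forces the insurer to pick the reinsurer-favorable response $\xi_I^\ast(\theta_{\mathrm{crit}})=\bar{\xi}$ from its continuum of indifferent best replies. This simultaneously delivers $\theta_R^\ast=\min\{\theta_{\mathrm{crit}},\theta^{\max}\}$, $\xi_I^\ast(\theta_R^\ast)=\bar{\xi}$, and, through \eqref{StrategieInsurer}, the insurer's equilibrium portfolio process, completing the identification of the Stackelberg equilibrium.
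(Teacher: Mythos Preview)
Your proposal is correct and follows essentially the same approach as the paper: both reduce the reinsurer's maximization over $\theta_R$ to maximizing the effective budget $v_R+\xi_I^\ast(\theta_R)\theta_R P(0)$ (equivalently $\xi_I^\ast(\theta_R)\theta_R$), combine this with the insurer's threshold response \eqref{OptimalXi}, and resolve the indifference at the critical loading via the optimistic Stackelberg condition \eqref{eq:SE_condition_2}. You are actually more explicit than the paper in deriving the reinsurer's relative portfolio \eqref{StrategieReinsurer} from Proposition \ref{ThmOpR1}(b) and \eqref{RelationPortfolioTradingReinsurer}; the paper's proof leaves this step implicit and focuses only on $\theta_R^\ast$ and $\xi_I^\ast(\theta_R^\ast)$.
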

        \begin{proof}
            See Appendix \ref{app:proofs_example}. 
        \end{proof}
        
        \textbf{Remarks to Corollary \ref{CorPowerR}.} The strategy \eqref{StrategieReinsurer} can be seen as a generalized CPPI with the floor being the value of the put-option position.
        
        If the insurer has no portfolio constraint (i.e., $\lambda^\ast=(0,0)^\top$), then $\xi_I^\ast(\theta_R^\ast)=0$ for any $\theta_R>0$ (see Remark to Corollary \ref{CorPowerI}) and $\pi_R^\ast(t|\theta_R^\ast)=\pi^M(b_R)$ for all $t\in[0,T]$. The same equality holds if $G_T=0$, since then $d_+=+\infty$, $P(t) = 0$ for all $\tin$ and put-hedging term disappears.

\section{Numerical studies}\label{sec_numerical_studies}

    \hspace{0.5cm} In this section, we conduct numerical studies calibrated to the German market. The selection of the parameters is stated and explained in Subsection \ref{subsec_parameter}. In Subsection \ref{subsec_SGNS}, the Stackelberg equilibrium for the base parameters from Subsection \ref{subsec_parameter} is calculated and we conduct a sensitivity analysis of the Stackelberg equilibrium w.r.t. the behavior of the parties and w.r.t. a change in the put option price. In Subsection \ref{subsec_deviationSE}, we show the impact of a deviation from the Stackelberg equilibrium on the expected utilities of the parties.
    
\subsection{Parameter selection}\label{subsec_parameter}
    \begin{table}[h]
    	\centering
    	\begin{tabular}[h]{lll}
    		\hline
    		\textbf{Parameter} & \textbf{Symbol} & \textbf{Values}\\
    		\hline
    		Interest rate & $r$ & 1.02\%\\
    		Drift coefficient for $S_1$& $\mu_1$ &  17.52\%\\
    		Drift coefficient for $S_2$ & $\mu_2$ & 12.37\%\\
    		Diffusion coefficient for $S_1$ & $\sigma_1$ & 23.66\%\\
    		Diffusion coefficient for $S_2$ & $\sigma_2$ & 21.98\%\\
    		Correlation coefficient & $\rho$ & 80.12\%\\
    		Benchmark CM strategy & $\pi_B$ & $(0\%,29.48\%)^\top$\\
    		Initial value of $S_1$ & $s_1$ & 1\\
    		Initial value of $S_2$ & $s_2$ & 1\\
    		Guarantee & $G_T$ & 100\\
    		Initial wealth of insure & $v_I$ & 100\\
    		Initial wealth of reinsurer & $v_R$ & 100\\
    		Relative risk aversion of insurer $(RRA_I)$ & $1-b_I$ & 10\\
    		Relative risk aversion of reinsurer $(RRA_R)$ & $1-b_R$ & 10\\
    		Time horizon & $T$ & 10\\
    		Maximal safety loading of reinsurer & $\theta^{\max}$ & 50\%\\
    		Maximal amount of reinsurance & $\xi^{\max}=\bar{\xi}$ & 1.5\\
    		\hline
    	\end{tabular}
    	\caption{Parameters for the numerical analysis}
    	\label{tab:Parameter}
    \end{table}
    
    In this part, we discuss the selection of the parameters summarized in Table \ref{tab:Parameter}. For the majority of parameters, we choose the same values as in \citeA{Escobar}, where they are calibrated to the German market in the period from January 1, 2003, till June 8, 2020.
    
    The risk-free rate is modeled by the Euro OverNight Index Average (EONIA) daily data. For calibrating the parameters of $S_1$, we use the TecDAX daily data and for calibrating the parameters of $S_2$, we use the DAX daily data. 
    
    In this way, we model the following situation: the insurer invests in bonds and prefers the technological sector to a broad stock portfolio. One reason could be that the insurer's asset manager has special knowledge in the technology sector and/or believes that the TecDAX has a better performance than the DAX. In contrast, the reinsurer considers the insurer's technology-focused portfolio too risky to be reinsured. Therefore, the reinsurer offers reinsurance only on a mixed portfolio consisting of bonds and the broad market index DAX.
    
    The relative proportion $\pi^{CM}$ of the constant mix portfolio is selected such that it equals the optimal initial proportion of the insurer in the risky asset without reinsurance.
    
    In the German Life Insurance Market, the capital guarantee for the representative client is usually less than or equal to $100\%$ of the representative client's initial endowment. For example, ERGO offers the equity-linked insurance product \enquote{ERGO Rente Garantie} where the guarantee lies between $80\%$ and $100\%$\footnote{See, e.g.,   \url{https://www.focus.de/finanzen/steuern/ergo-ergo-rente-garantie_id_3550999.html}}. In contrast, Allianz offers a guarantee between $60\%$ and $90\%$\footnote{See, e.g.,  \url{https://www.sueddeutsche.de/wirtschaft/lebensversicherung-allianz-kuenftig-ohne-beitragsgarantie-1.5056917}}. Hence, we assume that the representative client has a $100\%$ capital guarantee of the initial capital and investigate in the sensitivity analysis a $G_T$ varying from $60\%$ to $110\%$ of the initial contribution of the client.
    
    For convenience, we set the insurer's initial wealth $v_I$ to 100. It is natural to assume that the reinsurer is a larger company with more initial capital. If reinsurance is offered on the whole company level, then the initial wealth of the reinsurer should be higher than the initial wealth of the insurer, as stated in \citeA{Chen2018}. Since we consider reinsurance within a single insurance product, we assume that the initial product-related capital of the reinsurer coincides with the initial wealth of the insurer, i.e., $v_R=v_I=100$.
    
    In \citeA{Chen2018}, the authors assume that the reinsurer and insurer have the same risk aversion. In contrast, \citeA{bai2019hybrid} assume that the insurer is more risk averse than the reinsurer. Thus, we first consider the situation where both parties have the same risk aversion. Afterwards, we explore the situations where the parties have different risk aversion. In the base case, we choose $b_I=b_R=-9$, which is consistent with \citeA{Escobar}. 
    
    For the maximal level of safety loading we choose $50\%$. It was chosen in line with \citeA{Chen2018} and \citeA{Chen2019} who chose an upper bound on the safety loading of the reinsurer of $45\%$.
    
    We do not allow that the insurer can speculate with the reinsurance by going short or buying \textcolor{black}{too much} of it. Since the underlying of the put option is not the portfolio of the insurer but a correlated portfolio, we allow that $\xi^{\max}=\bar{\xi}=1.5$.
    
\subsection{Stackelberg equilibrium} \label{subsec_SGNS}

\hspace{0.5cm} First, we calculate the Stackelberg equilibrium in the base case of parametrization and its sensitivity. It is given by
    \begin{align*}
    	\pi_R^\ast(0)=&\textcolor{black}{(31.67\%,-16.42\%)^\top},\quad \theta_R^\ast = 20.86\%,\\
    	\pi_I^\ast(0)=&(31.69\%,0\%)^\top,\quad \quad \quad \xi_I^\ast(\theta_R^\ast)=1.5.
    \end{align*} 
    The optimal reinsurance strategy of the insurer is to buy the maximal amount of put options (i.e., $\xi_I^\ast(\theta_R^\ast)=\xi^{\max}$) and the reinsurer sets the maximal price for the reinsurance such that the insurer still buys reinsurance, which is about 20\% higher than the fair price of the put option.
    
    Next, we describe the results of our sensitivity of the Stackelberg equilibrium w.r.t. the change of risk aversion of each party, the interest rate, the time horizon, and the capital guarantee. The corresponding plots can be found online in {\color{black}Section 2 of supplementary materials}.

    Varying RRA coefficients, we find that the higher the RRA of a party, the less the party invests in or speculates with the risky assets. \textcolor{black}{Interestingly, the values of the reinsurance contract parameters in the equilibrium do not depend on the individual RRA coefficient of each party. This is consistent with Corollary \ref{CorPowerR}, from which we know that the equilibrium safety loading $\theta_R^\ast$ depends on the pricing kernel in the original as well as the auxiliary markets and the put-option parameters. Hence, RRA coefficients do not influence $\theta_R^\ast$, if the benchmark strategy $\pi_B$ is independent of them. Figure \textcolor{black}{2.1 in supplementary materials} corresponds to such a choice of $\pi_B$. However, the insurer's RRA coefficient can have an indirect impact on the reinsurer's choice of $\theta_R^\ast$ through $\pi_B$, because the benchmark portfolio influences the fair price of the put option. If the risky-asset investment share $\pi^{CM}$ in $\pi_B$ decreases with increasing $b_I$, then the equilibrium safety loading increases. The remaining components of the equilibrium have the same trends as indicated in Figure \textcolor{black}{2.1}. The amount of reinsurance $\xi_I^\ast(\theta_R^\ast)$ bought by the insurer in equilibrium equals $\bar{\xi}$ for any considered RRA coefficient, as expected from \eqref{OptimalXi} and \eqref{eq:SE_condition_2}.
    }

    Varying the interest rate $r\in\{-2\%,-1\%,0\%,1\%,2\%\}$, we find that the higher $r$, the higher the optimal safety loading $\theta_R^\ast$, since the fair price of the put option $P(0)$ in the basic market $\mathcal{M}$ decreases faster than the fair price of the put option $\mathbb{E}[\widetilde{Z}_{\lambda^\ast}(T)P(T)]$ in the auxiliary market $\mathcal{M}_{\lambda^\ast}$ (see \eqref{OptimalTheta}). Consistent with Corollary \ref{CorPowerR}, the interest rate has no influence on the reinsurance strategy of the insurer. The higher the interest rate, the less the insurer invests in the risky asset. The interest rate has no notable influence on the reinsurer's optimal investment portfolio. The corresponding plots can be found in Figure \textcolor{black}{2.2 in supplementary materials.}
    

    For increasing time horizon $T\in\{1,5,10,15,20\}$, the equilibrium safety loading increases. In contrast, the time horizon has no influence on the optimal reinsurance strategy of the insurer and no notable influence on the optimal investment strategy of the reinsurer and the insurer. The corresponding plots can be found in in Figure \textcolor{black}{2.3 in supplementary materials}.
    
    
    Finally, we explore the sensitivity of the Stackelberg equilibrium w.r.t. the client guarantee $G_T\in\{0.6\cdot v_I,0.7\cdot v_I,0.8\cdot v_I,\dots, 1.1\cdot v_I\}$, which is also the strike of the put option that models the reinsurance contract. We observe that the optimal safety loading of the reinsurer is decreasing in $G_T$. This follows from the fact that the higher the guarantee, the more expensive the put option, which is why the maximal additional safety loading at which the insurer may still be willing to buy reinsurance decreases. However, $G_T$ has no influence on the reinsurance amount that is purchased by the insurer in the equilibrium, i.e., $\xi_I = \xi^{\max} = 1.5$. This stems from the definition of the Stackelberg equilibrium, where, in the case of several equivalent optimal responses to the leader's action, the follower chooses the response that is also best for the leader's perspective. When $G_T$ increases, the reinsurer's investment in $S_2$ decreases due to the hedge of its put position. There is no notable influence of $G_T$ on the optimal investment strategy of the insurer. Figure \textcolor{black}{2.4 in supplementary materials} contains the corresponding plots.
    
	
\subsection{Impact of deviating from Stackelberg equilibrium} \label{subsec_deviationSE}
	\hspace{0.5cm} Recall from \eqref{OptimalXi} that if the reinsurer chooses the safety loading below the critical value $({\mathbb{E}[\widetilde{Z}_{\lambda^\ast}(T)P(T)]-P(0)})/{P(0)}$, then the best action of the insurer is to buy the maximal amount of reinsurance. However, if the reinsurer chooses $\theta_R$ equal to that critical value, then the insurer's best response is to buy any reinsurance amount from the interval $[0,\bar{\xi}]$. Condition \eqref{eq:SE_condition_2} in the definition of the Stackelberg equilibrium implies that the reinsurer chooses the critical value for safety loading and has an optimistic view on the insurer's response.  In other words, the reinsurer \enquote{hopes} that the insurer selects $\xi_I^\ast(\theta_R^\ast)=\bar{\xi}$. However, in practice, the insurer could also buy no reinsurance or any amount in between, because its set of best responses is $[0,\bar{\xi}]$. Therefore, we study next the effects of a deviation of the reinsurer from the optimal safety loading $\theta_R^\ast$ on the reinsurer and the insurer. In the following, we denote $\xi_I^\ast(\theta_R^\ast)$ only by $\xi_I^\ast$ to simplify notation.
	
	The reinsurer should offer safety loadings equal to or below $\theta_R^\ast$, as otherwise the reinsurance price is too high for the insurer to buy it. We denote a discounted safety loading by $\theta_R(\alpha)=\alpha\cdot\theta_R^\ast$ with $\alpha\in[0,1]$. If the reinsurance premium is higher than the fair price of the put option (i.e., $\alpha>0$), then the reinsurer always gains from offering reinsurance with a discounted safety loading in comparison to not offering reinsurance. In the Stackelberg equilibrium, the reinsurer has the highest gain in its utility from offering reinsurance but is uncertain about the amount of reinsurance the insurer will buy. To get rid of this uncertainty, the reinsurer can offer reinsurance with a safety loading $\theta_R(\alpha)$ with $\alpha\in(0,1)$. Note that a deviation from the optimal safety loading means that the parties are not in the theoretical Stackelberg equilibrium anymore. However, both of them are incentivized to do so. The insurer's expected utility is strictly larger than it is in the Stackelberg equilibrium as well as in the case of not buying reinsurance. The reinsurer's expected utility for a discounted safety loading is strictly larger than it is in the case of not selling reinsurance. See Section 3 of supplementary materials for details.
	
	After the reinsurer has decided to grant a discount on the safety loading to the insurer, we consider the disadvantage/advantage to the reinsurer and the insurer in comparison to the Stackelberg equilibrium. We study this using the concept of the wealth-equivalent utility change (WEUC) of a party, i.e., insurer and reinsurer. In the calculation of WEUC, the expected utility of the party in the Stackelberg equilibrium (or the Stackelberg equilibrium with a discount) is compared as a reference action combination to an alternative action combination. The alternative action combination refers to a specific choice of the safety loading as well as the investment strategy by the reinsurer and the amount of reinsurance as well as the investment strategy by the insurer. WEUC equals the relative change of the party's initial wealth that is required to bring the party in the alternative action combination to the same expected utility as in case of the reference combination of actions. We define it in a way such that our definition is consistent with the definition of the wealth-equivalent utility loss in \citeA{larsen2012costs}.
	
 	To be specific, we denote a reference combination of actions by  $(\tilde{\pi}_R(\cdot|\tilde{\theta}_R,\tilde{\xi}_I),\tilde{\theta}_R,\tilde{\pi}_I(\cdot|\tilde{\theta}_R,\tilde{\xi}_I),\tilde{\xi}_I)$ and an alternative combination of actions by $(\hat{\pi}_R(\cdot|\hat{\theta}_R,\hat{\xi}_I),\hat{\theta}_R,\hat{\pi}_I(\cdot|\hat{\theta}_R,\hat{\xi}_I),\hat{\xi}_I)$.
	
	Then the WEUC of the reinsurer is denoted by 
	\begin{equation*}
	    WEUC_R(\tilde{\theta}_R,\tilde{\xi}_I,\hat{\theta}_R,\hat{\xi}_I):=WEUC_R((\tilde{\pi}_R(\cdot|\tilde{\theta}_R,\tilde{\xi}_I),\tilde{\theta}_R), (\hat{\pi}_R(\cdot|\hat{\theta}_R,\hat{\xi}_I),\hat{\theta}_R))
	\end{equation*}
	and satisfies the relation
	\begin{equation*}
	    \mathbb{E}\left[U_R\left(V_R^{v^{WEUC}_{R,0}(\hat{\xi}_I,\hat{\theta}_R),\hat{\pi}_R}(T)-\hat{\xi}_I P(T)\right)\right]=	\mathbb{E}\left[U_R\left(V_R^{v_{R,0}(\tilde{\xi}_I,\tilde{\theta}_R),\tilde{\pi}_R}(T)-\tilde{\xi}_I P(T)\right)\right],
	\end{equation*}
	where
	\begin{equation*}
	    v^{WEUC}_{R,0}(\hat{\xi}_I,\hat{\theta}_R)=v_R\cdot(1+WEUC_R(\tilde{\theta}_R,\tilde{\xi}_I,\hat{\theta}_R,\hat{\xi}_I))+\hat{\xi}_I(1+\hat{\theta}_R)P(0).
	\end{equation*}
	Analogously, the WEUC of the insurer is denoted by
	\begin{equation*}
	   WEUC_I(\tilde{\theta}_R,\tilde{\xi}_I,\hat{\theta}_R,\hat{\xi}_I):=WEUC_I((\tilde{\pi}_I(\cdot|\tilde{\theta}_R,\tilde{\xi}_I),\tilde{\xi}_I), (\hat{\pi}_I(\cdot|\hat{\theta}_R,\hat{\xi}_I),\hat{\xi}_I)) 
	\end{equation*}
    and satisfies the relation
	\begin{align*}
		\mathbb{E}\left[U_I\left(V_I^{v^{WEUC}_{I,0}(\hat{\xi}_I,\hat{\theta}_R),\hat{\pi}_I}(T)+\hat{\xi}_I P(T)\right)\right]&=	\mathbb{E}\left[U_I\left(V_I^{v_{I,0}(\tilde{\xi}_I,\tilde{\theta}_R),\tilde{\pi}_I}(T)+\tilde{\xi}_I P(T)\right)\right],
	\end{align*}
	where
	\begin{align*}
		v^{WEUC}_{I,0}(\hat{\xi}_I,\hat{\theta}_R)&=v_I\cdot(1+WEUC_I(\tilde{\theta}_R,\tilde{\xi}_I,\hat{\theta}_R,\hat{\xi}_I))-\hat{\xi}_I(1+\hat{\theta}_R)P(0).
	\end{align*}

	The WEUC has an intuitive interpretation. If this quantity is positive, then the reference combination of actions is better for the party of interest than the alternative combination of actions. In this case, the WEUC indicates by which proportion the party has to increase its initial capital in case of the alternative combination of actions so that it has the same expected utility as in the case of the reference combination of actions. If the WEUC is negative, then the reference combination of actions is worse for the considered party than the alternative combination of actions. In this case, the WEUC indicates by which proportion the party can decrease its initial capital in case of the alternative combination of actions so that it has the same expected utility as in the case of the reference combination of actions.

	Figure \ref{subfig:WEUC} shows the loss of the reinsurer and the gain of the insurer in their utilities, when the Stackelberg equilibrium is compared to an alternative situation with all actions being the same as in the Stackelberg equilibrium but with a discounted safety loading. The gain of the insurer is of the same amount as the loss of the reinsurer. Moreover, further investigations reveal that it does not depend on the risk aversions of the parties. Hence, the insurer does not benefit more from deviating from the Stackelberg equilibrium as the reinsurer loses. For example, a discount of 5\% from $\theta_R^\ast$ would increase the WEUC of the reinsurer by $6bp$ of the initial capital.

    \begin{figure}[h]
	    \centering
	    \includegraphics[width=0.5\linewidth]{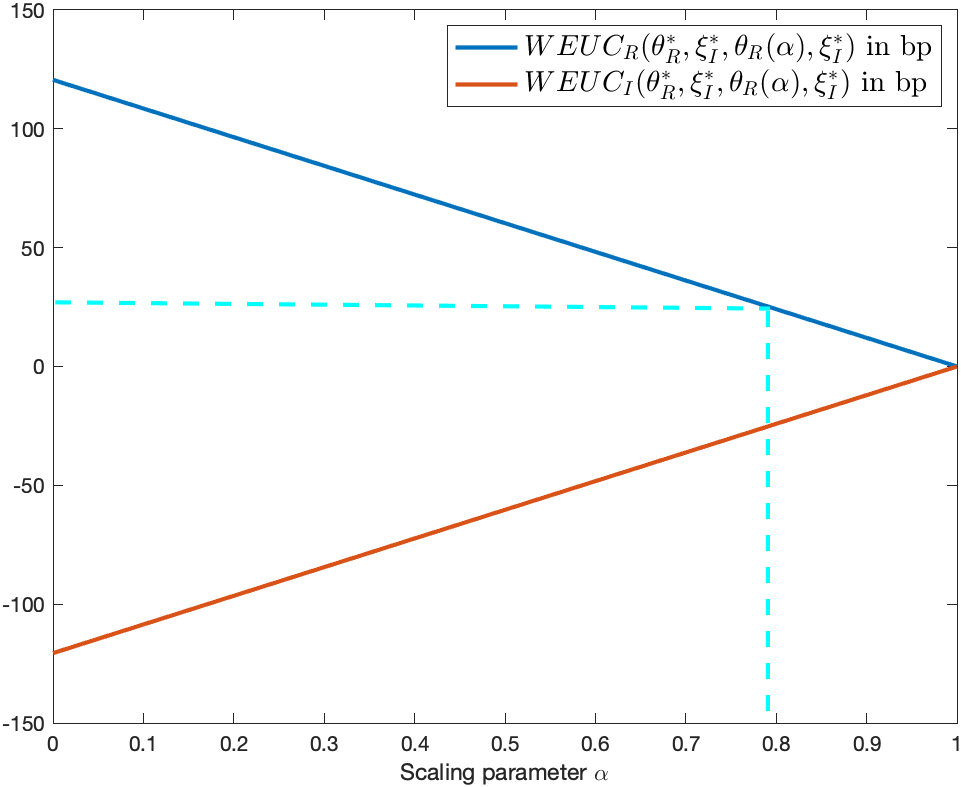}
	    \caption{Comparison of WEUC: Stackelberg equilibrium (reference) and the same combination of actions but with a discounted safety loading (alternative)}
	    \label{subfig:WEUC}
  \end{figure}

    In the next step, the reinsurer needs to choose a discount level $\alpha$ such that its WEUC is not too \enquote{painful}. In the following, we present some ways to choose $\alpha$. One way would be for the reinsurer to choose an acceptable upper bound on its WEUC (see the blue line in Figure \ref{subfig:WEUC}) and to exchange uncertainty for a fixed defined WEUC. For example, if the reinsurer wants its WEUC to be at most $25bp$, then it can choose $\alpha= 79.27\%$ (see turquoise line in Figure \ref{subfig:WEUC}).
    
    \begin{figure}[h]
        \centering
        \includegraphics[width=0.5\linewidth]{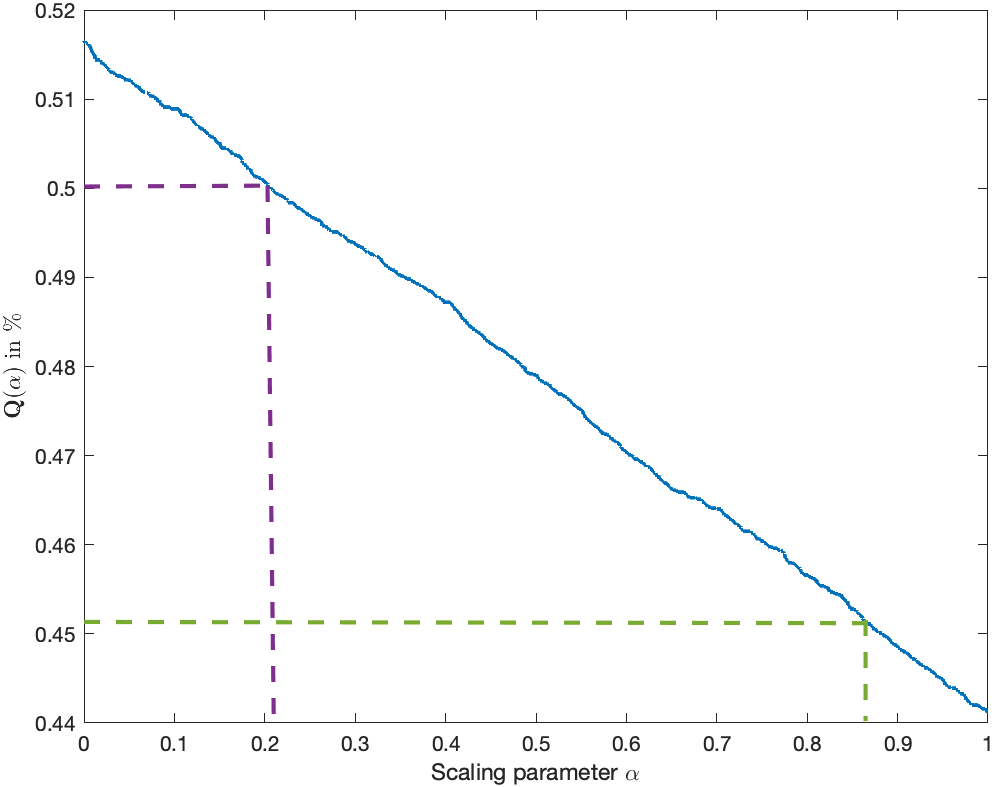}
        \caption{Probability of loss at time $T$ for reinsurer}
        \label{subfig:VaRR}
    \end{figure}
    
    The other ways of choosing $\alpha$ can be related to product profitability. It can be measured by calculating the probability that the reinsurer's total terminal wealth is below its initial wealth:
    \begin{align*}
        \mathbb{Q}(\alpha):=\mathbb{Q}(V_R^{v_{R,0}(\theta_R(\alpha),\xi_I^\ast),\pi_R^\ast(\cdot|\theta_R(\alpha),\xi_I^\ast)}(T)-\xi_I^\ast P(T)< v_R).
    \end{align*}
    This measure shows the product profitability of the reinsurance for the reinsurer and is shown in Figure \ref{subfig:VaRR}. The reinsurer can choose $\alpha$ in several different ways, which are based on:
    \begin{itemize}
        \item a tolerance level for the increase of the loss probability. For example, if the reinsurer is not willing to increase its loss probability by more than 0.01\% (compared to the loss probability of $\mathbb{Q}(1)=0.4413\%$ in the case of the Stackelberg equilibrium), then the reinsurer can choose $\alpha=86.73\%$ (see green line in Figure \ref{subfig:VaRR});
        \item the maximal acceptable probability of loss. For example, if the reinsurer can tolerate at most $0.5\%$ probability of loss, it can choose $\alpha=20.74\%$ (see purple line in Figure \ref{subfig:VaRR}).
    \end{itemize}
    The reinsurance company can also use other criteria for choosing $\alpha$, e.g., other risk measures, as the standard deviation, or different performance measures, as the adjusted Sharpe ratio.
    
    In the insurance industry, the analysis of the product profitability is a common procedure, which is also easier to communicate. Therefore, we choose as an example $\alpha=86.73\%$. In this case, the reinsurer limits the increase of the probability of loss to $0.0001$.
    
    \begin{figure}[h]
        \begin{subfigure}[t]{0.44\textwidth}
            \centering
            \includegraphics[width=\linewidth]{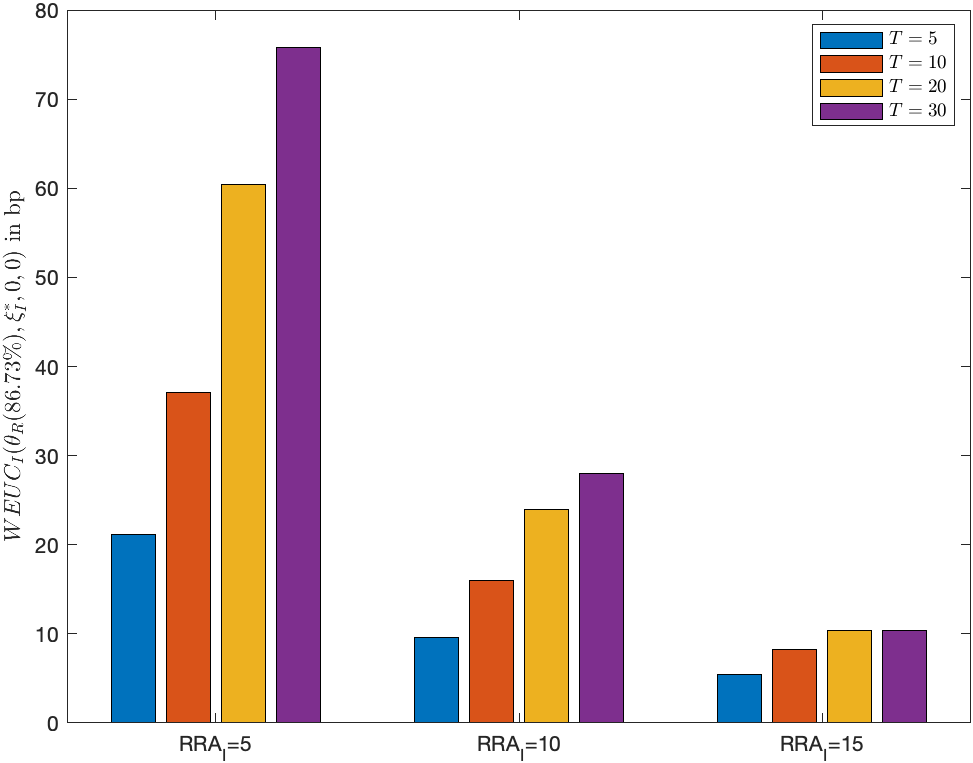}
            \caption{Reference $(\pi_I^\ast(\cdot|\theta_R(86.73\%),\xi_I^\ast),\xi_I^\ast)$, Alternative $(\pi_I^\ast(\cdot|0,0),0)$ }
        \label{subfig:WEUCIMerton}
        \end{subfigure}\hfill
    	\begin{subfigure}[t]{0.45\textwidth}
    	    \centering
    	    \includegraphics[width=\linewidth]{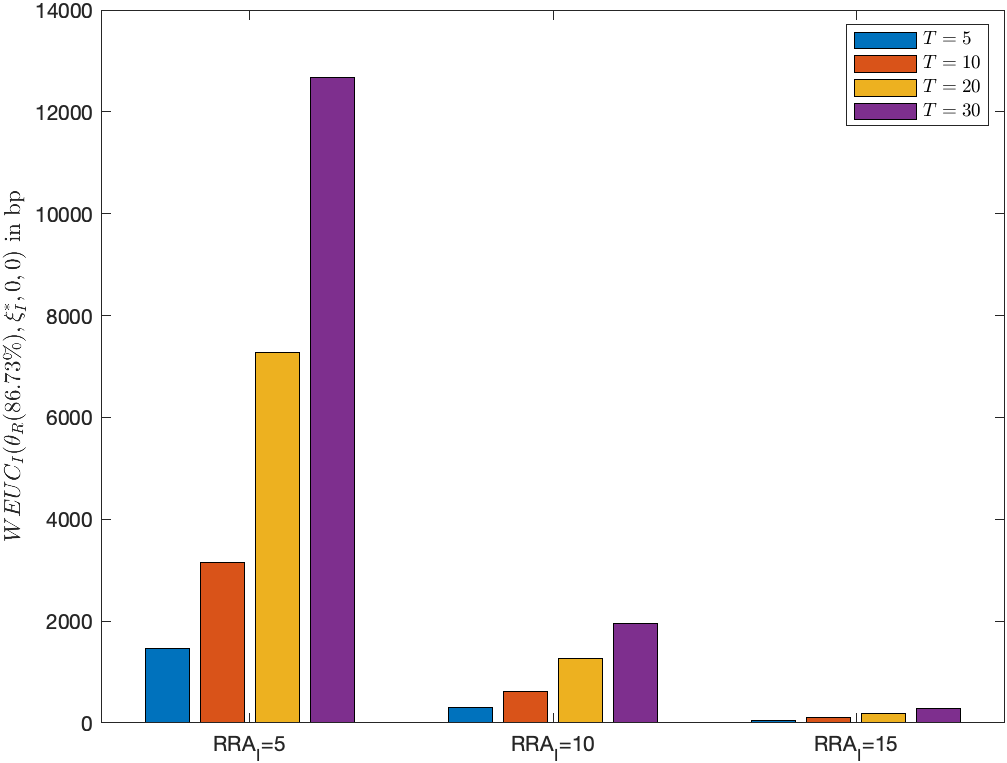}
    	    \caption{Reference $(\pi_I^\ast(\cdot|\theta_R(86.73\%),\xi_I^\ast),\xi_I^\ast)$, Alternative $((15\%,0),0)$}
    	    \label{subfig:WEUCICN}
    	\end{subfigure}
        \caption{Impact of relative risk aversion of insurer and investment horizon on $WEUC_I$}
        \label{fig:WEUCI}
    \end{figure}
    
    In the remaining analysis, we study how the insurer benefits from buying reinsurance with a safety loading of $\theta_R(86.73\%)$ (reference) instead of buying no reinsurance (alternative).
    
    First, we determine the benefit of the insurer if it follows a dynamic portfolio strategy with reinsurance $(\pi_I^\ast(\cdot|\theta_R(86.73\%),\xi_I^\ast),\xi_I^\ast)$ (reference) instead of a dynamic portfolio strategy without reinsurance $(\pi_I^\ast(\cdot|0,0),0)$ (alternative), the latter being its Merton strategy. Therefore, we calculate $WEUC_I(\theta_R(86.73\%),\xi_I^\ast,0,0)$ and consider how the relative risk aversion and the time horizon influence this value. The result is shown in Figure \ref{subfig:WEUCIMerton}. Since $WEUC_I(\theta_R(86.73\%),\xi_I^\ast,0,0)>0$, the insurer needs less initial capital if it follows the dynamic portfolio strategy with reinsurance instead of the dynamic portfolio strategy without reinsurance. This means that the insurer can decrease its product costs by buying reinsurance. The more risk averse the insurer, the lower $WEUC_I(\theta_R(86.73\%),\xi_I^\ast,0,0)$, and the longer the investment period, the higher $WEUC_I(\theta_R(86.73\%),\xi_I^\ast,0,0)$. If the insurer’s relative risk aversion is given by $RRA_I= 5$ and the insurer offers a long-term equity-linked insurance product with $T= 20$, \linebreak then $WEUC_I(\theta_R(86.73\%),\xi_I^\ast,0,0)=60bp$. If the insurer gets more risk averse, then \linebreak $WEUC_I(\theta_R(86.73\%),\xi_I^\ast,0,0)$ decreases strongly. For $RRA_I= 15$ and $T= 20$ we have \linebreak $WEUC_I(\theta_R(86.73\%),\xi_I^\ast,0,0)= 10bp$.
    
    Lastly, we consider the benefit of the insurer if it follows the dynamic investment strategy with reinsurance $(\pi_I^\ast(\cdot|\theta_R(86.73\%),\xi_I^\ast),\xi_I^\ast)$ (reference strategy) instead of the constant-mix strategy without reinsurance $((15\%,0),0)$ (alternative strategy). We consider this constant-mix strategy without reinsurance, since it approximates the long-term investment strategy of an average life insurer due to \citeA{grundl2017evolution}. Therefore, we calculate \linebreak
    $WEUC_I(\theta_R(86.73\%),\xi_I^\ast,0,0)$ and analyze the impact of the relative risk aversion and investment horizon, see Figure \ref{subfig:WEUCICN}. Analogous to the analysis before, the insurer needs less initial capital if it follows the dynamic strategy with reinsurance instead of the constant-mix strategy without reinsurance. The more risk averse the insurer, the lower
    $WEUC_I(\theta_R(86.73\%),\xi_I^\ast,0,0)$, and the longer the investment period, the higher $WEUC_I(\theta_R(86.73\%),\xi_I^\ast,0,0)$. If the insurer’s relative risk aversion is given by $RRA_I= 5$ and the insurer offers a long-term equity-linked insurance product with $T= 20$, then $WEUC_I(\theta_R(86.73\%),\xi_I^\ast,0,0)=7275bp$. If the insurer gets more risk averse, then $WEUC_I(\theta_R(86.73\%),\xi_I^\ast,0,0)$ decreases strongly. For $RRA_I= 15$ and $T= 20$ we have $WEUC_I(\theta_R(86.73\%),\xi_I^\ast,0,0)= 287bp$.

\section{Conclusion and further research}\label{sec_conclusions}

\hspace{0.5cm} In this paper, we derive and analyze a Stackelberg equilibrium in a Stackelberg game between a reinsurer and an insurer, which appears in the context of an equity-linked insurance product with a capital guarantee. We assume that the reinsurance is only purchased at the beginning of the investment horizon and is not continuously adjusted.
	
	To solve the Stackelberg game analytically, we use backward induction and combine in a novel way the concept of auxiliary markets by \citeA{cvitanic1992convex}, the generalized martingale approach by \citeA{desmettre2016optimal} and the replicating strategies approach by \citeA{korn1999optimal}.
	
	We find that in the Stackelberg equilibrium the reinsurer selects the largest safety loading of the reinsurance contract such that the insurer may still be willing to sign it. However, for this equilibrium value of the safety loading, the insurer's expected utility function becomes independent of the amount of reinsurance due to its high price. Thus, in practice, the reinsurer should consider offering a reinsurance contract with a safety loading that is lower than the equilibrium one. It is still beneficial for the reinsurer as long as the safety loading is positive. We present some possible ways how the reinsurer can choose its final (discounted) safety loading. In addition, if the reinsurer offers a discount on the safety loading, the insurer can substantially decrease product costs for the client when switching from an old strategy, e.g., a constant-mix strategy without reinsurance, to the optimal dynamic strategy with reinsurance.
	
	In our numerical studies, the cost benefits for the insurer varied from a few basis points to 126.68\%, depending on the investment horizon, the insurer's risk aversion and the old strategy used in comparison.
	
	The Stackelberg equilibrium depends on the financial market model. It is an open interesting question for future research whether our results can be extended to more complex financial market models or if the addition of mortality or surrender risks prevents the analytical tractability of the Stackelberg game.


    \section*{Acknowledgement}
    
    \hspace{0.5cm} Yevhen Havrylenko and Rudi Zagst acknowledge the financial support of the ERGO Center of Excellence in Insurance at the Technical University of Munich promoted by ERGO Group.
    
    \section*{Competing interests}
    The authors declare none.
    
    \theendnotes

\bibliographystyle{apacite} 
\bibliography{Stackelberg_game_IR.bib}


\begin{appendix}
\section{Proofs of main results}\label{app:proofs_main}

\begin{proof}[Proof of Proposition \ref{SolIOP2}]
    The proof follows immediately from Corollary 3.4 and Theorem 4.3 in \citeA{desmettre2016optimal}.
\end{proof}

\begin{proof}[Proof of Proposition \ref{SolOPI}]
    The proof is based on the proof of Proposition 8.3 in \citeA{cvitanic1992convex} and consists of two parts:
	\begin{itemize}
		\item[1.] We fix $\xi_I\in[0,\xi^{\max}]$ and prove that $\pi_{\lambda^\ast}^\ast$ is optimal for \eqref{OPI} given the fixed $\xi_I$.
		\item[2.] We prove that $\xi_{\lambda^\ast}^\ast$ is optimal for \eqref{OPI} given the optimal portfolio process $\pi_I^\ast\equiv\pi_{\lambda^\ast}^\ast$.
	\end{itemize}
		\textit{Part 1.} Let $\xi_I\in[0,\xi^{\max}]$ be fixed. For the initial wealth it holds that
		\begin{align}
			V^{v_{I,0}(\xi_I,\theta_R),\pi_I}_\lambda(0)=V^{v_{I,0}(\xi_I,\theta_R),\pi_I}_I(0). \label{Initial}
		\end{align}
		Let $\pi_I$ such that $(\pi_I,\xi_I)\in\Lambda_I$, i.e., it holds that $\pi_I(t)\in K$ $\mathbb{Q}$-a.s. for all $t\in[0,T]$. Hence, for all $\lambda\in\mathcal{D}$ and $t\in[0,T]$ we have $\pi_I(t)^\top\lambda(t)=0$, whence
		\begin{align}
			V_\lambda^{v_{I,0}(\xi_I,\theta_R),\pi_I}(t)=V_I^{v_{I,0}(\xi_I,\theta_R),\pi_I}(t)\geq 0 \quad \Q\text{-a.s.}, \label{ProofI1}
		\end{align}
		where the equality follows from $\pi_I(t)^\top\lambda(t)=0$ and \eqref{Initial}. Using \eqref{ProofI1} for $t = T$, we obtain:
		\begin{align*}
				\mathbb{E}[U_I(V^{v_{I,0}(\xi_I,\theta_R),\pi_I}_\lambda(T)+\xi_I 	P(T))^-]=\mathbb{E}[U_I(V^{v_{I,0}(\xi_I,\theta_R),\pi_I}_I(T)+\xi_I P(T))^-]<\infty.
		\end{align*}
		Therefore, $(\pi_I,\xi_I)\in\Lambda_I^\lambda$. It follows that $\Lambda_I\subset\Lambda_I^\lambda$ and
		\begin{align}
			\sup_{\pi_I:(\pi_I,\xi_I)\in\Lambda_I}&\mathbb{E}[U_I(V^{v_{I,0}(\xi_I,\theta_R),\pi_I}_I(T)+\xi_I P(T))]\nonumber\\
			\stackrel{(a)}{=}&\sup_{\pi_I:(\pi_I,\xi_I)\in\Lambda_I}\mathbb{E}[U_I(V^{v_{I,0}(\xi_I,\theta_R),\pi_I}_\lambda(T)+\xi_I P(T))]\nonumber\\
			\stackrel{(b)}{\leq}& \sup_{\pi_I:(\pi_I,\xi_I)\in\Lambda_I^\lambda}\mathbb{E}[U_I(V^{v_{I,0}(\xi_I,\theta_R),\pi_I}_\lambda(T)+\xi_I P(T))],\label{ProofISol1}
		\end{align}
		where (a) follows from Equation \eqref{ProofI1}, and (b) from $\Lambda_I\subset\Lambda_I^{\lambda}$.	Let $\lambda^\ast\in\mathcal{D}$ and the optimal portfolio process $\pi^\ast_{\lambda^\ast}$ for the unconstrained optimization problem of the insurer $(\mathcal{P}_I^{\lambda^\ast})$ given a fixed $\xi_I$ such that $(\pi_{\lambda^\ast}^\ast,\xi_I)\in\Lambda^{\lambda^\ast}_I$ and $\pi_{\lambda^\ast}^\ast(t)\in K$ $\mathbb{Q}$-a.s. for all $t\in[0,T]$. Then:
		\begin{align}
			V_{\lambda^\ast}^{v_{I,0}(\xi_I,\theta_R),\pi^\ast_{\lambda^\ast}}(t)\stackrel{\eqref{ProofI1}}{=}V_I^{v_{I,0}(\xi_I,\theta_R),\pi_{\lambda^\ast}^\ast}(t) \quad \forall t\in[0,T]. \label{ProofI2}
		\end{align}
		Hence, $(\pi_{\lambda^\ast}^\ast,\xi_I)\in\Lambda_I$ and
		\begin{align}
			\sup_{\pi_I:(\pi_I,\xi_I)\in\Lambda_I^{\lambda^\ast}}&\mathbb{E}[U_I(V^{v_{I,0}(\xi_I,\theta_R),\pi_I}_{\lambda^\ast}(T)+\xi_I P(T))]\nonumber\\
			\stackrel{(a)}{=}&\mathbb{E}[U_I(V^{v_{I,0}(\xi_I,\theta_R),\pi_{\lambda^\ast}^\ast}_{\lambda^\ast}(T)+\xi_I P(T))]\label{ProofISol2a}\\
			\stackrel{(b)}{=}&\mathbb{E}[U_I(V^{v_{I,0}(\xi_I,\theta_R),\pi_{\lambda^\ast}^\ast}_I(T)+\xi_I P(T))]\nonumber\\
			\stackrel{(c)}{\leq}&\sup_{\pi_I:(\pi_I,\xi_I)\in\Lambda_I}\mathbb{E}[U_I(V^{v_{I,0}(\xi_I,\theta_R),\pi_I}_I(T)+\xi_I P(T))],\label{ProofISol2}
		\end{align}
		where (a) follows from the definition of $\pi_{\lambda^\ast}^\ast$, (b) from Equation \eqref{ProofI2}, and (c) from $(\pi_{\lambda^\ast}^\ast,\xi_I)\in\Lambda_I$.\\
		All in all, we have
		\begin{align*}
			\mathbb{E}[U_I(V^{v_{I,0}(\xi_I,\theta_R),\pi_{\lambda^\ast}^\ast}_{\lambda^\ast}(T)+\xi_I P(T))]\stackrel{(a)}{=}&\sup_{\pi_I:(\pi_I,\xi_I)\in\Lambda_I^{\lambda^\ast}}\mathbb{E}[U_I(V^{v_{I,0}(\xi_I,\theta_R),\pi_I}_{\lambda^\ast}(T)+\xi_I P(T))]\\
			\stackrel{(b)}{=}&\sup_{\pi_I:(\pi_I,\xi_I)\in\Lambda_I}\mathbb{E}[U_I(V^{v_{I,0}(\xi_I,\theta_R),\pi_I}_I(T)+\xi_I P(T))],
		\end{align*}
		where (a) follows from Equation \eqref{ProofISol2a}, and (b) from Inequalities \eqref{ProofISol1} and \eqref{ProofISol2}. Therefore, $\pi_{\lambda^\ast}^\ast$ is optimal for the optimization problem of the insurer \eqref{OPI} given a fixed $\xi_I$.\\
		
		\textit{Part 2.} Denote by $\xi_{\lambda^\ast}^\ast$ the optimal reinsurance amount in the unconstrained optimization problem of the insurer $(P_I^{\lambda^\ast})$ given $\pi_{\lambda^\ast}^\ast\in K$, i.e.:
		\begin{equation}\label{def:xi_nu_optimal}
		    {\color{black} \xi_{\lambda^\ast}^\ast \in \argmax_{\xi_{I} \in [0, \xi^{\max}]} \mathbb{E}\Bigl[U_I(V^{v_{I,0}(\xi_{I},\theta_R),\pi_{\lambda^\ast}^\ast(\xi_{I})}_{\lambda^\ast}(T) + \xi_{I} P(T))\Bigr].}
		\end{equation}
		Observe that:
		\begin{align}
			\{\xi_I|\;(\pi_{\lambda^\ast}^\ast,\xi_I)\in\Lambda_I^{\lambda^\ast}\}=[0,\xi^{\max}]=\{\xi_I|\;(\pi_{\lambda^\ast}^\ast,\xi_I)\in\Lambda_I\}. \label{ProofI3}
		\end{align}
		Then:
		\begin{align*}
			\mathbb{E}[U_I(V^{v_{I,0}(\xi_{\lambda^\ast}^\ast,\theta_R),\pi_{\lambda^\ast}^\ast}_I(T)+ &  \xi_{\lambda^\ast}^\ast P(T))]\stackrel{\eqref{ProofI2}}{=}\mathbb{E}[U_I(V^{v_{I,0}(\xi_{\lambda^\ast}^\ast,\theta_R),\pi_{\lambda^\ast}^\ast}_{\lambda^\ast}(T)+\xi_{\lambda^\ast}^\ast P(T))]\\
			\stackrel{\eqref{def:xi_nu_optimal}}{=}&\sup_{\xi_I:(\pi_{\lambda^\ast}^\ast,\xi_I)\in\Lambda^{\lambda^\ast}_I}\mathbb{E}[U_I(V^{v_{I,0}(\xi_I,\theta_R),\pi_{\lambda^\ast}^\ast}_{\lambda^\ast}(T)+\xi_I P(T))]\\
			\stackrel[\eqref{ProofI1}]{ \eqref{ProofI3}}{=}&\sup_{\xi_I:(\pi_{\lambda^\ast}^\ast,\xi_I)\in\Lambda_I}\mathbb{E}[U_I(V^{v_{I,0}(\xi_I,\theta_R),\pi_{\lambda^\ast}^\ast}_I(T)+\xi_I P(T))],
		\end{align*}
		Therefore, $\xi_{\lambda^\ast}^\ast$ is optimal for \eqref{OPI} given $\pi_{\lambda^\ast}^\ast\in K$
and we conclude that $(\pi_{\lambda^\ast}^\ast,\xi_{\lambda^\ast}^\ast)$ solves \eqref{OPI}.
\end{proof}

\begin{proof}[Proof of Proposition \ref{ThmOpR1}] 
    The proof is based on the proof of Theorem 4.1 in \citeA{korn1999optimal}.
    
    First, we define a new wealth process of the reinsurer with investment in the assets $S_0$, $S_1$ and $S_2$, and additionally in the put option $P$. We denote by $\xi(t)\equiv-\xi_I^\ast(\theta_R)$ the trading strategy w.r.t. $P$. The wealth process $V_R^{\bar{v}_{R,0}(\xi_I^\ast(\theta_R),\theta_R),(\varphi_R,\xi)}$ is given by
    \begin{align}
    	dV^{\bar{v}_{R,0}(\xi_I^\ast(\theta_R),\theta_R),(\varphi_R,\xi)}_R(t)=&\varphi_{R,0}(t)dS_0(t)+\varphi_{R,1}(t)dS_1(t)+\varphi_{R,2}(t)dS_2(t)+\xi(t)dP(t),\label{RWP2}\\
    	V^{\bar{v}_{R,0}(\xi_I^\ast(\theta_R),\theta_R),(\varphi_R,\xi)}_R(0)=&v_{R}+\xi_I^\ast(\theta_R)\theta_RP(0)=:\bar{v}_{R,0}(\xi_I^\ast(\theta_R),\theta_R).\nonumber
    \end{align}
    Note that $\bar{v}_{R,0}(\xi_I^\ast(\theta_R),\theta_R)$ is not equal to $v_{R,0}(\xi_I^\ast(\theta_R),\theta_R)$:
    	\begin{align*}
    		\bar{v}_{R,0}(\xi_I^\ast(\theta_R),\theta_R)=&v_{R}+\xi_I^\ast(\theta_R)\theta_RP(0)=v_{R}+\xi_I^\ast(\theta_R)(1+\theta_R)P(0)-\xi_I^\ast(\theta_R) P(0)\\
    		=&v_{R,0}(\xi_I^\ast(\theta_R),\theta_R)-\xi_I^\ast(\theta_R) P(0).
    	\end{align*}
    Since 
    \begin{align*}
    	V^{\bar{v}_{R,0}(\xi_I^\ast(\theta_R),\theta_R),(\varphi_R,\xi)}_R(T)=\varphi_{R,0}(T)S_0(T)+\varphi_{R,1}(T)S_1(T)+\varphi_{R,2}(T)S_2(T)+\xi(T)P(T)
    \end{align*}
    and the reinsurer has a short put position $-\xi_I^\ast(\theta_R)$, the optimization problem \eqref{OPRb} is equivalent to the optimization problem given by
    \begin{align}
    	\sup_{\varphi_R\in\Lambda_R^{\varphi_R|\theta_R,\xi(t)=-\xi_I^\ast(\theta_R)}}&\mathbb{E}[U_R(V^{\bar{v}_{R,0}(\xi_I^\ast(\theta_R),\theta_R),(\varphi_R,\xi)}_R(T))]\label{OPRc} \tag{$\mathcal{P}_R^{\varphi_R|\theta_R,\xi(t)=-\xi_I^\ast(\theta_R)}$}\\
    	&\text{s.t. }\xi(t)\equiv-\xi_I^\ast(\theta_R) \;\forall\tin. \nonumber
    \end{align}
    $\Lambda_R^{\varphi_R|\theta_R,\xi(t)=-\xi_I^\ast(\theta_R)}$ is the set of all admissible strategies $\varphi_R$ in the optimization problem \\
    \eqref{OPRc}:
    \begin{align*}
    	\Lambda_R^{\varphi_R|\theta_R,\xi(t)=-\xi_I^\ast(\theta_R)}:=\{\varphi_R\text{ self-financing}|\;&V^{\bar{v}_{R,0}(\xi_I^\ast(\theta_R),\theta_R),(\varphi_R,\xi)}_R(t)\geq0\;\mathbb{Q}\text{-a.s. }\forall t\in[0,T]\\
    	&\text{ and }\mathbb{E}[U_R(V^{\bar{v}_{R,0}(\xi_I^\ast(\theta_R),\theta_R),(\varphi_R,\xi)}_R(T))^-]<\infty\}.
    \end{align*}
    
	As shown on page 6 in \cite{Escobar}, the dynamics of the put is given by
	\begin{align}
		dP(t)=&[V^{v_I,\pi_B}(t)(\Phi(d_+)-1)\pi^{CM}(\mu_2-r)+rP(t)]dt \nonumber\\
		&+V^{v_I,\pi_B}(t)(\Phi(d_+)-1)\sigma_2\pi^{CM}(\rho dW_1(t)+\sqrt{1-\rho^2}dW_2(t)),\nonumber
    \end{align}
    and the trading strategy $\psi(t)=(\psi_{0}(t),\psi_{1}(t),\psi_{2}(t))^\top,\,\tin,$ that replicates the put is
    \begin{align}\label{RepSt}
    	\psi(t) := \bigg(\frac{P(t)-\pi^{CM}V^{v_I,\pi_B}(t)(\Phi(d_+)-1)}{S_0(t)},0,\frac{\pi^{CM}V^{v_I,\pi_B}(t)(\Phi(d_+)-1)}{S_2(t)}\bigg),
    \end{align}
    as shown in Lemma 1.1 in supplementary materials of this paper.
    
    Hence, for the wealth process $V^{\bar{v}_{R,0}(\xi_I^\ast(\theta_R),\theta_R),(\varphi_R,\xi)}_R$ it follows that:
	\begin{align*}
		V^{\bar{v}_{R,0}(\xi_I^\ast(\theta_R),\theta_R),(\varphi_R,\xi)}_R&(t)=\varphi_{R,0}(t)S_0(t)+\varphi_{R,1}(t)S_1(t)+\varphi_{R,2}(t)S_2(t)+\xi(t)P(t)\\
		\overset{\eqref{RepSt}}{=}&\varphi_{R,0}(t)S_0(t)+\varphi_{R,1}(t)S_1(t)+\varphi_{R,2}(t)S_2(t)
		+\xi(t)\psi_0(t)S_0(t)+\xi(t)\psi_2(t)S_2(t)\\
		=&(\varphi_{R,0}(t)+\xi(t)\psi_0(t))S_0(t)+\varphi_{R,1}(t)S_1(t)+(\varphi_{R,2}(t)+\xi(t)\psi_2(t))S_2(t)\\
		=&:\zeta_{R,0}(t)S_0(t)+\zeta_{R,1}(t)S_1(t)+\zeta_{R,2}(t)S_2(t),
	\end{align*}
    where
	\begin{align}
		\zeta_R(t)&=(\zeta_{R,0}(t),\zeta_{R,1}(t),\zeta_{R,2}(t))^\top
		:=(\varphi_{R,0}(t)+\xi(t)\psi_0(t),\varphi_{R,1}(t),\varphi_{R,2}(t)+\xi(t)\psi_2(t))^\top\label{zeta}
	\end{align} 
	is a self-financing trading strategy. Hence, the dynamics is given by
	\begin{align*}
		dV^{\bar{v}_{R,0}(\xi_I^\ast(\theta_R),\theta_R),(\varphi_R,\xi)}_R(t)=&\zeta_{R,0}(t)dS_0(t)+\zeta_{R,1}(t)dS_1(t)+\zeta_{R,2}(t)dS_2(t).
	\end{align*}
	The wealth process $V^{\bar{v}_{R,0}(\xi_I^\ast(\theta_R),\theta_R),(\varphi_R,\xi)}_R$ equals the wealth process of the reinsurer \textcolor{black}{w.r.t.} the trading strategy $\zeta_R$ (i.e., only an investment in the assets $S_0$, $S_1$ and $S_2$ without an investment in the put option $P$). If the trading strategy $\varphi_R$ is admissible for the optimization problem \eqref{OPRc}, then the trading strategy $\zeta_R$ is admissible to the portfolio optimization problem $(P_R^{\zeta_R|\theta_R,\xi(t)=0})$:
	\begin{align*}
		V^{\bar{v}_{R,0}(\xi_I^\ast(\theta_R),\theta_R),(\zeta_R,0)}(t)&=V^{\bar{v}_{R,0}(\xi_I^\ast(\theta_R),\theta_R),(\varphi_R,\xi)}_R(t)\geq0\;\forall t\in[0,T]\\
		&\text{and}\\
		\mathbb{E}[U_R(V^{\bar{v}_{R,0}(\xi_I^\ast(\theta_R),\theta_R),(\zeta_R,0)}(T))^-]&=\mathbb{E}[U_R(V^{\bar{v}_{R,0}(\xi_I^\ast(\theta_R),\theta_R),(\varphi_R,\xi)}_R(T))^-]<\infty. 
	\end{align*}
	By the standard Martingale method (i.e., the investor/reinsurer has only an investment in the assets $S_0$, $S_1$ and $S_2$), there exists an optimal trading strategy $\zeta_R^\ast$ to the optimization problem $(P_R^{\zeta_R|\theta_R,\xi(t)=0})$ and the optimal terminal wealth $V^{\bar{v}_{R,0}(\xi_I^\ast(\theta_R),\theta_R),(\zeta_R^\ast,0)}_R$ is given by 
	\begin{align*}
		V^{\bar{v}_{R,0}(\xi_I^\ast(\theta_R),\theta_R),(\zeta_R^\ast,0)}_R(T)=I_R(y^\ast_R(\theta_R)\widetilde{Z}(T)),
	\end{align*}
	where $y^\ast_R\equiv y^\ast_R(\theta_R)$ is determined by the budget constraint
	\begin{align*}
		\mathbb{E}[\widetilde{Z}(T)I_R(y^\ast_R\widetilde{Z}(T))]=v_R+\xi_I^\ast(\theta_R)\theta_RP(0).
	\end{align*}
	Therefore, there exists an optimal trading strategy $\varphi_R^\ast$ for the optimization problem\\ \eqref{OPRc} and the optimal wealth process $V^{\bar{v}_{R,0}(\xi_I^\ast(\theta_R),\theta_R),(\varphi_R^\ast,\xi)}_R$ is given by
	\begin{align*}
		V^{\bar{v}_{R,0}(\xi_I^\ast(\theta_R),\theta_R),(\varphi_R^\ast,\xi)}_R(T)&=V^{\bar{v}_{R,0}(\xi_I^\ast(\theta_R),\theta_R),(\zeta_R^\ast,0)}_R(T)=I_R(y^\ast_R(\theta_R)\widetilde{Z}(T)).
	\end{align*}
	By \eqref{zeta}, we get for the optimal trading strategy $\varphi_R^\ast$ the following representation:
	\begin{align}
		\varphi_{R,1}^\ast(t)=&\zeta_{R,1}^\ast(t)\label{varphiR1}\\
		\varphi_{R,2}^\ast(t)=&\zeta_{R,2}^\ast(t)-\xi(t)\psi_2(t) 
		=\zeta_{R,2}^\ast(t)+\xi_I^\ast(\theta_R)\psi_2(t) \label{varphiR2}\\
		\varphi_{R,0}^\ast(t)=&\zeta_{0R}^\ast(t)-\xi(t)\psi_0(t)
		=\frac{V^{\bar{v}_{R,0}(\xi_I^\ast(\theta_R),\theta_R),(\varphi_R^\ast,\xi)}(t)-\sum_{i=1}^2\varphi^\ast_{R,i}(t)S_i(t)+\xi_I^\ast(\theta_R)P(t)}{S_0(t)}.\label{varphiR0}
	\end{align}
	Since the optimization problems \eqref{OPRb} and \eqref{OPRc} are equivalent, it holds that there exists an optimal trading strategy $\varphi_R^\ast$ to the optimization problem \eqref{OPRb} given by \eqref{varphiR0}, \eqref{varphiR1} and \eqref{varphiR2}. The optimal terminal wealth of the reinsurer is given by
	\begin{align*}
	    V^{v_{R,0}(\xi_I^\ast(\theta_R),\theta_R),\varphi_R^\ast}_R(T)&=V^{\bar{v}_{R,0}(\xi_I^\ast(\theta_R),\theta_R),(\varphi_R^\ast,\xi)}_R(T)+\xi_I^\ast(\theta_R)P(T)
	    =I_R(y^\ast_R(\theta_R)\widetilde{Z}(T))+\xi_I^\ast(\theta_R)P(T)
	\end{align*}
	and the optimal wealth process by
	\begin{align*}
	    V^{v_{R,0}(\xi_I^\ast(\theta_R),\theta_R),\varphi_R^\ast}_R(t)=&\widetilde{Z}(t)^{-1}\mathbb{E}\left[\widetilde{Z}(T)(V^{\bar{v}_{R,0}(\xi_I^\ast(\theta_R),\theta_R),(\varphi_R^\ast,\xi)}_R(T)+\xi_I^\ast(\theta_R)P(T))|\mathcal{F}_t\right]\\
	    =&V^{\bar{v}_{R,0}(\xi_I^\ast(\theta_R),\theta_R),(\varphi_R^\ast,\xi)}_R(t)+\xi_I^\ast(\theta_R)P(t).
	\end{align*}
	Hence, it follows for the optimal trading strategy $\varphi_R^\ast$ from \eqref{varphiR0}, \eqref{varphiR1} and \eqref{varphiR2}
	\begin{align*}
	    \varphi_{R,1}^\ast(t)=&\zeta_{R,1}^\ast(t),\;\\
		\varphi_{R,2}^\ast(t)=&\zeta_{R,2}^\ast(t)+\xi_I^\ast(\theta_R)\psi_2(t)\\ 
		\varphi_{R,0}^\ast(t)=&\frac{V^{v_{R,0}(\xi_I^\ast(\theta_R),\theta_R),\varphi_R^\ast}(t)-\sum_{i=1}^2\varphi^\ast_{R,i}(t)S_i(t)}{S_0(t)}.
	\end{align*}
\end{proof}

\begin{proof}[Proof of Proposition \ref{OpSt}]
	We define the function $\kappa:[0,\theta^{\max}]\to\mathbb{R}$ by
	\begin{align*}
		\kappa(\theta_R):=&\mathbb{E}[U_R(V^{v_{R,0}(\xi_I^\ast(\theta_R),\theta_R),\varphi^\ast_R}_R(T)-\xi_I^\ast(\theta_R)P(T))]
		=\mathbb{E}[U_R(I_R(y^\ast_R(\theta_R)\widetilde{Z}(T)))],
	\end{align*}
	where $y^\ast_R\equiv y^\ast_R(\theta_R)$ is the Lagrange multiplier determined by the budget constraint
	\begin{align*}
		\mathbb{E}[\widetilde{Z}(t)I_R(y^\ast_R\widetilde{Z}(T))]=v_{R,0}+\xi_I^\ast(\theta_R)\theta_RP(0).
	\end{align*}
	We show that the map $\theta_R\mapsto\kappa(\theta_R)$ is continuous. As per Proposition \ref{SolIOP2}, $\xi_I^\ast(\cdot)$ is given by
	\begin{align*}
		\xi_I^\ast(\theta)=\underset{\xi\in[0,\xi^{\max}(\theta)]}{\text{arg max }}\nu(\xi,\theta)
	\end{align*}
	with $\xi^{\max}(\theta)=\min\{\bar{\xi},\frac{v_I}{(1+\theta)P(0)}\}$ and $\nu(\xi,\theta):=\mathbb{E}[U_I(\max\{I_I(y_I^\ast(\xi,\theta)\widetilde{Z}_{\lambda^\ast}(T)),\xi P(T)\})]$,
	where the Lagrange multiplier $y^\ast_I\equiv y_I^\ast(\xi,\theta)$ is given by the budget constraint of the insurer
	\begin{align*}
		\mathbb{E}[\widetilde{Z}_{\lambda^\ast}(T)\hat{I}_I(y_I^\ast\widetilde{Z}_{\lambda^\ast}(T))]=v_I-\xi(1+\theta_R)P(0).
	\end{align*}
	The Lagrange multiplier $y_I^\ast(\xi,\theta)$ is continuous \textcolor{black}{w.r.t.} $\xi$ and $\theta$, since $\hat{I}_I$ is a continuous function and $v_I-\xi(1+\theta_R)P(0)$ is continuous \textcolor{black}{w.r.t.} $\xi$ and $\theta$. Furthermore, we have that the functions $I_I$, $\max$ and $U_I$ are continuous. Therefore, the function $\nu$ is continuous \textcolor{black}{w.r.t.} $\xi$ and $\theta$. In addition $\nu$ is strictly concave \textcolor{black}{w.r.t.} $\xi$ (by Lemma A.3 in \citeA{desmettre2016optimal}), if $U_I$ is strictly concave. Since $U_I$ is a utility function, it is strictly concave. Furthermore, the map $\theta\mapsto\xi^{\max}(\theta)$ is continuous. By the Berges Maximum Theorem we conclude that the map $\theta\mapsto\xi_I^\ast(\theta)$ is continuous too.
	
	Next, we argue that $\theta\mapsto\kappa(\theta)$ is continuous. The Lagrange multiplier $y_R^\ast(\theta)$ is continuous, since $I_R$ is a continuous function and $v_{R,0}+\xi_I^\ast(\theta)\theta P(0)$ is continuous \textcolor{black}{w.r.t.} $\theta$. Furthermore, we have that the functions $I_R$ and $U_R$ are continuous. Therefore, the function $\kappa$ is continuous \textcolor{black}{w.r.t.} $\theta$. Since $[0,\theta^{\max}]$ is compact, it follows from the Weierstrass Theorem that there exists $\theta_R^\ast$ such that $\theta_R^\ast=\underset{\theta_R\in[0,\theta^{\max}]}{\text{arg max }}\kappa(\theta_R)$.
\end{proof}

\begin{proof}[Proof of Proposition \ref{SE}]
    We show that Conditions \eqref{eq:SE_condition_1} and \eqref{eq:SE_condition_2} are fulfilled:
        
        By Propositions \ref{SolIOP2} and \ref{SolOPI}, the optimal response $(\pi_I^\ast(\cdot|\theta_R^\ast),\xi_I^\ast(\theta_R^\ast))$ solves the optimization problem of the insurer. Thus, \eqref{eq:SE_condition_1} holds by definition.
        
        By Propositions \ref{ThmOpR1} and \ref{OpSt}, the optimal strategy $(\pi_R^\ast(\cdot),\theta_R^\ast)$ solves the optimization problem of the reinsurer. If the insurer has several replies to the reinsurer's strategy $(\pi_R^\ast,\theta_R^\ast)$, then we need to find $(\pi_I^\ast(\cdot|\theta_R^\ast),\xi_I^\ast(\theta_R^\ast))$ from the set of best replies of the insurer, i.e.,
        \begin{align}
            \mathbb{E}[U_R(V_R^{v_{R,0}(\xi_I,\theta_R^\ast),\pi_R^\ast}(T)-\xi_I P(T))]\leq \mathbb{E}[U_R(V_R^{v_{R,0}(\xi_I^\ast(\theta_R^\ast),\theta_R^\ast),\pi_R^\ast}(T)-\xi_I^\ast(\theta_R^\ast) P(T))] \label{Inequ}
        \end{align}
        for all $\xi_I$ in the set of best replies of the insurer. Note that the reinsurer's value function does not depend on $\pi_I$. Thus, we focus only on $\xi_I$ and show now that the reinsurer's value function is increasing in $\xi_I$. For the total terminal wealth of the reinsurer it holds for $\xi_I\in[0,\xi^{\max}]$
        \begin{align}
            V_R^{v_{R,0}(\xi_I,\theta_R^\ast),\pi_R^\ast}(T)-\xi_I P(T)=I_R(y_R^\ast\widetilde{Z}(T)), \label{TotalTerminalWealthR}
        \end{align}
        where $y_R^\ast$ solves the budget constraint
        \begin{align}
            \mathbb{E}[\widetilde{Z}(T)I_R(y_R^\ast\widetilde{Z}(T))]=v_R+\xi_I\theta_R^\ast P(0). \label{LagrangeR}
        \end{align}
        It holds
        \begin{align*}
            &\mathbb{E}\left[U_R(V_R^{v_{R,0}(\xi_I,\theta_R^\ast),\pi_R^\ast}(T)-\xi_I P(T))\right]\text{ increasing w.r.t. }\xi_I\\
            \overset{(a)}{\Leftrightarrow}\;&V_R^{v_{R,0}(\xi_I,\theta_R^\ast),\pi_R^\ast}(T)-\xi_I P(T)\text{ increasing w.r.t. }\xi_I\;\overset{(b)}{\Leftrightarrow}\;y_R^\ast\text{ decreasing w.r.t. }\xi_I
        \end{align*}
        where (a) follows from the fact that $U_R$ is an increasing function and (b) from \eqref{TotalTerminalWealthR} and the fact that $I_R$ is a decreasing function due to the strict concavity of $U_R$. By \eqref{LagrangeR}, the Lagrange multiplier decreases if and only if $\xi_I$ increases, since $I_R$ is a decreasing function. Hence, the value function of the reinsurer increases if $\xi_I$ increases. Therefore, \eqref{Inequ} is fulfilled if the optimal reinsurance strategy of the insurer in the Stackelberg equilibrium is given by
        \begin{align*}
            \xi_I^\ast(\theta_R^\ast)=\max\{\xi_I^\ast|\;&\mathbb{E}[U_I(V_I^{v_{I,0}(\xi_I^\ast,\theta_R^\ast),\pi_I^\ast}(T)+\xi_I^\ast P(T))]\nonumber\\
    		&=\sup_{\xi_I:(\xi_I,\pi_I)\in\Lambda_I}\mathbb{E}[U_I(V_I^{v_{I,0}(\xi_I,\theta_R^\ast),\pi_I^\ast}(T)+\xi_IP(T))]\}.
        \end{align*}
\end{proof}

\section{Proofs of example results}\label{app:proofs_example}
    \begin{proof}[Proof of Corollary \ref{CorPowerI}]
        For $\lambda\in\mathcal{D}$, the Merton portfolio process $\pi^M_\lambda$ in the auxiliary market $\mathcal{M}_\lambda$ is given by
        \begin{align*}
            \pi^M_\lambda(b_I)=\frac{1}{1-b_I}(\sigma\sigma^\top)^{-1}(\mu+\lambda(t)-r\mathbbm{1}).
        \end{align*}
        By Proposition \ref{SolIOP2}, the insurer's optimal portfolio process $\pi_\lambda^\ast$ in the auxiliary market $\mathcal{M}_\lambda$ satisfies
        \begin{align*}
            \pi_\lambda^\ast(t)V_\lambda^\ast(t)=\pi_\lambda^M(b_I)(V_\lambda^\ast(t)+\xi_\lambda^\ast \widetilde{Z}_\lambda(t)^{-1}\mathbb{E}[\widetilde{Z}_\lambda(T)P(T){1}_{\{V^\ast_\lambda(T)>0\}}|\mathcal{F}_t]),
        \end{align*}
        where $V^\ast_\lambda$ is the insurer's optimal wealth process in $\mathcal{M}_\lambda$. Since $\xi^{\max}<\frac{v_I}{P(0)}$ and $\xi_\lambda^\ast \leq \xi^{\max}$, it holds $v_{I,0}^\lambda(\xi_\lambda^\ast,\theta_R)>0$ and, therefore, $V^\ast_\lambda(t)>0$ for all $\tin$. Therefore:
        \begin{align*}
            \pi_\lambda^\ast(t)=\pi_\lambda^M(b_I)\frac{V_\lambda^\ast(t)+\xi_\lambda^\ast \widetilde{Z}_\lambda(t)^{-1}\mathbb{E}[\widetilde{Z}_\lambda(T)P(T)|\mathcal{F}_t]}{V_\lambda^\ast(t)}.
        \end{align*}
        We find now $\lambda^\ast\in\mathcal{D}$ such that $\pi_{\lambda^\ast}^\ast(t)\in K=\mathbb{R}\times\{0\}$ for all $\tin$.
        
        Since $\pi_{\lambda}^\ast$ is given by $\pi_\lambda^M$ multiplied by a random variable bigger than zero, it is sufficient to find $\lambda^\ast\in\mathcal{D}$ such that $\pi_{\lambda^\ast}^M\in K$. Hence,
        \begin{align*}
            \pi_{\lambda^\ast}^M(b_I)&\in K\Leftrightarrow\frac{1}{1-b_I}(\sigma\sigma^\top)^{-1}\begin{pmatrix}\mu_1-r\\ \mu_2+\lambda_2^\ast(t)-r\end{pmatrix}\in K.
        \end{align*}
        Since $\lambda^\ast\in\mathcal{D}$ has to hold we have $\lambda^\ast_1(t)=0$. It follows
        \begin{align*}
            \lambda^\ast(t)\equiv\lambda^\ast=\begin{pmatrix}0 \\ \frac{\sigma_2\rho}{\sigma_1}(\mu_1-r)-\mu_2+r \end{pmatrix}.
        \end{align*}
        
        From Lemma 1.2 in supplementary materials, we get that for a power-utility function:
        \begin{equation}\label{eq:nu_power_utility}
            \nu(\xi)=\mathbb{E}\left[\frac{1}{b_I}(y_I^\ast(\xi)\widetilde{Z}_{\lambda^\ast}(T))^{\frac{b_I}{b_I-1}}\right],
        \end{equation}
        where the Lagrange multiplier $y^\ast_I(\xi)$ is determined by the budget constraint
        \begin{align*}
            \mathbb{E}\left[\widetilde{Z}_{\lambda^\ast}(T)\left((y^\ast_I(\xi)\widetilde{Z}_{\lambda^\ast}(T))^{\frac{1}{b_I-1}}-\xi P(T)\right)\right]=v_I-\xi(1+\theta_R)P(0).
        \end{align*}
        Hence, we obtain
        \begin{equation}\label{eq:yStar_power_utility}
            y^\ast_I(\xi)=(v_I-\xi(1+\theta_R)P(0)+\xi\mathbb{E}[\widetilde{Z}_{\lambda^\ast}(T)P(T)])^{b_I-1}\mathbb{E}[\widetilde{Z}_{\lambda^\ast}(T)^\frac{b_I}{b_I-1}]^{1-b_I}
        \end{equation}
        and, therefore,
        \begin{align*}
            \nu(\xi)
            &\stackrel[\eqref{eq:yStar_power_utility}]{\eqref{eq:nu_power_utility}}{=}\frac{1}{b_I}(v_I-\xi(1+\theta_R)P(0)+\xi\mathbb{E}[\widetilde{Z}_{\lambda^\ast}(T)P(T)])^{b_I}\mathbb{E}\left[\widetilde{Z}_{\lambda^\ast}(T)^\frac{b_I}{b_I-1}\right]^{1-b_I}.
        \end{align*}
        It follows for the optimal reinsurance strategy $\xi_I^\ast=\xi_I^\ast(\theta_R)$ that:
    	\begin{align}
    		\xi_I^\ast=&\underset{\xi_I\in[0,\bar{\xi}]}{\text{arg max }}\bigg(\frac{1}{b_I}(v_I-\xi_I(1+\theta_R)P(0)+\xi_I\mathbb{E}[\widetilde{Z}_{\lambda^\ast}(T)P(T)])^{b_I}\mathbb{E}\Big[\widetilde{Z}_{\lambda^\ast}(T)^{\frac{b_I}{b_I-1}}\Big]^{1-b_I}\bigg) \nonumber \\
    		=&\begin{cases}
    			\bar{\xi},&\text{if }-(1+\theta_R)P(0)+\mathbb{E}[\widetilde{Z}_{\lambda^\ast}(T)P(T)]>0,\\
    			\text{any } \tilde{\xi}\in[0,\bar{\xi}],&\text{if }-(1+\theta_R)P(0)+\mathbb{E}[\widetilde{Z}_{\lambda^\ast}(T)P(T)]=0,\\
    			0,&\text{if }-(1+\theta_R)P(0)+\mathbb{E}[\widetilde{Z}_{\lambda^\ast}(T)P(T)]<0.
    		\end{cases} \nonumber\\
    		=&\begin{cases}
    			\bar{\xi},&\text{if } \theta_R < \frac{\mathbb{E}[\widetilde{Z}_{\lambda^\ast}(T)P(T)]-P(0)}{P(0)},\\
    			\text{any } \tilde{\xi}\in[0,\bar{\xi}],&\text{if } \theta_R = \frac{\mathbb{E}[\widetilde{Z}_{\lambda^\ast}(T)P(T)]-P(0)}{P(0)},\\
    			0,&\text{if } \theta_R > \frac{\mathbb{E}[\widetilde{Z}_{\lambda^\ast}(T)P(T)]-P(0)}{P(0)}.
    		\end{cases} \label{eq:xi_optimal}
    	\end{align}
    \end{proof}

    \begin{proof}[Proof of Corollary \ref{CorPowerR}]
        Due to Corollary \ref{CorPowerI}, the optimal reinsurance strategy $\xi_I^\ast(\theta_R)$ is given by \eqref{eq:xi_optimal}.
        From Proposition \ref{OpSt} and \eqref{TotalTerminalWealthR} it follows for the optimal safety loading $\theta_R^\ast$ that
    	\begin{align*}
    		\theta_R^\ast=&\underset{\theta_R\in[0,\theta^{\max}]}{\text{arg max }}\mathbb{E}\left[\frac{1}{b_R}(y^\ast_R(\theta_R)\widetilde{Z}(T))^{\frac{b_R}{b_R-1}}\right]\\
    		\overset{\eqref{LagrangeMultiplierR}}{=}&\underset{\theta_R\in[0,\theta^{\max}]}{\text{arg max }}\frac{1}{b_R}(v_{R}+\xi_I^\ast(\theta_R)\theta_RP(0))^{b_R}\mathbb{E}\left[\widetilde{Z}(T)^{{\frac{b_R}{b_R-1}}}\right]^{1-b_R}
    		=\underset{\theta_R\in[0,\theta^{\max}]}{\text{arg max }}\xi_I^\ast(\theta_R)\theta_R.
    	\end{align*}
    	Hence, the reinsurer chooses the largest $\theta_R\in[0,\theta^{\max}]$ such that $\xi_I^\ast(\theta_R)=\bar{\xi}$, i.e.,
    	\begin{align*}
    		\theta_R^\ast=\min\bigg\{\frac{\mathbb{E}[\widetilde{Z}_{\lambda^\ast}(T)P(T)]-P(0)}{P(0)},\theta^{\max}\bigg\}.
    	\end{align*}
    	By the definition of the Stackelberg equilibrium (see Conditions \eqref{eq:SE_condition_1} and \eqref{eq:SE_condition_2}), the optimal reinsurance strategy of the insurer is given by $\xi_I^\ast(\theta_R^\ast)=\bar{\xi}$ (i.e., the insurer chooses the response to the optimal safety loading, which is the best for the reinsurer).
    \end{proof}

\begin{center}
\Huge Supplementary materials
\end{center}
Section \ref{sm:auxiliary_lemmata} contains two lemmas that we use to prove the main results of the paper. Section \ref{sm_sec:SA_figures} contains plots that complement the numerical studies section from the paper. In particular, Subsection \ref{subsec:SA_plots} contains plots related to the sensitivity analysis of the Stackelberg equilibrium w.r.t. RRA coefficients, $r$, $T$, $G_T$, whereas the description of these plots was included in the article in Section 5.2. In Subsection \ref{subsec:dynamic_investment_strategies}, we provide plots that illustrate the dynamic relative-portfolio process of each company over the entire investment horizon. In Section \ref{sm:reinsurer_incentive} we prove that the reinsurance company is better off when it sells reinsurance with a discounted safety loading $\theta(\alpha) = \alpha \cdot \theta_R^\ast$ in comparison to not selling reinsurance at all. We also illustrate the corresponding monetary benefit with the help of the wealth-equivalent utility gain, defined in Section 5.3. of the paper. In Section \ref{sm_sec:other_utilities}, we derive the insurer's optimal strategies in the Stackelberg equilibrium, when the insurance company has a logarithmic-utility and a HARA-utility function. Finally, in Section \ref{sm_sec:actuarial_risks} we discuss ways of adding mortality and surrender risks to the model and the corresponding potential challenges to the derivation of the Stackelberg equilibrium.  

\section{Auxiliary lemmas}\label{sm:auxiliary_lemmata}
\begin{lemma}[Put-replicating trading strategies] \label{RepStLem}
The replicating strategy $\psi(t)$, $t\in[0,T]$, of the put option $P$ is given by
	\begin{align*}
		\psi(t)=\bigg(\frac{P(t)-\pi^{CM}V^{v_I,\pi_B}(t)(\Phi(d_+)-1)}{S_0(t)},0,\frac{\pi^{CM}V^{v_I,\pi_B}(t)(\Phi(d_+)-1)}{S_2(t)}\bigg)^\top,
	\end{align*}
	where
	\begin{align*}
		d_+:=d_+(t,V^{v_I,\pi_B}(t)):=\frac{\ln\big(\frac{V^{v_I,\pi_B}(t)}{G_T}\big)+\big(r+\frac{1}{2}(\sigma_2\pi^{CM})^2\big)(T-t)}{\pi^{CM}\sigma_2\sqrt{T-t}}.
	\end{align*}
	The dynamics of the put option $P$ is given by
	\begin{align*}
		dP(t)=&[V^{v_I,\pi_B}(t)(\Phi(d_+)-1)\pi^{CM}(\mu_2-r)+rP(t)]dt \\ 
		&+V^{v_I,\pi_B}(t)(\Phi(d_+)-1)\sigma_2\pi^{CM}(\rho dW_1(t)+\sqrt{1-\rho^2}dW_2(t)). \nonumber
    \end{align*}
\end{lemma}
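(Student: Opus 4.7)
My plan is to first derive the dynamics of $P(t)$ by applying It\^o's formula to the Black--Scholes pricing function, and then read off the replicating strategy by matching diffusion coefficients asset-by-asset. The key observation is that the benchmark portfolio $V^{v_I,\pi_B}$ is a one-dimensional geometric Brownian motion driven by the single Brownian combination $\rho W_1 + \sqrt{1-\rho^2}W_2$ with constant effective volatility $\sigma_v:=\sigma_2\pi^{CM}$. Hence the classical Black--Scholes put formula applies and gives
$$
P(t) = p\bigl(t, V^{v_I,\pi_B}(t)\bigr) = G_T e^{-r(T-t)}\Phi(-d_-) + V^{v_I,\pi_B}(t)\bigl(\Phi(d_+)-1\bigr),
$$
with $d_-:=d_+-\sigma_v\sqrt{T-t}$, the standard delta $\partial_v p(t,v) = \Phi(d_+)-1$, and $p$ satisfying $\partial_t p + rv\,\partial_v p + \tfrac12 \sigma_v^2 v^2 \partial_{vv}p = rp$. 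Applying It\^o's formula to $p(t, V^{v_I,\pi_B}(t))$ and eliminating $\partial_t p$ via this PDE cancels the second-order term and the $r v\,\partial_v p$ term, leaving $rP(t)\,dt + \partial_v p\cdot V^{v_I,\pi_B}(t)\,\pi^{CM}(\mu_2-r)\,dt$ in the drift and $\partial_v p \cdot V^{v_I,\pi_B}(t)\,\sigma_v\bigl(\rho\,dW_1 + \sqrt{1-\rho^2}\,dW_2\bigr)$ in the diffusion, which is exactly the claimed SDE for $P$.

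To identify the replicating strategy $\psi=(\psi_0,\psi_1,\psi_2)^\top$, I would impose the replication identity $P(t) = \psi_0(t)S_0(t) + \psi_1(t)S_1(t) + \psi_2(t)S_2(t)$ together with the self-financing dynamics $dP(t) = \psi_0\,dS_0 + \psi_1\,dS_1 + \psi_2\,dS_2$, and then match the $dW_1$- and $dW_2$-coefficients separately. Matching the $dW_2$ coefficient forces $\psi_2(t)S_2(t)\sigma_2\sqrt{1-\rho^2} = V^{v_I,\pi_B}(t)(\Phi(d_+)-1)\pi^{CM}\sigma_2\sqrt{1-\rho^2}$, which yields the stated expression for $\psi_2$. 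Substituting this into the $dW_1$-coefficient equation $\psi_1(t)S_1(t)\sigma_1 + \psi_2(t)S_2(t)\sigma_2\rho = V^{v_I,\pi_B}(t)(\Phi(d_+)-1)\pi^{CM}\sigma_2\rho$, the $\psi_2$-contribution absorbs the right-hand side exactly, forcing $\psi_1(t)\equiv 0$. This is intuitive, since $S_1$ carries the independent source $W_1$ to which the put payoff is insensitive. The wealth identity then gives $\psi_0(t) = (P(t) - \psi_2(t)S_2(t))/S_0(t)$ as claimed. Matching of the drift is automatic once the diffusion coefficients and the value process coincide, by uniqueness of the martingale representation of the discounted claim under $\mathbb{Q}$.

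I do not expect any serious obstacle beyond careful It\^o bookkeeping; the crucial conceptual point is that although the global market has two Brownian drivers, the put's randomness lives in the one-dimensional subspace spanned by $V^{v_I,\pi_B}$, so only $S_0$ and $S_2$ are required to replicate it, and the resulting holding in $S_2$ is exactly the Black--Scholes delta $\Phi(d_+)-1$ scaled by the benchmark's $S_2$-exposure $\pi^{CM}V^{v_I,\pi_B}(t)/S_2(t)$.
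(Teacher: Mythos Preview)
Your proposal is correct. Both you and the paper arrive at the result via a delta-hedging argument, but the technical execution differs. The paper first writes $V^{v_I,\pi_B}(t)$ explicitly as a function of $S_2(t)$, namely $V^{v_I,\pi_B}(t)=c(t)\,S_2(t)^{\pi^{CM}}$ for a deterministic $c(t)$, and then computes $\partial P/\partial S_1=0$ and $\partial P/\partial S_2$ by the chain rule applied to the Black--Scholes formula; the replicating strategy is read off from these stock-deltas, and the dynamics of $P$ are quoted from an external reference. You instead apply It\^o's formula directly to $p(t,V^{v_I,\pi_B}(t))$, use the Black--Scholes PDE to collapse the drift, and then identify $(\psi_0,\psi_1,\psi_2)$ by matching the $dW_1$- and $dW_2$-coefficients in the self-financing condition. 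Your route is more self-contained (it derives the SDE for $P$ rather than citing it) and avoids the algebraic detour through the $V$-$S_2$ relation; the paper's route makes the chain-rule origin of the $\pi^{CM}V^{v_I,\pi_B}/S_2$ factor slightly more transparent. Either way the content is the same.
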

\begin{proof}
    The price of the put option $P$ at time $\tin$ is given by
	\begin{align}
		P(t)=&\widetilde{Z}(t)^{-1}\mathbb{E}[\widetilde{Z}(T)P(T)|\mathcal{F}_t]\nonumber\\
		=&\exp \rBrackets{-r(T - t)}G_T\Phi(-d_1(t,V^{v_I,\pi_B}(t)))-V^{v_I,\pi_B}(t)\Phi(-d_2(t,V^{v_I,\pi_B}(t))),\nonumber 
	\end{align}
	where $\Phi$ is the cumulative distribution function of the standard normal distribution and
	\begin{align*}
		d_1(t,V^{v_I,\pi_B}(t))&:=\frac{\ln\big(\frac{V^{v_I,\pi_B}(t)}{G_T}\big)+\big(r-\frac{1}{2}(\pi^{CM}\sigma_2)^2\big)(T-t)}{\pi^{CM}\sigma_2\sqrt{T-t}},\\
		d_2(t,V^{v_I,\pi_B}(t))&:=d_1(t,V^{v_I,\pi_B}(t))+\pi^{CM}\sigma_2\sqrt{T-t}.
	\end{align*}
	The stock price $S_2$ and the constant-mix portfolio value $V^{v_I,\pi_B}$ are given by
	\begin{align*}
		S_2(t)=&S_2(0)\exp\Big(\Big(\mu_2-\frac{1}{2}\sigma_2^2\Big)t+\sigma_2(\rho W_1(t)+\sqrt{1-\rho^2}W_2(t))\Big),\\
		V^{v_I,\pi_B}(t)=&v_I\exp\Big(\Big(r+\pi^{CM}(\mu_2-r)-\frac{1}{2}(\sigma_2\pi^{CM})^2\Big)t+\sigma_2\pi^{CM}(\rho W_1(t)+\sqrt{1-\rho^2}W_2(t))\Big).
	\end{align*}
	Hence, the relation between $S_2$ and $V^{v_I,\pi_B}$ is given by
	\begin{align*}
		V^{v_I,\pi_B}(t)=\frac{v_I}{S_2(0)}\exp\Big(\Big(r+\frac{1}{2}\pi^{CM}\sigma_2^2\Big)(1-\pi^{CM})t\Big)S_2(t)^{\pi^{CM}}.
	\end{align*}
	Therefore,
	\begin{align*}
		\frac{\partial P(t)}{\partial S_1(t)}=0
	\end{align*}
	and
	\begin{align*}
		\frac{\partial P(t)}{\partial S_2(t)}=&\exp \rBrackets{-r(T - t)}G_T\frac{\partial}{\partial S_2(t)}\Phi(-d_1(t,V^{v_I,\pi_B}(t)))-\frac{\partial V^{v_I,\pi_B}(t)}{\partial S_2(t)}\Phi(-d_2(t,V^{v_I,\pi_B}(t)))\\
		&-V^{v_I,\pi_B}(t)\frac{\partial}{\partial S_2(t)}\Phi(-d_2(t,V^{v_I,\pi_B}(t))).
	\end{align*}
	We have for $i=1,2$
	\begin{align*}
		\frac{\partial V^{v_I,\pi_B}(t)}{\partial S_2(t)}=&V^{v_I,\pi_B}(t)\pi^{CM}S_2(t)^{-1},\\
		\frac{\partial (d_i(t,V^{v_I,\pi_B}(t)))}{\partial S_2(t)}=&\frac{1}{\sigma_2\sqrt{T-t}}S_2(t)^{-1},\\
		\frac{\partial\Phi(-d_i(t,V^{v_I,\pi_B}(t)))}{\partial S_2(t)}=&-\phi(-d_i(t,V^{v_I,\pi_B}(t)))\frac{\partial (d_i(t,V^{v_I,\pi_B}(t)))}{\partial S_2(t)},
	\end{align*}
	where $\phi$ is the density of the standard normal distribution. Hence,
	\begin{align*}
		\frac{\partial P(t)}{\partial S_2(t)}=&\exp \rBrackets{-r(T - t)}G_T\frac{\partial}{\partial S_2(t)}\Phi(-d_1(t,V^{v_I,\pi_B}(t)))-\frac{\partial V^{v_I,\pi_B}(t)}{\partial S_2(t)}\Phi(-d_2(t,V^{v_I,\pi_B}(t)))\\
		&-V^{v_I,\pi_B}(t)\frac{\partial}{\partial S_2(t)}\Phi(-d_2(t,V^{v_I,\pi_B}(t)))\\
		=&-\exp \rBrackets{-r(T - t)}G_T\phi(-d_1(t,V^{v_I,\pi_B}(t)))\frac{1}{\sigma_2\sqrt{T-t}}S_2(t)^{-1}\\
		&-\frac{\pi^{CM}V^{v_I,\pi_B}(t)\Phi(-d_2(t,V^{v_I,\pi_B}(t)))}{S_2(t)}\\
		&+V^{v_I,\pi_B}(t)\phi(-d_2(t,V^{v_I,\pi_B}(t)))\frac{1}{\sigma_2\sqrt{T-t}}S_2(t)^{-1}\\
		=&-\frac{\pi^{CM}V^{v_I,\pi_B}(t)\Phi(-d_2(t,V^{v_I,\pi_B}(t)))}{S_2(t)}.
	\end{align*}
	We define $d_2(t,V^{v_I,\pi_B}(t))=:d_+$. Since $\Phi(-x)=1-\Phi(x)$, we get
	\begin{align*}
		\frac{\partial P(t)}{\partial S_2(t)}=-\frac{\pi^{CM}V^{v_I,\pi_B}(t)\Phi(-d_2(t,V^{v_I,\pi_B}(t)))}{S_2(t)}=\frac{\pi^{CM}V^{v_I,\pi_B}(t)(\Phi(d_+)-1)}{S_2(t)}.
	\end{align*}
\end{proof}

\bigskip

\begin{lemma} \label{HLemma1}
        Let $\xi^{\max} = \bar{\xi} < \frac{v_I}{(1 + \theta_R^{\max}) P(0)}$. Then the function $\nu$ from Proposition 3.1 is 
        \begin{align} 
            \nu(\xi)=\mathbb{E}[U_I(I_I(y^\ast_\lambda(\xi)\widetilde{Z}_{\lambda^\ast}(T)))] \nonumber 
        \end{align}
        for $\xi\in[0,\xi^{\max}]$, where $I_I$ is the inverse of $U'_I$ and $y^\ast_\lambda(\xi)$ is the Lagrange multiplier given by
        \begin{align*}
            \mathbb{E}[\widetilde{Z}_\lambda(T)\hat{I}_I(y^\ast_\lambda(\xi)\widetilde{Z}_\lambda(T))]=v_I-\xi(1+\theta_R)P(0).
        \end{align*}
    \end{lemma}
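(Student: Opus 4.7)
The plan is to reduce the claim to an almost-sure inequality between $I_I(y^\ast_\lambda(\xi)\widetilde{Z}_{\lambda^\ast}(T))$ and $\xi P(T)$, and then justify that inequality using the strict positivity of the insurer's investment budget together with the Inada conditions on $U_I$. First, I would invoke Proposition \ref{SolIOP2} to write
\begin{align*}
\nu(\xi)=\mathbb{E}\bigl[U_I\bigl(\max\{I_I(y^\ast_\lambda(\xi)\widetilde{Z}_{\lambda^\ast}(T)),\,\xi P(T)\}\bigr)\bigr].
\end{align*}
Since $U_I$ is strictly increasing, the maximum factors out as $U_I(\max\{a,b\})=\max\{U_I(a),U_I(b)\}$. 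Consequently, the identity of the lemma is equivalent to the almost-sure inequality $I_I(y^\ast_\lambda(\xi)\widetilde{Z}_{\lambda^\ast}(T))\geq \xi P(T)$, because on this event the maximum collapses to its first argument.

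Next I would rewrite this inequality using the inverse $I_I$ of the strictly decreasing function $U_I'$: $I_I(y\widetilde{Z})\geq \xi P(T)$ is equivalent to $y\widetilde{Z}\leq U_I'(\xi P(T))$, which by the construction of $\hat{I}_I$ recorded in Proposition \ref{SolIOP2} is precisely the statement that $\hat{I}_I(y\widetilde{Z})=I_I(y\widetilde{Z})-\xi P(T)>0$. On the event $\{P(T)=0\}$ the inequality holds automatically because $U_I'(0)=+\infty$ by the Inada condition, so only the event $\{P(T)>0\}$ requires attention.

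Then I would exploit the hypothesis $\bar{\xi}<v_I/((1+\theta_R^{\max})P(0))$, which guarantees $v_I-\xi(1+\theta_R)P(0)>0$ for every admissible pair $(\xi,\theta_R)$. Hence the right-hand side of the budget constraint $\mathbb{E}[\widetilde{Z}_\lambda(T)\hat{I}_I(y^\ast_\lambda(\xi)\widetilde{Z}_\lambda(T))]=v_I-\xi(1+\theta_R)P(0)$ that defines $y^\ast_\lambda(\xi)$ is strictly positive, so $y^\ast_\lambda(\xi)\in(0,\infty)$ is uniquely determined and finite. Combined with the Inada conditions on $U_I$ and a standard argument in the spirit of Theorem 4.3 of \citeA{desmettre2016optimal}, this yields strict positivity of the optimal investment wealth $\hat{I}_I(y^\ast_\lambda(\xi)\widetilde{Z}_{\lambda^\ast}(T))$ almost surely, hence the required inequality.

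The main technical difficulty is this last step. The random utility $\hat{U}_I(x)=U_I(x+\xi P(T))$ fails the Inada condition at zero on $\{P(T)>0\}$, because $\hat{U}_I'(0)=U_I'(\xi P(T))<\infty$ there, so strict positivity of $\hat{I}_I(y^\ast_\lambda(\xi)\widetilde{Z}_{\lambda^\ast}(T))$ does not follow from generic Inada-based arguments alone. The role of the assumption $\bar{\xi}<v_I/((1+\theta_R^{\max})P(0))$ is precisely to ensure that the insurer always has strictly positive investment capital available, which together with the Inada condition on $U_I$ itself rules out the degenerate event $\{\hat{I}_I(y^\ast_\lambda(\xi)\widetilde{Z}_{\lambda^\ast}(T))=0\}$ up to a null set.
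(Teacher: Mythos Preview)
Your approach is essentially the same as the paper's. The paper also starts from the representation $\nu(\xi)=\mathbb{E}[U_I(\max\{I_I(y^\ast_\lambda(\xi)\widetilde{Z}_{\lambda^\ast}(T)),\xi P(T)\})]$, reduces the claim to the almost-sure inequality $I_I(y^\ast_\lambda(\xi)\widetilde{Z}_{\lambda^\ast}(T))>\xi P(T)$, and derives this from the strict positivity of the budget $v_I-\xi(1+\theta_R)P(0)$ guaranteed by $\bar{\xi}<v_I/((1+\theta_R^{\max})P(0))$. The paper records the key implication simply as the chain
\[
\mathbb{E}[\widetilde{Z}_{\lambda}(T)\hat{I}_I(y^\ast_\lambda(\xi)\widetilde{Z}_{\lambda}(T))]>0\;\Leftrightarrow\;\hat{I}_I(y^\ast_\lambda(\xi)\widetilde{Z}_{\lambda}(T))>0\ \mathbb{Q}\text{-a.s.}\;\Leftrightarrow\;I_I(y^\ast_\lambda(\xi)\widetilde{Z}_{\lambda}(T))>\xi P(T)\ \mathbb{Q}\text{-a.s.},
\]
i.e., it asserts exactly the step you flag as the main technical difficulty, without further comment. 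Your write-up is in fact more candid than the paper's proof: you correctly point out that $\hat{U}_I$ fails the Inada condition at zero on $\{P(T)>0\}$, so that the passage from a strictly positive expectation of the nonnegative random variable $\widetilde{Z}_\lambda(T)\hat{I}_I(y^\ast_\lambda(\xi)\widetilde{Z}_\lambda(T))$ to almost-sure strict positivity is not automatic.
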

    \begin{proof}
        Recall from Proposition 3.1:
        \begin{align*}
            \nu(\xi):=\mathbb{E}[U_I(\max\{I_I(y^\ast_\lambda(\xi)\widetilde{Z}_{\lambda^\ast}(T)),\xi P(T)\})],
        \end{align*}
        where the Lagrange multiplier $y^\ast_\lambda(\xi)$ is given by
        \begin{align*}
            \mathbb{E}[\widetilde{Z}_\lambda(T)\hat{I}_I(y^\ast_\lambda(\xi)\widetilde{Z}_\lambda(T))]=v_I-\xi(1+\theta_R)P(0).
        \end{align*}
        $\hat{I}_I$ is the inverse function of $\hat{U}_I'$. This function is bijective on $(0,U'_I(\xi P(T))]$ and equals
        \begin{align*}
            \hat{I}_I(y)=I_I(y)-\xi P(T)
        \end{align*}
        for $y\in(0,U_I'(\xi P(T))]$. From this we have the following:
        \begin{align}
            \hat{I}_I(y)>0 \Leftrightarrow y\in(0,U_I'(\xi P(T))). \label{Fact1}
        \end{align}
        For the Lagrange multiplier $y^\ast_\lambda(\xi)$ it follows that
        \begin{align*}
            \mathbb{E}[\widetilde{Z}_{\lambda}(T)\hat{I}_I(y^\ast_\lambda(\xi)\widetilde{Z}_{\lambda}&(T))]=\underbrace{v_I-\xi(1+\theta_R)P(0)}_{>0}\Leftrightarrow\hat{I}_I(y^\ast_\lambda(\xi)\widetilde{Z}_{\lambda}(T))>0\;\mathbb{Q}\text{-a.s.}\\
    		\Leftrightarrow y^\ast_\lambda(\xi)\widetilde{Z}_{\lambda}(T)&<U_I'(\xi P(T))\;\mathbb{Q}\text{-a.s.}\Leftrightarrow I_I(y^\ast_\lambda(\xi)\widetilde{Z}_{\lambda}(T))>\xi P(T)\;\mathbb{Q}\text{-a.s.},
        \end{align*}
        where the second equivalence holds from \eqref{Fact1} and the third from the fact that $I_I$ is strictly decreasing. Hence, we get
        \begin{equation*}
            \nu(\xi)=\mathbb{E}[U_I(\max\{I_I(y^\ast_\lambda(\xi)\widetilde{Z}_{\lambda^\ast}(T)),\xi P(T)\})]=\mathbb{E}[U_I(I_I(y^\ast_\lambda(\xi)\widetilde{Z}_{\lambda^\ast}(T)))].
        \end{equation*}
    \end{proof}

\section{Additional figures}\label{sm_sec:SA_figures}
{\color{black} This section contains two subsections. In Subsection \ref{subsec:SA_plots}, we provide plots related to the sensitivity analysis of the Stackelberg equilibrium at the start of the product. In Subsection \ref{subsec:dynamic_investment_strategies}, we plot the optimal relative-portfolio processes of the players.}

\subsection{Sensitivity analysis of Stackelberg equilibrium at $t=0$}\label{subsec:SA_plots}
{\color{black} Figures \ref{fig:SensitivityRRA}, \ref{fig:SensitivityIR}, \ref{fig:SensitivityTH} and \ref{fig:SensitivityG} graphically illustrate the sensitivity of the Stackelberg equilibrium at $t = 0$ w.r.t. the RRA coefficients of parties, the interest rate, the time to product maturity and the level of the capital guarantee. The description of these plots can be found in Section 5.2 of the article.}

\begin{figure}[!ht]
   	\begin{minipage}[h]{.45\linewidth} 
   		\includegraphics[width=\linewidth]{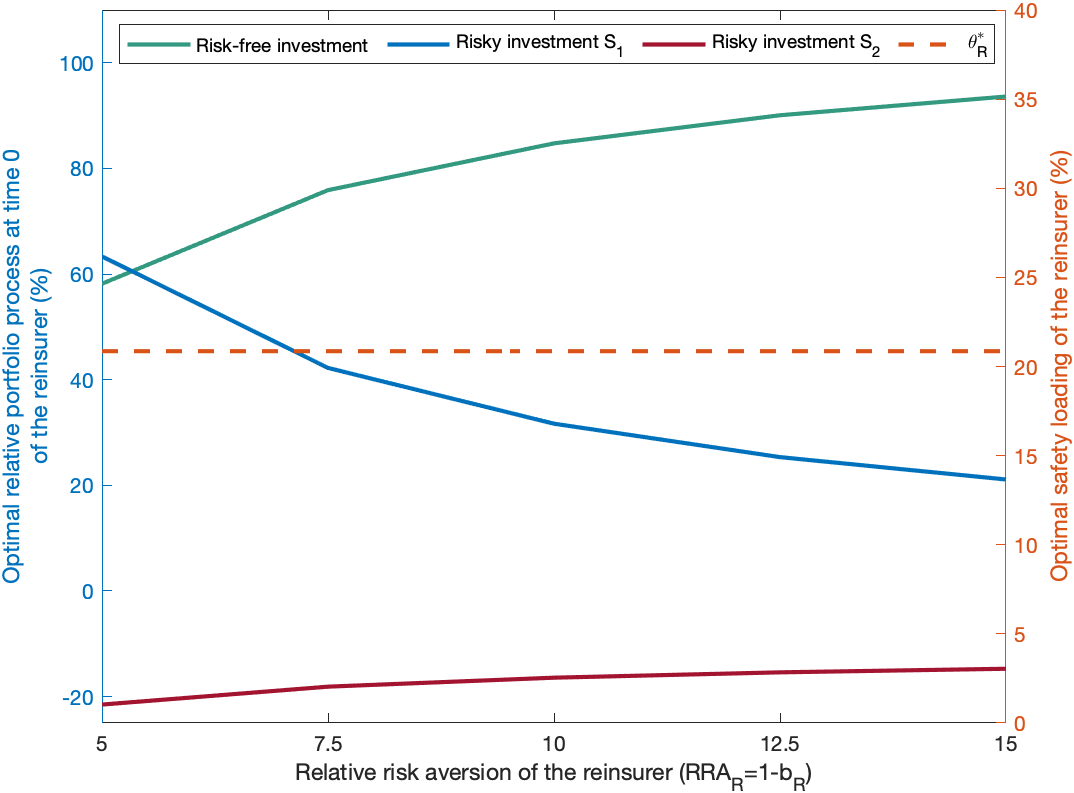}
   	\end{minipage}
   	\hspace{.05\linewidth}
   	\begin{minipage}[h]{.45\linewidth} 
   		\includegraphics[width=\linewidth]{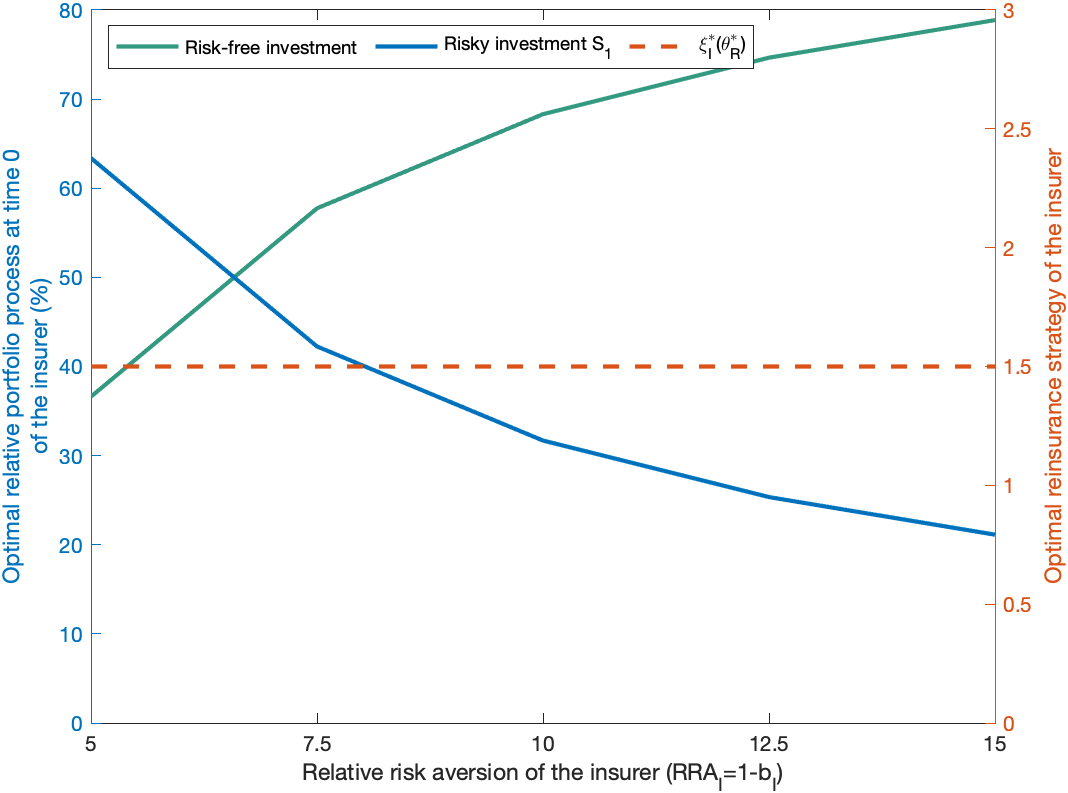}
   	\end{minipage}
   	\caption{Sensitivity of the Stackelberg equilibrium w.r.t. $RRA_R$ and $RRA_I$}
   	\label{fig:SensitivityRRA}
\end{figure}

\begin{figure}[!ht]
   	\begin{minipage}[h]{.45\linewidth}
   		\includegraphics[width=\linewidth]{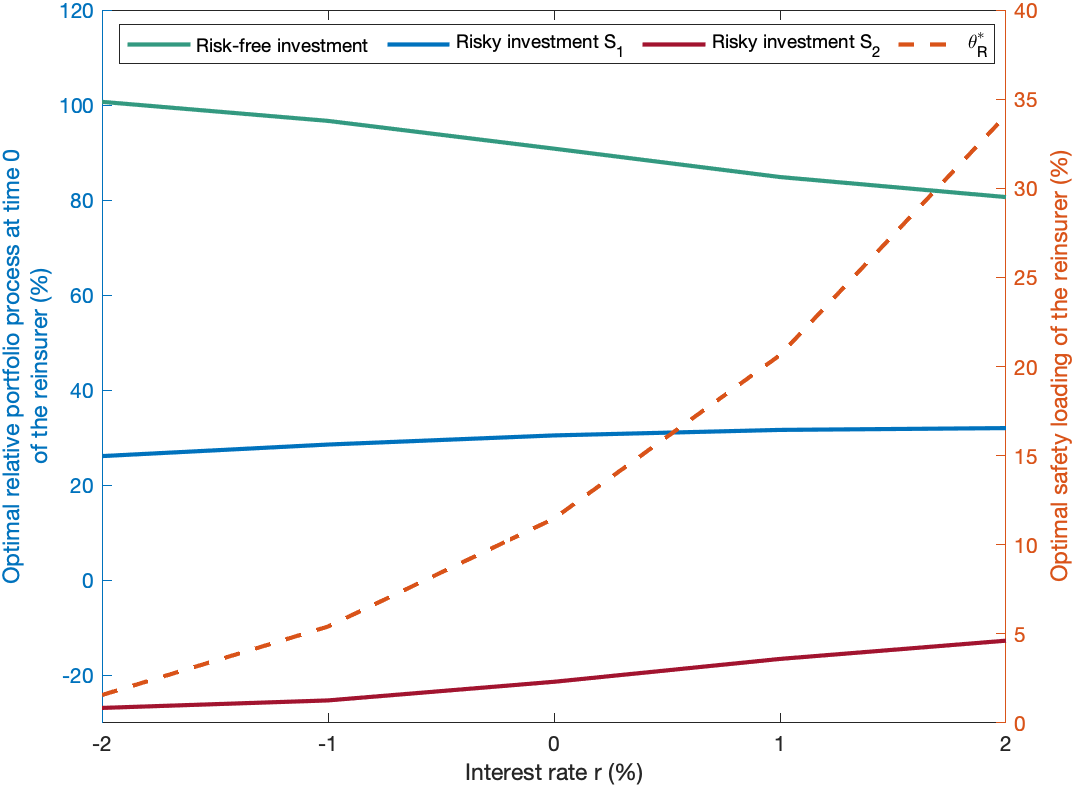}
   	\end{minipage}
   	\hspace{.05\linewidth}
   	\begin{minipage}[h]{.45\linewidth} 
   		\includegraphics[width=\linewidth]{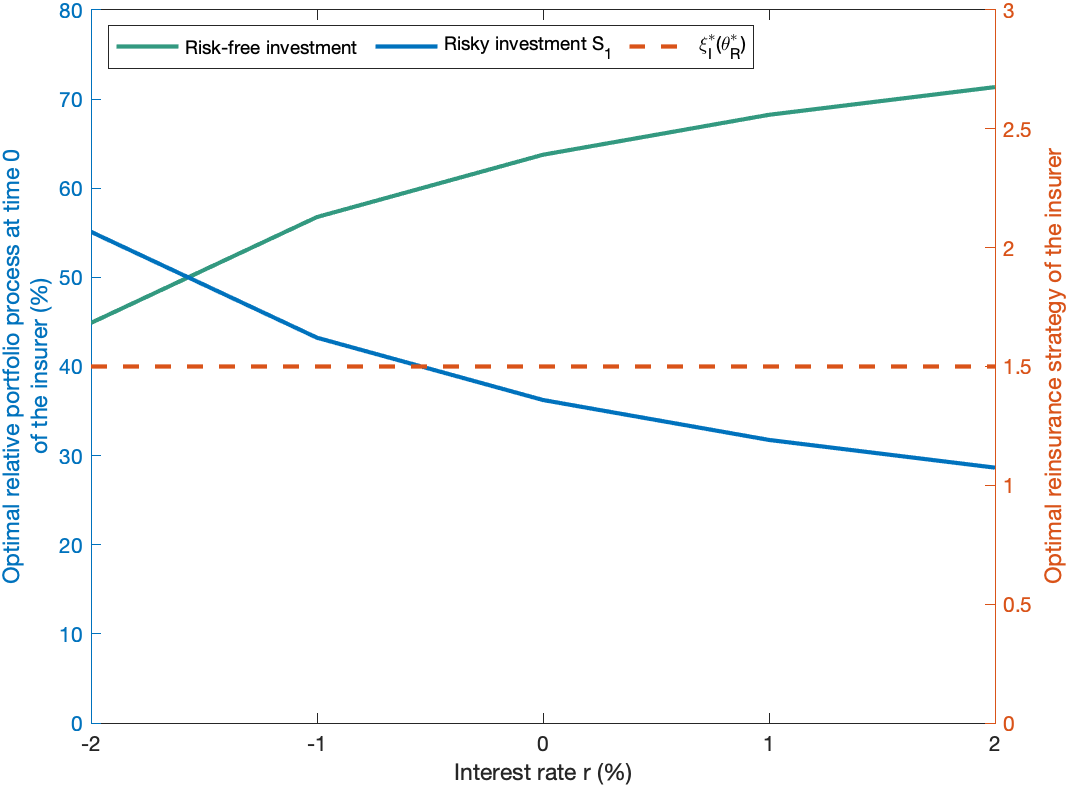}
   	\end{minipage}
   	\caption{Sensitivity of the Stackelberg equilibrium w.r.t. $r$}
   	\label{fig:SensitivityIR}
\end{figure}

\begin{figure}[!ht]
    	\begin{minipage}[h]{.45\linewidth} 
    		\includegraphics[width=\linewidth]{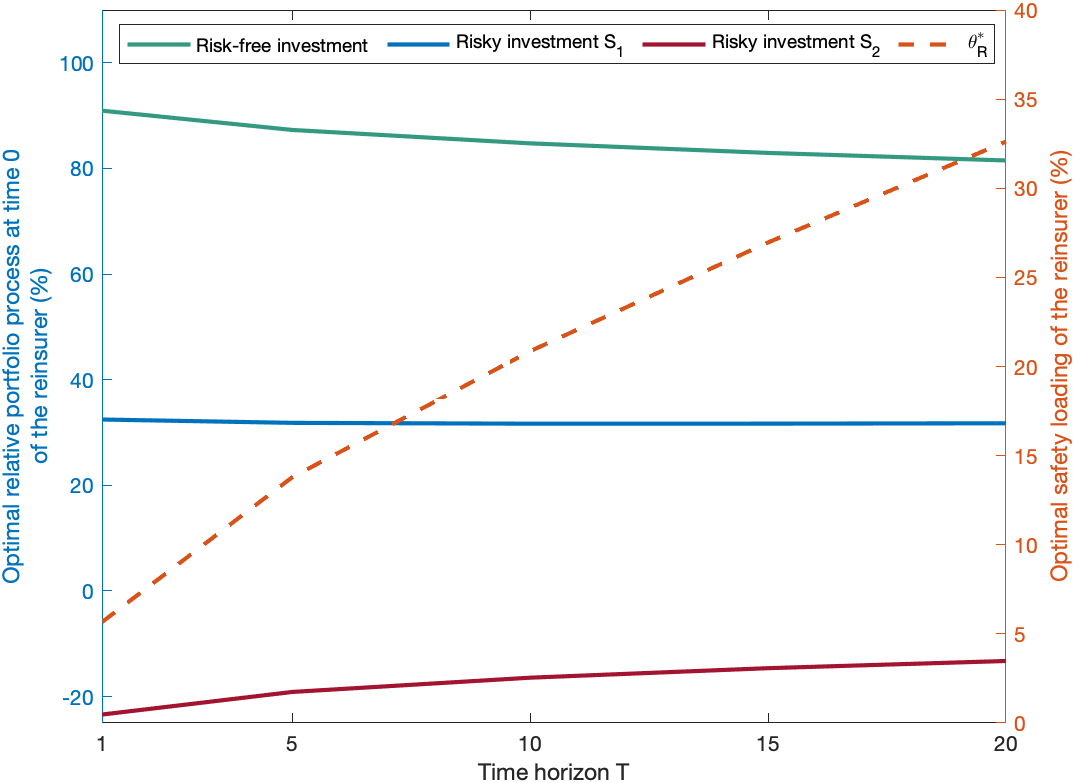}
    	\end{minipage}
    	\hspace{.05\linewidth}
    	\begin{minipage}[h]{.45\linewidth} 
    		\includegraphics[width=\linewidth]{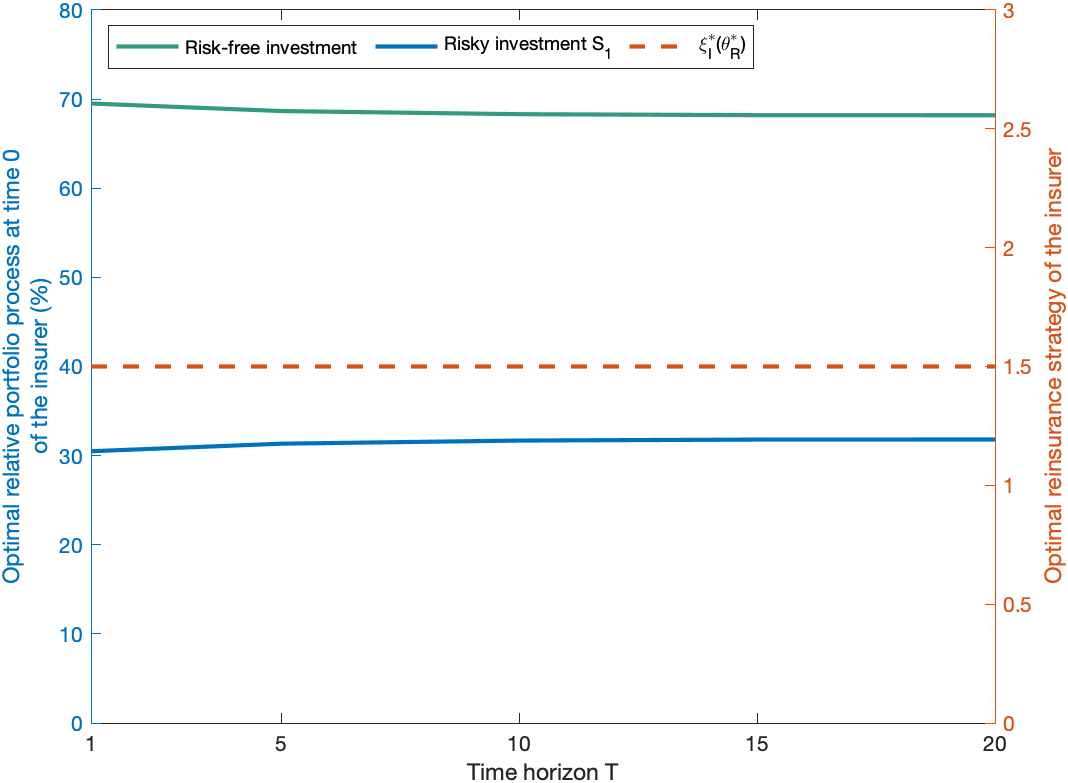}
    	\end{minipage}
    	\caption{Sensitivity of the Stackelberg equilibrium w.r.t. $T$}
    	\label{fig:SensitivityTH}
    \end{figure}
    
    \begin{figure}[!ht]
    	\begin{minipage}[h]{.45\linewidth} 
    		\includegraphics[width=\linewidth]{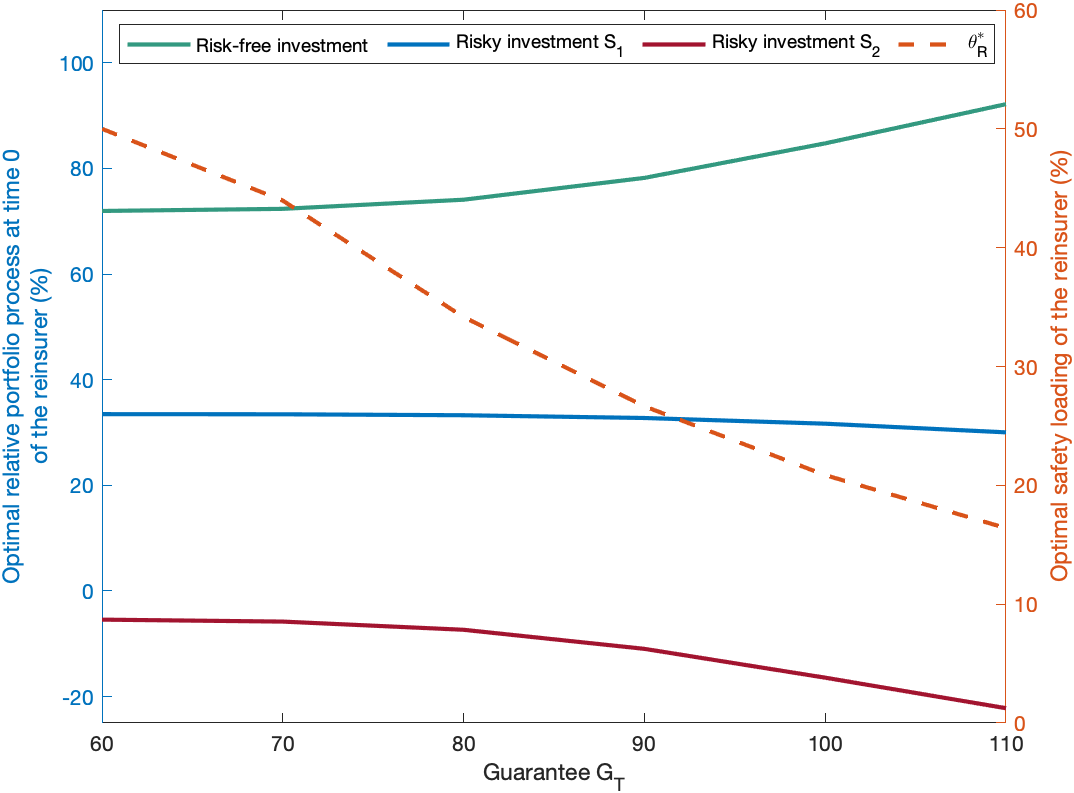}
    	\end{minipage}
    	\hspace{.05\linewidth}
    	\begin{minipage}[h]{.45\linewidth} 
    		\includegraphics[width=\linewidth]{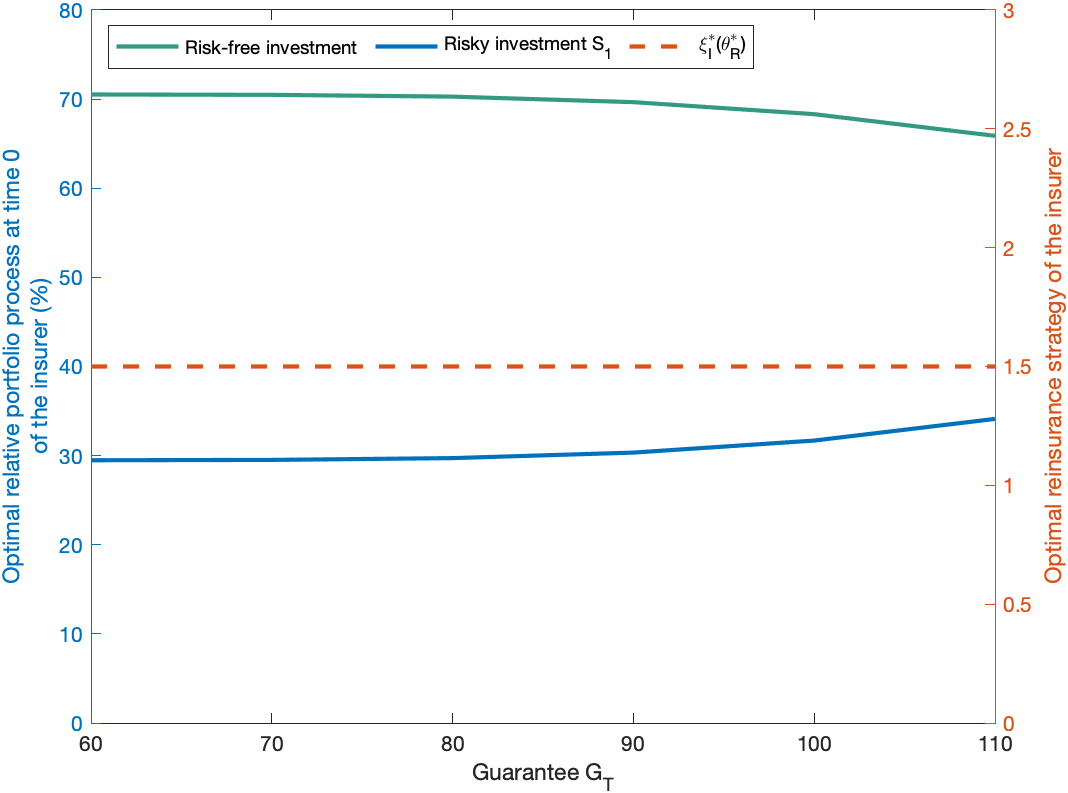}
    	\end{minipage}
    	\caption{Sensitivity of the Stackelberg equilibrium w.r.t. $G_T$}
    	\label{fig:SensitivityG}
    \end{figure}
\color{black}

    \subsection{Investment strategies over whole time horizon}\label{subsec:dynamic_investment_strategies}$\,$\\
    In this subsection, we illustrate how the investment strategies of the reinsurer and the insurer can develop over the entire time horizon $[0,T]$. In Figures \ref{fig:Invest_R} and \ref{fig:Invest_I}, we show $100$ exemplary paths of the relative-portfolio process (semi-transparent lines) as well as the average of the investment strategies (thick nontransparent lines) over the whole time horizon. The average values are calculated using $10000$ simulated paths.
    
    In Figure \ref{fig:Invest_R}, we see that the reinsurer invests on average around $30\%$ of its wealth in the first risky asset and sells between $5\%$ and $15\%$ of the second risky asset. The closer the maturity of the equity-linked product, the more the reinsurer invests in $S_1$ and the less it sells $S_2$. The share of capital invested in the risk-free asset $S_0$ decreases on average.
    
    In contrast, the insurer's fraction of wealth invested in the first risky asset decreases on average over the time horizon and therefore the fraction invested in the risk-free asset increases, as Figure \ref{fig:Invest_I} indicates.
    
    \begin{figure}[h!]
    	\begin{minipage}[h]{.45\linewidth} 
    		\includegraphics[width=\linewidth]{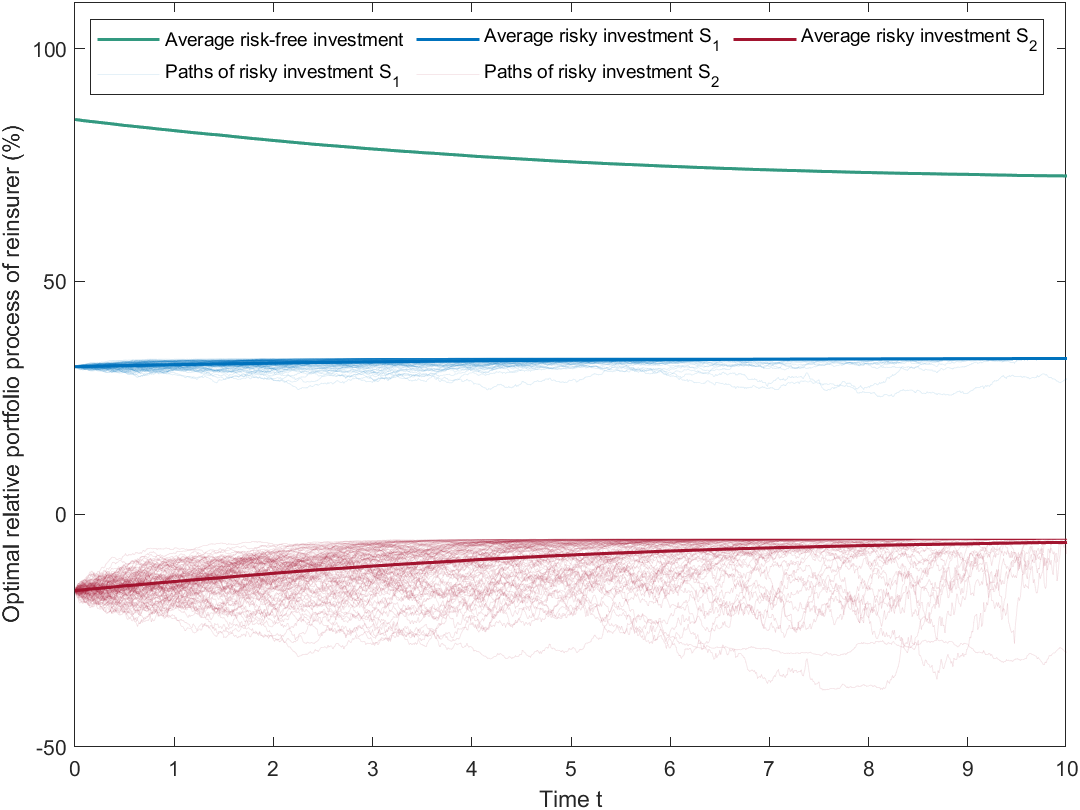}
    		\caption{Reinsurer's investment strategy over the time horizon $[0,T]$}
    	    \label{fig:Invest_R}
    	\end{minipage}
    	\hspace{.05\linewidth}
    	\begin{minipage}[h]{.45\linewidth} 
    		\includegraphics[width=\linewidth]{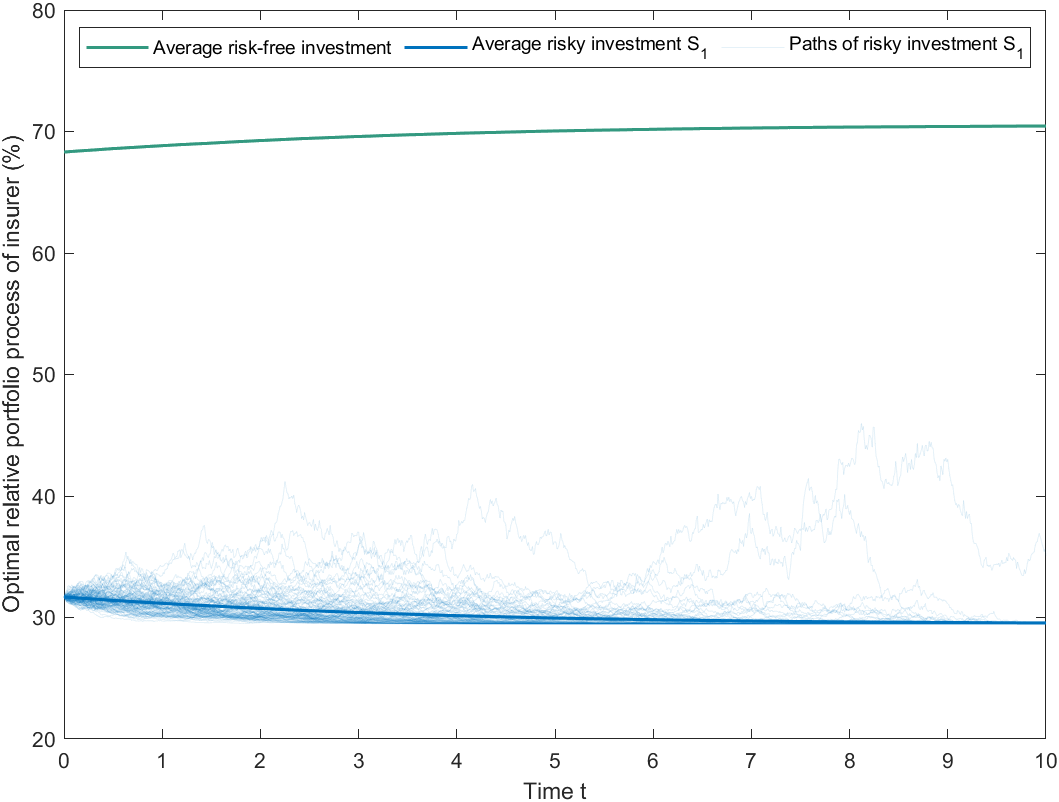}
    		\caption{Insurer's investment strategy over the time horizon $[0,T]$}
    	    \label{fig:Invest_I}
    	\end{minipage}
    \end{figure}

\color{black}
\section{Reinsuirer's incentive to charge a discounted safety loading}\label{sm:reinsurer_incentive}$\,$

    In this section, we answer the question whether the expected utility of the reinsurer is higher when it charges a discounted safety loading in comparison to the situation when it does not sell any reinsurance.  In other words, we explore whether the participation constraint is satisfied for the reinsurer when it charges a lower (discounted) safety loading than the equilibrium one.
    
    When $\theta_R\in(0,\theta_R^\ast)$, the reinsurer benefits from offering reinsurance in comparison to the case of not selling reinsurance at all. The benefit increases as $\theta_R > 0$ increases and $\theta_R < \theta_R^\ast$. To see this, recall that the reinsurer's optimal total terminal wealth without reinsurance is given by
	\begin{align*}
		V_R^{v_{R,0}(0,0),\pi_R^\ast}(T)=I_R(y_R^\ast\widetilde{Z}(T)),
	\end{align*}
	where $y^\ast_R$ solves $\mathbb{E}[\widetilde{Z}(T)I_R(y_R^\ast\widetilde{Z}(T))]=v_R$. In contrast, the reinsurer's optimal total terminal wealth with reinsurance that is priced with a safety loading that is slightly lower than the equilibrium one is given by
	\begin{align*}
		V_R^{v_{R,0}(\xi_I^\ast(\theta_R(\alpha)),\theta_R(\alpha)),\pi_R^\ast}(T)-\xi_I^\ast(\theta_R(\alpha)) P(T)=I_R(y_R^\ast(\theta_R(\alpha))\widetilde{Z}(T)),
	\end{align*}
	where $y^\ast_R(\theta_R(\alpha))$ solves $\mathbb{E}[\widetilde{Z}(T)I_R(y_R^\ast(\theta_R(\alpha))\widetilde{Z}(T))]=v_R+\xi_I^\ast(\theta_R(\alpha))\theta_R(\alpha) P(0)$. As we can see, the reinsurer's optimal total terminal wealth without reinsurance is the same as with reinsurance with full discount (i.e., $\alpha=0$ and, therefore, $\theta_R=0$). If $\xi_I^\ast(\theta_R(\alpha))\theta_R(\alpha)$ strictly increases when $\alpha$ increases, then $I_R(y_R^\ast(\theta_R(\alpha)\widetilde{Z}(T))$ increases strictly up to $I_R(y_R^\ast(\theta_R^\ast)\widetilde{Z}(T))$. Since the reinsurer's expected utility is strictly concave in the terminal wealth and $I_R(y_R^\ast(\theta_R^\ast)\widetilde{Z}(T))$ maximizes it, it is strictly increasing on the interval $[0,I_R(y_R^\ast(\theta_R^\ast)\widetilde{Z}(T))]$. Hence, it holds for $\alpha>0$ that
	\begin{align*}
		\mathbb{E}\left[U_R\left(V_R^{v_{R,0}(0,0),\pi_R^\ast}(T)\right)\right]&=\mathbb{E}\left[U_R\left(V_R^{v_{R,0}(\xi_I^\ast(\theta_R(0)),\theta_R(0)),\pi_R^\ast}(T)-\xi_I^\ast(\theta_R(0)) P(T)\right)\right]\\
		&=\mathbb{E}\left[U_R\left(I_R(y_R^\ast(\theta_R(0)\widetilde{Z}(T))\right)\right]<\mathbb{E}\left[U_R\left(I_R(y_R^\ast(\theta_R(\alpha)\widetilde{Z}(T))\right)\right]\\
		&=\mathbb{E}\left[U_R\left(V_R^{v_{R,0}(\xi_I^\ast(\theta_R(\alpha)),\theta_R(\alpha)),\pi_R^\ast}(T)-\xi_I^\ast(\theta_R(\alpha)) P(T)\right)\right].
	\end{align*}

	In our case, we have $\xi_I^\ast(\theta_R(\alpha))\theta_R(\alpha)=\bar{\xi}\alpha\theta_R^\ast$, which is increasing in $\alpha \in [0,1]$.
	
	Figure \ref{subfig:WEUC_2} illustrates in monetary terms the benefit of a discounted safety loading in comparison to the case of not selling reinsurance at all. 
	
	\begin{figure}[h]
	    \centering
	    \includegraphics[width=0.6\linewidth]{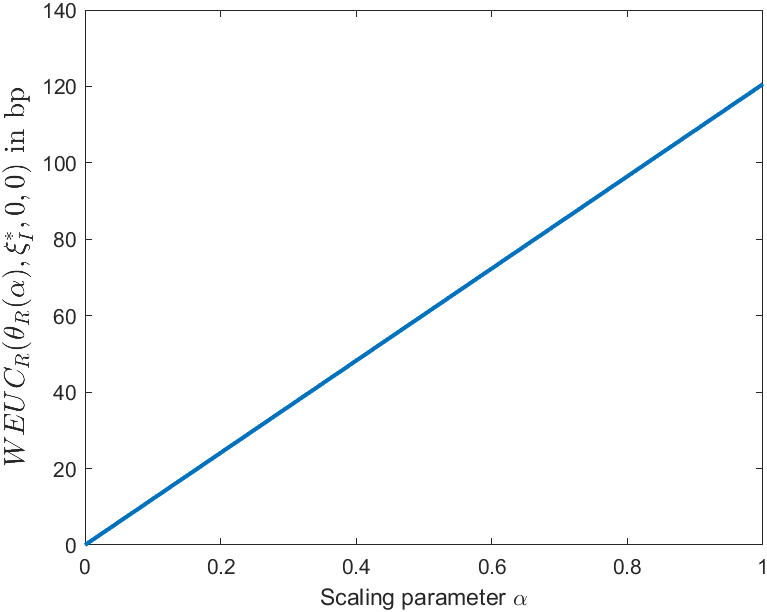}
		\caption{Impact of $\alpha$ on the reinsurer's WEUC with the reference combination $(\pi_R^\ast(\cdot|\theta_R(\alpha)),\theta_R(\alpha))$ and the alternative combination $(\pi_R^\ast(\cdot|0),0)$}
		\label{subfig:WEUC_2}
   \end{figure}

\section{Discussion on other utility functions}\label{sm_sec:other_utilities}$\,$

This section consists of two subsections. In Subsection \ref{subsec:insurers_best_responses}, we show that an insurer with a logarithmic or with a HARA-utility function becomes indifferent in the amount of reinsurance when the reinsurance company charges the equilibrium safety loading. In Subsection \ref{subsec:insurers_relative_portfolio}, we derive the optimal relative-portfolio process in the Stackelberg equilibrium for an insurer whose preferences are described by a logarithmic-utility or a HARA-utility function.

\subsection{Insurer's best response to the equilibrium safety loading}\label{subsec:insurers_best_responses}$\,$

In this subsection, we show that for utility functions, other than the power-utility function, the insurer is also indifferent in the amount of reinsurance to purchase in the equilibrium. 


As in the paper, we assume $\xi^{\max} = \bar{\xi}<\frac{v_I}{(1 + \theta_R^{\max})P(0)}$ and denote by $\lambda^\ast$ the dual process characterizing the optimal auxiliary market. Recall that by Lemma \ref{HLemma1}: 
\begin{align*}
    \nu(\xi)=\mathbb{E}[U_I(I_I(y^\ast_\lambda(\xi)\widetilde{Z}_{\lambda^\ast}(T)))].
\end{align*}

     \textbf{Logarithmic utility function.} We have $U_I(x):=\ln(x)$ and $I_I(x)=\frac{1}{x}$. Hence, it holds
    \begin{align*}
        \nu(\xi)=-\mathbb{E}[\ln(\widetilde{Z}_{\lambda^\ast}(T))]-\ln(y^\ast_{\lambda^\ast}(\xi))
    \end{align*}
    with the Lagrange multiplier (determined by the budget constraint)
	\begin{align*}
	    y^\ast_{\lambda^\ast}(\xi)=(v_I-\xi(1+\theta_R)P(0)+\xi\mathbb{E}[P(T)\widetilde{Z}_{\lambda^\ast}(T)])^{-1}.
	\end{align*}
	
	Hence, $\nu$ is of the form
	\begin{align*}
		\nu(\xi)=-C+\ln(C_1-\xi C_2)
	\end{align*}
	where $C,C_1,C_2\in\mathbb{R}$ are constants with $C_2=(1+\theta_R)P(0)-\mathbb{E}[\widetilde{Z}_{\lambda^\ast}(T)P(T)]$. For $C_2>0$,  $\nu$ strictly decreases if $\xi$ increases, i.e., $\xi_I^\ast=0$. For $C_2=0$, $\nu$ is independent of $\xi$, i.e., $\xi_I^\ast=\tilde{\xi}$ for any $\tilde{\xi}\in[0,\bar{\xi}]$. For $C_2<0$,  $\nu$ is strictly increasing in $\xi$, i.e., $\xi_I^\ast=\bar{\xi}$.
	
	For the equilibrium safety loading $\theta_R^\ast$, $C_2 = 0$, thus, the insurer's best response in terms of the reinsurance amount is $\xi_I^\ast=\tilde{\xi}$ for any $\tilde{\xi}\in[0,\bar{\xi}]$, i.e., the insurance company is indifferent in the amount of reinsurance in equilibrium.
	
	 \textbf{HARA-utility function.} We have $U_I(x):=\frac{(x+a)^{b_I}}{b_I}$ with $a\in\mathbb{R}$ s.t. $x+a>0$ and $b_I\in(-\infty,1)\backslash\{0\}$ and $I_I(x)=x^{\frac{1}{b_I-1}}-a$. The function $\nu$ is given by
	\begin{align*}
		\nu(\xi)=\frac{1}{b_I}\mathbb{E}\left[(y^\ast_{\lambda^\ast}(\xi)\widetilde{Z}_{\lambda^\ast}(T))^{\frac{b_I}{b_I-1}}\right]
	\end{align*}
	with the Lagrange multiplier (determined by the budget constraint)
	\begin{align*}
		y^\ast_{\lambda^\ast}(\xi)=\left(\frac{v_I-\xi(1+\theta_R)P(0)+a\mathbb{E}[\widetilde{Z}_{\lambda^\ast}(T)]+\xi\mathbb{E}[P(T)\widetilde{Z}_{\lambda^\ast}(T)]}{\mathbb{E}[\widetilde{Z}_{\lambda^\ast}(T)^{\frac{b_I}{b_I-1}}]}\right)^{b_I-1}.
	\end{align*}
	
	Hence, $\nu$ is of the form
	\begin{align*}
		\nu(\xi)=\frac{1}{b_I}\mathbb{E}\left[\widetilde{Z}_{\lambda^\ast}(T)^{\frac{b_I}{b_I-1}}\right](C_1-\xi C_2)^{b_I},
	\end{align*}
	where $C_1,C_2\in\mathbb{R}$ are constants with $C_2=(1+\theta_R)P(0)-\mathbb{E}[\widetilde{Z}_{\lambda^\ast}(T)P(T)]$, which is the same as in the case of a logarithmic-utility function. Thus, we conclude that for $\theta_R^\ast$, the insurer's best response is $\xi_I^\ast=\tilde{\xi}$ for any $\tilde{\xi}\in[0,\bar{\xi}]$.

\subsection{Insurer's relative portfolio process in the equilibrium}\label{subsec:insurers_relative_portfolio}$\,$

In this subsection, we derive the insurer's relative portfolio process in the Stackelberg equilibrium. We show that for insurers with logarithmic-utility and with HARA-utility functions the optimal auxiliary market is the same as the one for the insurer with a corresponding power-utility function. We do not derive here the reinsurer's optimal relative portfolio process, because its derivation is much simpler than the derivation for the insurer and boils down to applying Proposition 3.3 to get the optimal trading strategy and to converting it to the relative portfolio process via Relation (3.4) from the paper.

    \textbf{Logarithmic utility function.} We have that $U_I(x):=\ln(x)$ and $\hat{I}_I(x)=\frac{1}{x}-\xi P(T)$. We prove that $\hat{I}_I$ and $\frac{d\hat{I}_I(x)}{d x}$ are polynomially bounded at $0$ and $\infty$: For $x>0$ it holds
    \begin{align*}
        |\hat{I}_I(x)|&=\left|\frac{1}{x}-\xi P(T)\right|\overset{(i)}{\leq}\frac{1}{x}+\xi P(T)\overset{(ii)}{\leq} (1+\xi P(T))\left(\frac{1}{x}+x\right);\\
        \left|\frac{d\hat{I}_I(x)}{dx}\right|&=\left|-\frac{1}{x^2}\right|\overset{x>0}{=}\left(\frac{1}{x}\right)^2\overset{x>0}{\leq}\left(x+\frac{1}{x}\right)^2,
    \end{align*}
    where in $(i)$ we use the Cauchy-Schwartz inequality and in $(ii)$ $\frac{1}{x}+x\geq 1$ for all $x>0$. By Lemma \ref{HLemma1}, we have 
	\begin{align*}
		I_I(y^\ast(\xi)\widetilde{Z}_\lambda(T))>\xi P(T).
	\end{align*}
	
	From this result and by Proposition 3.1, the optimal terminal wealth (before reinsurance payment, i.e., {not} the {total} terminal wealth) is given by
	\begin{align}
		V_\lambda^\ast(T)=&I_I(y^\ast(\xi)\widetilde{Z}_\lambda(T))-\xi P(T) = (y^\ast(\xi)\widetilde{Z}_\lambda(T))^{-1}-\xi P(T).\label{eq1}
	\end{align}
	
	Thus, using $\frac{d\hat{I}(x)}{dx} = -x^{-2}$ and Proposition 3.1, we get that the insurer's optimal relative portfolio process in $\mathcal{M}_\lambda$ satisfies: 
	\begin{align*}
		\pi_\lambda^\ast(t)V_\lambda^\ast(t)=&-(\sigma^\top)^{-1}\gamma_\lambda\widetilde{Z}_\lambda(t)^{-1}\mathbb{E}[\widetilde{Z}_\lambda(T)y^\ast(\xi)\widetilde{Z}_\lambda(T)(-y^\ast(\xi)\widetilde{Z}_\lambda(T))^{-2}|\mathcal{F}_t]\\
		=&(\sigma^\top)^{-1}\gamma_\lambda\widetilde{Z}_\lambda(t)^{-1}\mathbb{E}[\widetilde{Z}_\lambda(T)(y^\ast(\xi)\widetilde{Z}_\lambda(T))^{-1}|\mathcal{F}_t]\\
		\overset{\eqref{eq1}}{=}&(\sigma^\top)^{-1}\gamma_\lambda\widetilde{Z}_\lambda(t)^{-1}\mathbb{E}[\widetilde{Z}_\lambda(T)(V_\lambda^\ast(T)+\xi P(T))|\mathcal{F}_t]\\
		=&(\sigma^\top)^{-1}\gamma_\lambda(V_\lambda^\ast(t)+\xi \widetilde{Z}_\lambda(t)^{-1}\mathbb{E}[\widetilde{Z}_\lambda(T)P(T)|\mathcal{F}_t]).
	\end{align*}
	
	Therefore:
	\begin{align*}
		\pi_\lambda^\ast(t) = (\sigma^\top)^{-1}\gamma_\lambda(1 +\xi \widetilde{Z}_\lambda(t)^{-1} \mathbb{E}[\widetilde{Z}_\lambda(T)P(T)|\mathcal{F}_t] / V_\lambda^\ast(t) ).
	\end{align*}
	
	The optimal dual process $\lambda^\ast$ is the same as in the case of a power-utility function, since $\pi_\lambda^\ast(t)\in K \Leftrightarrow (\sigma^\top)^{-1}\gamma_\lambda\in K \Leftrightarrow \pi^M_{\lambda} \in K $ with $\pi^M_{\lambda}$ defined in Corollary 4.1 in the paper.
	
	\textbf{HARA-utility function.} We have that $U_I(x):=\frac{(x+a)^{b_I}}{b_I}$ with $a\in\mathbb{R}$ s.t. $x+a>0$ and $b_I\in(-\infty,1)\backslash\{0\}$ and $\hat{I}_I(x)=x^{\frac{1}{b_I-1}}-a-\xi P(T)$. Now we prove that $\hat{I}_I$ and $\frac{d\hat{I}_I(x)}{dx}$ are polynomially bounded at $0$ and $\infty$. 
 
    For $x>0$ it holds
	\begin{align*}
	    |\hat{I}_I(x)|=&\left|x^{\frac{1}{b_I-1}}-a-\xi P(T)\right|\overset{(i)}{\underset{x>0}{\leq}}x^{\frac{1}{b_I-1}}+|a+\xi P(T)|=\left(\frac{1}{x}\right)^{\frac{1}{1-b_I}}+|a+\xi P(T)|\\
	    \overset{(ii)}{\underset{(iii)}{\leq}}&\left(x+\frac{1}{x}\right)^{\frac{1}{1-b_I}}+|a+\xi P(T)|\overset{(iv)}{\leq}(1+|a+\xi P(T)|)\left(x+\frac{1}{x}\right)^{\frac{1}{1-b_I}};\\
	    \left|\frac{d\hat{I}_I(x)}{dx}\right|=&\left|\frac{1}{b_I-1}x^{\frac{2-b_I}{b_I-1}}\right|\overset{b_I-1<0}{\underset{x>0}{=}}\frac{1}{1-b_I}x^{\frac{2-b_I}{b_I-1}}=\frac{1}{1-b_I}\left(\frac{1}{x}\right)^{\frac{2-b_I}{1-b_I}}\overset{(ii)}{\underset{(iii)}{\leq}}\frac{1}{1-b_I}\left(x+\frac{1}{x}\right)^{\frac{2-b_I}{1-b_I}},
	\end{align*}
	where we use in
	\begin{itemize}
	    \item[$(i)$] Cauchy-Schwartz inequality;
	    \item[$(ii)$] $x+\frac{1}{x}>\frac{1}{x}$ since $x>0$;
	    \item[$(iii)$] $x\mapsto x^k$ increases for $k>0$ with $k=\frac{1}{1-b_I}>0$ or $k=\frac{2-b_I}{1-b_I}>0$ for all $b_I\in(-\infty,1)\backslash\{0\}$;
	    \item[$(iv)$] $1\leq \left(x+\frac{1}{x}\right)^{\frac{1}{1-b_I}}$ for all $x>0$.
	\end{itemize}
	
	By Lemma \ref{HLemma1}, we have 
	\begin{align*}
		I_I(y^\ast(\xi)\widetilde{Z}_\lambda(T))>\xi P(T).
	\end{align*}
	
	From this result and by Proposition 3.1, the optimal terminal wealth (before reinsurance payment, i.e., {not} the {total} terminal wealth) is given by
	\begin{align}
		V_\lambda^\ast(T)=&I_I(y^\ast(\xi)\widetilde{Z}_\lambda(T))-\xi P(T)
		=(y^\ast(\xi)\widetilde{Z}_\lambda(T))^{\frac{1}{b_I-1}}-a-\xi P(T).\label{eq2}
	\end{align}
	Thus, using $\frac{d\hat{I}(x)}{dx}=\frac{1}{b_I-1}x^{\frac{2-b_I}{b_I-1}}$ and Proposition 3.1, we get that the insurer's optimal relative portfolio process in $\mathcal{M}_\lambda$ satisfies:
	\begin{align*}
		\pi_\lambda^\ast(t)V_\lambda^\ast(t)=&-(\sigma^\top)^{-1}\gamma_\lambda\widetilde{Z}_\lambda(t)^{-1}\mathbb{E}[\widetilde{Z}_\lambda(T)y^\ast(\xi)\widetilde{Z}_\lambda(T)\frac{1}{b_I-1}(y^\ast(\xi)\widetilde{Z}_\lambda(T))^{\frac{2-b_I}{b_I-1}}|\mathcal{F}_t]\\
		=&\frac{1}{1-b_I}(\sigma^\top)^{-1}\gamma_\lambda\widetilde{Z}_\lambda(t)^{-1}\mathbb{E}[\widetilde{Z}_\lambda(T)(y^\ast(\xi)\widetilde{Z}_\lambda(T))^{\frac{1}{b_I-1}}|\mathcal{F}_t]\\
		\overset{\eqref{eq2}}{=}&\frac{1}{1-b_I}(\sigma^\top)^{-1}\gamma_\lambda\widetilde{Z}_\lambda(t)^{-1}\mathbb{E}[\widetilde{Z}_\lambda(T)(V_\lambda^\ast(T)+a+\xi P(T))|\mathcal{F}_t]\\
		=&\frac{1}{1-b_I}(\sigma^\top)^{-1}\gamma_\lambda(V_\lambda^\ast(t)+\xi \widetilde{Z}_\lambda(t)^{-1}\mathbb{E}[\widetilde{Z}_\lambda(T)P(T)|\mathcal{F}_t]+a).
	\end{align*}
	
	Thus:
	\begin{align*}
		\pi_\lambda^\ast(t)= \pi^M_\lambda(1 + \xi \widetilde{Z}_\lambda(t)^{-1}\mathbb{E}[\widetilde{Z}_\lambda(T)P(T)|\mathcal{F}_t]/ V_\lambda^\ast(t) + a / V_\lambda^\ast(t)).
	\end{align*}
	
	The optimal dual process $\lambda^\ast$ is the same as in the case of a power-utility function, since $\pi_\lambda^\ast(t)\in K  \Leftrightarrow \pi^M_{\lambda} \in K $ with $\pi^M_{\lambda}$ defined in Corollary 4.1 in the paper.

\section{Discussion on the surrender and mortality risk} \label{sm_sec:actuarial_risks} $\,$

Mortality and/or surrender risks could be added in various ways to the insurer's (follower's) optimization problem. If these risks are independent of the financial risks and the insurer pays money to the buyer of the equity-linked product at time $T$ only if the buyer does not surrender and does not die before $T$, then the Stackelberg equilibrium does not change. However, once the dependence between financial risks and surrender/mortality risks is introduced and/or the payoff of the insurer is more intertwined with surrender/mortality risks, the insurer's optimization problem becomes much more complicated and deserves a separate treatment. 

Consider a random variable $\tau$ that indicates the time of death or surrender of the policyholder. We could consider the following problem of the insurer:
\begin{equation}\label{OP:tau_easy}
    \max_{(\pi_I, \xi_I) \in \Lambda_I} \mathbb{E}\left[ U_I(V_I^{v_{I,0}(\xi_I,\theta_R),\pi_I}(T) + \xi_I P(T))\mathbbm{1}_{\{ \tau \geq T \}} \right],
\end{equation}
where the set of admissible strategies of the insurer is:
    \begin{align*}
    	\Lambda_I:=\{(\pi_I,\xi_I)|\;&\pi_I\text{ self-financing, }\pi_I(t)\in K\;\mathbb{Q}\text{-a.s. }\forall t\in[0,T],\;\xi_I\in[0,\xi^{\max}],\\
    	&V^{v_{I,0}(\xi_I,\theta_R),\pi_I}_I(t)\geq0\;\mathbb{Q}\text{-a.s. }\forall t\in[0,T]\text{ and }\\
    	&\mathbb{E}[(U_I(V_I^{v_{I,0}(\xi_I,\theta_R),\pi_I}(T)+\xi_I P(T))\mathbbm{1}_{\{ \tau \geq T \}})^-]<\infty]\}.
    \end{align*}

If $\tau$ is independent of financial risks, then:
\begin{align*}
   \mathbb{E}& \left[ U_I(V_I^{v_{I,0}(\xi_I,\theta_R),\pi_I}(T) + \xi_I P(T))\mathbbm{1}_{\{ \tau \geq T \}} \right] \\
   & \stackrel{(i)}{=} \mathbb{E}\left[\mathbb{E}\left[ U_I((V_I^{v_{I,0}(\xi_I,\theta_R),\pi_I}(T) + \xi_I P(T))\mathbbm{1}_{\{ \tau \geq T \}} 
   | \mathcal{F}_T\right]\right]\\
   & \stackrel{(ii)}{=} \mathbb{E}\left[U_I(V_I^{v_{I,0}(\xi_I,\theta_R),\pi_I}(T) + \xi_I P(T))\mathbb{E}\left[ \mathbbm{1}_{\{ \tau \geq T \}} 
   | \mathcal{F}_T\right] \right]\\
   & \stackrel{(iii)}{=} \mathbb{E}\left[U_I(V_I^{v_{I,0}(\xi_I,\theta_R),\pi_I}(T) + \xi_I P(T)) \right] \mathbb{E}\left[ \mathbbm{1}_{\{ \tau \geq T \}} \right] 
\end{align*}
where we use in (i) we use a tower rule of conditional expectations, in (ii) the $\mathcal{F}_T$-measurability of the financial risks, in (iii) the independence between $\tau$ and $\mathcal{F}_T$. Thus, the insurer's optimization problem we considered in the paper has the same solution as the solution to \eqref{OP:tau_easy}, because they differ only by a positive constant multiplier in the objective function. As a result, the Stackelberg equilibrium remains the same as before.

The situation when the policyholder also receives some money at the time of death or surrender can be modelled by incorporating it into the insurer's objective function as follows:
\begin{equation}\label{OP:tau_medium}
    \max_{(\pi_I, \xi_I) \in \Lambda_I} \mathbb{E}\left[U_I(V_I^{v_{I,0}(\xi_I,\theta_R),\pi_I}(\tau) + \xi_I P(\tau))\mathbbm{1}_{\{ \tau < T \}} + U_I(V_I^{v_{I,0}(\xi_I,\theta_R),\pi_I}(T) + \xi_I P(T))\mathbbm{1}_{\{ \tau \geq T \}} \right].
\end{equation}

This way of modeling implies that the put option is of an American type, not a European one, as in our article. This introduces several additional levels of complexity. The first challenge is pricing this option. The second issue is solving the insurer's problem \eqref{OP:tau_medium} with two terms in the insurer's problem with uncertainty about both the time and the amount of the payoff. The third challenge is solving the optimization problem of the reinsurance company, as it has to dynamically hedge its short position in the put option with an uncertain exercise time. The Stackelberg equilibrium would most probably change under this model, but we do not see a direct way of getting the intuition about the direction of change. 

A different approach to modeling surrender and death benefit can be found in \citeA{Kronborg2015}. There, the researchers take a standpoint of a policyholder who dynamically controls the death benefit and the investment strategy. The controlled process that models the death benefit appears in the drift part of the wealth process. Following their approach in our article would require new tools to solving the bi-level optimization problem that models the overall Stackelberg game.

\end{appendix}

\end{document}